\documentclass[12pt]{article}
\usepackage{amsmath,amssymb,url,mathrsfs, nicefrac, color}
\usepackage[capposition=bottom]{floatrow}
\usepackage{mwe}
\usepackage{enumerate}
\usepackage{subcaption}

\usepackage[skins]{tcolorbox}
\usepackage{lipsum}
\definecolor{myred}{RGB}{213,94,0}
\definecolor{mygreen}{RGB}{0,158,115}
\definecolor{myblue}{rgb}{0,0,0.75}
\definecolor{bcblue}{RGB}{0,40,72}
\usepackage{xcolor}

\usepackage{setspace}
\usepackage[plainpages=false,hypertexnames=false]{hyperref}
\hypersetup{
  colorlinks   = true,    %
	urlcolor     = myblue,    %
	  linkcolor    = myblue,    %
		citecolor    = myblue %
	}
\usepackage{tikz}
\usetikzlibrary{calc,shapes,decorations.pathreplacing, arrows, knots}
\usepackage{pgfplots}
\pgfplotsset{compat=1.16} %
\usepgfplotslibrary{fillbetween}

\usetikzlibrary{patterns}
\usetikzlibrary{arrows.meta} %
\usepackage{amsthm}
\makeatletter
\def\th@plain{%
\thm@notefont{}%
  \itshape %
}
\def\th@definition{%
  \thm@notefont{}%
	\normalfont %
}
\makeatother
\usepackage{mathtools}
\usepackage[english]{babel}
\usepackage{color}
\usepackage{bm}
\usepackage{accents}	
\usepackage{graphicx}
\usepackage[semicolon]{natbib}
\theoremstyle{plain}
\newtheorem{theorem}{Theorem}
\newtheorem{lemma}{Lemma}
\newtheorem{proposition}{Proposition}
\newtheorem{corollary}{Corollary}

\theoremstyle{definition}
\newtheorem{definition}{Definition}
\newtheorem*{definition*}{Definition}
\newtheorem*{assumption*}{Assumption}
\newtheorem{assumption}{Assumption}
\newtheorem{conjecture*}{Conjecture}
\newtheorem{example}{Example}
\newtheorem{remark}{Remark}

\usepackage{caption}

\DeclareMathOperator*{\argmin}{arg\,min}

\usepackage{url}
\usepackage[top=1in, bottom=1.5in, left=1.2in, right=1.2in]{geometry}
\usepackage{sectsty}
\usepackage{xurl}
\usepackage{dsfont}
\usepackage[bottom,multiple]{footmisc}
\usepackage{xspace}
\usepackage{siunitx}
\usetikzlibrary{decorations.pathreplacing, calligraphy}

\usepackage{enumitem}

\newcommand{\contents}{items\xspace}
\newcommand{\content}{item\xspace}
\newcommand{\ads}{ads\xspace}
\newcommand{\ad}{ad\xspace}
\newcommand{\ipair}{item-ad pair\xspace}
\newcommand{\ipairs}{item-ad pairs\xspace}

\newcommand{\type}{\ensuremath{\theta}}
\newcommand{\Type}{\ensuremath{[0,1]}}
\newcommand{\typemeas}{\ensuremath{F}}
\newcommand{\density}{\ensuremath{f}}
\newcommand{\types}{\ensuremath{\Theta}}
\newcommand{\typemax}{\ensuremath{1}}
\newcommand{\typemin}{\ensuremath{0}}
\newcommand{\quality}{\ensuremath{q}}

\newcommand{\rev}{\ensuremath{r}}

\newcommand{\matching}{\ensuremath{\pi}}

\newcommand{\bundle}{\ensuremath{B}}
\newcommand{\matchingset}{\ensuremath{\mathcal{M}}}
\newcommand{\grossq}{\ensuremath{Q}}
\newcommand{\grossr}{\ensuremath{R}}
\newcommand{\grossd}{\ensuremath{D}}
\newcommand{\size}{\ensuremath{N}}
\newcommand{\transfer}{\ensuremath{T}}
\newcommand{\attention}{\ensuremath{a}}
\newcommand{\dint}{\ensuremath{\,\mathrm{d}}}

\newcommand{\virtual}{\ensuremath{v}}
\newcommand{\zerotype}{\ensuremath{\theta^0}}
\newcommand{\vs}{\ensuremath{\hat{V}}} %
\newcommand{\R}{\ensuremath{\mathbb{R}}}

\newcommand{\maxquality}{\ensuremath{\overline{q}}}
\newcommand{\minquality}{\ensuremath{\underline{q}}}

\newcommand{\profit}{\ensuremath{\Pi}}

\newcommand{\disutil}{\ensuremath{d}}
\newcommand{\noad}{\ensuremath{\emptyset}}
\newcommand{\qset}{\ensuremath{\mathcal{I}}}
\newcommand{\adset}{\ensuremath{\mathcal{J}}}
\newcommand{\qvec}{\ensuremath{\mathbf{q}}}

\newcommand{\rhovec}{\ensuremath{\boldsymbol{\rho}}}
\newcommand{\itemnumber}{\ensuremath{I}}
\newcommand{\adnumber}{\ensuremath{J}}
\newcommand{\cost}{\ensuremath{c}}

\newcommand{\qvecset}{\ensuremath{\mathcal{Q}}}

\newcommand{\qvecH}{\ensuremath{\overline{\mathbf{q}}}}
\newcommand{\qvecL}{\ensuremath{\underline{\mathbf{q}}}}
\newcommand{\rhovecH}{\ensuremath{\overline{\boldsymbol{\rho}}}}
\newcommand{\rhovecL}{\ensuremath{\underline{\boldsymbol{\rho}}}}

\usepgfplotslibrary{fillbetween}
\usetikzlibrary{patterns}

\usepackage{color}
\makeatletter
\let\comment\@undefined
\let\endcomment\@undefined
\makeatother
\usepackage[deletedmarkup=sout,authormarkup=none]{changes}
\newcommand{\stkout}[1]{\ifmmode\text{\sout{\ensuremath{#1}}}\else \sout{#1}\fi}
\setdeletedmarkup{\stkout{#1}}

\definechangesauthor[color=red]{DS}

\definechangesauthor[color=blue]{SI}

\definechangesauthor[color=red]{BC}

\usepackage{booktabs}
\usepackage{array}
\usepackage{ragged2e}
\usepackage{xltabular}
\newcolumntype{Y}{>{\RaggedRight\arraybackslash}X}

\usepackage{bibunits}
\defaultbibliographystyle{plainnat}
\defaultbibliography{freemium}

\begin{document}

\begin{bibunit}

\newgeometry{top=1in, bottom=0.8in, left=1.2in, right=1.2in}
	\title{Mechanism Design for Ad-Supported Platforms\thanks{
\protect\setstretch{1}\protect\selectfont\footnotesize
    Ichihashi: Queen’s University, Department of Economics, \texttt{shotaichihashi@gmail.com}; Jeon: University of Toulouse Capitole, Toulouse School of Economics, \texttt{dohshin.jeon@tse-fr.eu}; Kim: University of Alabama, Culverhouse College of Business, \texttt{bkim34@ua.edu}.
    We are grateful to Gaurab Aryal, Jacques Crémer, Sungha Hwang, Bruno Jullien, Michihiro Kandori, Heiko Karle, Jeong-Yoo Kim, Jinwoo Kim, Adrien Raizonville, Markus Reisinger, Patrick Rey, Masayuki Sawada, Andrew Rhodes, Robert Somogyi, Tat-How Teh, Yiqing Xing, Yuichi Yamamoto, and Jidong Zhou, and audiences at the ACE 2024 (Rome), IIOC 2025, Postal Economics Conference on E-commerce, Digital Economics and Delivery Services 2024 at TSE, the Workshop on Search and Platforms (Kyoto), and the KER International Conference (2024), as well as seminar participants at Hitotsubashi University, KAIST, Keio University, Kyung Hee University, Nanyang Technological University, Peking University, Renmin University, Seoul National University, the University of Tokyo, and Yonsei University. Jeon  acknowledges the funding from ANR under grant
ANR-17-EUR-0010 (Investissements d'Avenir program).
This paper supersedes our previous working paper, \citet*{ichihashi2024mechanism}.
An earlier version was presented at the Paris Conference on Digital Economics 2022 and the MaCCI Annual Conference 2022 at Mannheim.
}
}
	\author{Shota Ichihashi \and Doh-Shin Jeon \and Byung-Cheol Kim}
	
	\date{September 11, 2026}
	\maketitle
	\thispagestyle{empty}

	\begin{abstract}

		Many digital platforms earn revenue by selling content access and displaying ads. We study monopoly screening by a platform allocating heterogeneous content and ads to heterogeneous consumers. Advertising generates revenue and affects information rents through nuisance, which alters the platform's incentives to allocate content. The optimal mechanism balances a trade-off between advertising and rent extraction: Expanding content access increases advertising revenue but reduces sales revenue by raising information rents. The mechanism features novel distortions in content access and ad exposure, rationalizes contracts across ad-supported platforms, and explains why greater ad profitability may weaken incentives to invest in content quality.

\noindent \textbf{Keywords}: ad-supported platform, mechanism design, innovation, quality

\noindent\textbf{JEL Codes}: D42, D82, L15, O31

\end{abstract}
\clearpage
\restoregeometry

\setcounter{page}{1}
\section{Introduction }
Digital platforms adopt diverse business models, and evaluating consumer harm from their market power requires accounting for these differences \citep*{scottmorton}.
Existing reports on digital platforms\footnote{Examples include \citet*{cma2020online}, \citet*{cremer2019competition}, and \citet*{stigler}.} focus especially on ad-supported platforms and agree that
their market power can harm consumers through lower quality and reduced innovation.
However, these reports do not distinguish purely ad-funded platforms from ad-supported hybrid ones, which earn revenue from both selling access to content and advertising. 
This paper addresses the question of how hybrid platforms optimally balance their multiple revenue sources and whether we need to be concerned about their incentives to invest in quality.

There are many ad-supported hybrid platforms---such as Netflix, YouTube, Facebook, and The New York Times. They typically offer consumers a menu of plans, which vary in how heavily they rely on each source of revenue.  For example, the New York Times offers free and paid plans, both of which are ad-supported and grant differential access to articles. In contrast, X (formerly Twitter) grants free access to all content and charges users to remove ads. Streaming services---such as Netflix and Prime Video---offer plans that differ in both content access and advertising intensity.

In this paper, we study a mechanism design problem for an ad-supported hybrid platform.
The optimal mechanism we derive rationalizes various monetization strategies based on a unified economic force: the \emph{trade-off between advertising and rent extraction}. 
To monetize attention through advertising, the platform must expand content access.
However, expanding access increases information rents of consumers and limits the platform's ability to extract their surplus through content pricing \citep*{mussa1978, myerson1981optimal}.
This is a classical rent extraction-efficiency trade-off (cf. \citeauthor*{laffont2009theory}, \citeyear{laffont2009theory}), adapted to digital platforms that control content access and advertising.
The trade-off also yields a general intuition for why better advertising technologies or lax regulations surrounding advertising may curtail platforms' incentives to invest in content quality.

Our model is a version of a monopoly screening problem.
The platform hosts items and ads.
Items represent content, such as news articles, videos, or social media posts.
Items are vertically differentiated.
Ads are differentiated along two dimensions: (i) the revenue they generate for the platform and (ii) their disutility level.
Consumers have a one-dimensional private type that represents their taste for item quality and distaste for ads.
Their utility depends on monetary transfers, allocated items, and ads.

The platform designs a menu of contracts (or a mechanism) that specifies, for each consumer type, a set of items to be allocated, an advertising policy, and a monetary transfer to the platform.
An advertising policy determines whether to match each item with an ad, and if so, which one.  
An ad reduces the net quality of the matched item according to its disutility level.  
The platform maximizes total revenue---i.e., the sum of (i) advertising revenue generated by ads allocated to consumers, and (ii) monetary transfers from consumers, which we call sales revenue.

{%
A key difference between our model and standard screening models is the presence of ads.\space
Advertising serves both as a revenue generator and as a rent-extraction device:\space
Including ads in the contract designed for a given type generates ad revenue, lowers the contract’s price, and allows the platform to \textit{raise} the prices of contracts designed for higher types.\space
The relative effectiveness of these two roles varies across ads and is captured by their two-dimensional characteristics.\space
Moreover, because ads must be displayed alongside content, advertising affects the platform's incentive to provide consumers with access to content.\space
As a result, the model yields predictions about content access, advertising policy, and content-quality choices that models of screening with goods alone would not capture.
}

Our first main result characterizes the optimal mechanism. To balance the rent extraction-advertising trade-off, the platform tailors content access and advertising policy to consumer types.
For consumers with negative virtual types---who would be excluded in standard screening models---the platform allocates only items below a type-dependent quality threshold and matches every item with an ad. By doing so, the platform earns ad revenue while reducing information rents, which decrease because of ad nuisance and lower-quality allocations. For consumers with positive virtual types, the platform grants access to all items because the baseline model assumes zero marginal cost of providing items, which are typically digital goods. At the same time, the platform screens them through type-dependent advertising.

Second, the optimal mechanism bundles items with ads based on type-dependent \emph{virtual advertising profits.}
In general, ads affect the platform's total revenue in three ways: They generate advertising revenue, reduce consumers' willingness to pay for the associated items, and reduce information rents of higher types.
The magnitudes of these effects depend on the consumer's type and the characteristics of ads.
The virtual advertising profit captures their net effect as a single value for each ad, and the platform uses it to determine how to bundle items and ads for each consumer type.

{
In practice, ad-supported platforms engage in different forms of discrimination in terms of content access and ad exposure.
Some platforms differentiate mainly through content access.
For example, the New York Times and the Financial Times restrict article access for non-subscribers but display ads to all readers.
Other platforms differentiate mainly through ad exposure: YouTube and X allow broad access to content but charge users for reducing or removing ads. Still other platforms, such as Netflix, Spotify, and Peacock, discriminate along both dimensions.
}

{
In \autoref{sectionApplications}, we rationalize these menu forms as the optimal mechanism.
The exact menu form depends on advertising environments.
When ad nuisance is low relative to its revenue, advertising is not an effective screening margin, and the platform differentiates through content access. 
When ads exhibit both high nuisance and high revenue, the platform screens consumers only through advertising exposure.
If neither of these conditions holds, the platform differentiates jointly through content access and ad exposure. }

The optimal mechanism also shapes the platform's incentives to invest in content quality.
In the baseline model, the platform takes the quality of each item as given.
In \autoref{sectionInnovation}, we allow the platform to choose item quality at a cost.
 We first show that if virtual advertising profits uniformly increase across all ads and consumer types, the platform invests less in content quality.
We then show that such a change occurs if the platform hosts a larger set of ads or faces a higher ad revenue associated with each ad.
The intuition is linked to the rent extraction-advertising trade-off.
If advertising becomes more profitable, the platform allocates more items to types with negative virtual valuation, and the resulting increase in information rents is larger when the allocated items have higher quality.
Thus, greater advertising profitability makes it less attractive for the platform to invest in content quality.

The results on the platform's investment incentives have policy implications.
First, the results allow us to view policy and technological changes that surround advertising in terms of their impact on platforms' content quality.
In \autoref{sectionPolicy}, we illustrate this point by examining a recent bill proposed in California that aims to limit the loudness of ads on streaming platforms.
Second, our results provide a rationale for the existing concerns that ad-supported business models may curtail platforms' incentives to invest in content quality \citep*{cma2020online, stigler}.
We show that the concerns are relevant for both purely ad-funded platforms and hybrid platforms.
Our result is particularly relevant for the media industry because most news organizations use hybrid business models by charging for content and displaying ads.
In fact, there have been considerable concerns about the impact of advertising on the quality of journalism \citep*{OECD, latham}.

Our contribution is twofold.
First, we provide a general framework that rationalizes a spectrum of contracts offered by ad-supported digital platforms, from social media and streaming services to news organizations.
Second, we uncover a determinant of the platforms' investment incentives and link it to policy or technological changes surrounding advertising. The paper contributes to policy and academic discussions that recognize business models and monetization strategies as fundamental for understanding digital platforms and formulating relevant policies \citep*{caffarra2019follow, scottmorton}.

\section{Related Literature}

\paragraph{Monopoly Screening.}
Our work is related to monopoly screening and, more broadly, to mechanism design for selling goods \citep*{mussa1978, myerson1981optimal}.
The departure from standard screening models is the presence of advertising, which plays a dual role in rent extraction and revenue generation.
Consequently, the optimal mechanism may serve consumers with negative virtual values.
This property also arises in two-sided screening in which serving negative virtual values generates positive externalities for other agents (e.g., \citet*{damiano2007price},  \citet*{johnson2013matching}, \citet*{choi2015}, \citet*{gomes2016many, gomes2019price}, \citet*{jeon2022second},  \citet*{corrao2023nonlinear}, \citet*{bar2025selling}).

Compared with these studies, our model is simpler in that we do not model strategic advertisers.\footnote{The way we capture an advertising market is closer to the approach taken by \citet*{corrao2023nonlinear}, who, as one of their various applications, capture advertising in a reduced-form way with an external revenue function for the seller.}
However, we allow richer heterogeneity in content and ads and give the platform multiple screening instruments.
This richness enables us to understand digital platforms from an angle that is relatively underexplored.
First, we rationalize various ad-supported contracts observed in practice, including those that discriminate among consumers in both content access and ad exposure, which are common for streaming services.
Second, we study how advertising affects the platform's investment in content quality, without restricting the kind of contracts the platform can offer.

Our work also differs from the literature on damaged goods \citep*{deneckere1996damaged}.
In our model, the platform can ``damage" its offering in several ways---e.g., granting access to fewer or lower-quality content items and displaying annoying ads---and we characterize how the platform combines these instruments.

\paragraph{Advertising-Funded Platform.}
In the platform-economics literature, several papers study how ad-funded platforms discriminate among users by advertising exposure  \citep*{sato2019freemium, lin2020two, zennyo2020freemium, cai2023freemium}.
This body of work typically abstracts from the platform's content-allocation problem and costly investment in quality.
Our model accommodates these features, which allow broader applications.

Our result on content quality (\autoref{propositionInvest0}) is related to the platform-design literature, which studies platforms' incentives to design policies such as service quality, the quality of hosted sellers, the degree of competition within the platform and allocation of attention to content (e.g., \citet*{casner2020seller}, \citet*{liu2022implications}, \citet*{teh2022platform}, \citet*{johnen2024deceptive}, \citet*{madio2024content}, \citet*{Chen2026attention}).
In particular, \citet*{etro2021device} shows that a purely ad-funded platform underinvests in quality compared with a device-funded platform. \citet*{choi2023platform} compare a purely ad-funded business model with a hybrid one in terms of design bias and find that the former is biased toward the advertising side whereas the latter is biased toward the consumer side. Although their first result is aligned with that of \citet*{etro2021device} and ours, their second result is opposite to ours.\footnote{If we think that the opportunity cost of investment in content quality is investment in advertising technology, then their result implies that a hybrid platform invests too much in quality while a purely ad-funded  one invests too little in quality.}  The main difference between their paper and ours is that they consider homogeneous consumers and hence do not consider menus. Therefore, they cannot capture our main trade-off between advertising and rent extraction  and this is why our prediction about the effect of the hybrid business model on innovation incentive is opposite to theirs.

Finally, the literature on media markets studies how a platform's advertising decision can deviate from the social optimum and affect markets for content and products \citep*{gabszewicz2004programming, anderson2005market, peitz2008content, bergemann2011targeting, prat2022attention}.
We complement this literature by providing an intuition---that consumers' information rents can incentivize the platform to provide a socially excessive level of ``bads," i.e., advertising.

\section{Model}\label{sectionModel}
A monopoly platform allocates items and ads to consumers. Items vary in quality, while ads differ in both the revenue they generate for the platform and the disutility they impose on consumers. Each consumer has a one-dimensional  private type that determines their preferences for item quality and aversion to ads. The platform offers a menu of contracts, each of which specifies the set of items, matching between items and ads, and monetary transfer from consumers to the platform.
The platform maximizes the total revenue from consumer payments and advertising.

\paragraph{Items and Ads.}
The platform hosts a fixed set of items, denoted by $\qset := \{1,...,\itemnumber\}$, and a fixed set of ads, denoted by $\adset := \{1,..., \adnumber\}$, where $\itemnumber, \adnumber \in \mathbb{N}$.\footnote{{Each index $i$ or $j$ may instead denote a category of identical items or ads, with the corresponding primitives defined at the category level; the analysis is unchanged.
}}
The quality of each item $i \in \qset$ is denoted by $\quality(i) \ge 0$.
Without loss, assume that item $i$ has the $i$-th lowest quality, i.e., $\quality(1) \le  \quality(2)\le  \cdots \le \quality(\itemnumber)$.
The platform's marginal cost of providing items is zero as we consider digital content (see \autoref{sectionDiscussion}).
Each ad $j \in \adset$ is characterized by its revenue $\rev(j) \ge  0$ and disutility level $\disutil(j) \ge 0$.\footnote{All results hold regardless of (i) the (finite) numbers of items and ads, (ii) whether multiple items share the same quality, and (iii) whether multiple ads share the same revenue or disutility level.
Cases (ii) and (iii) imply only that there may be multiple optimal mechanisms; the analysis is unchanged.
\autoref{fig:combined} and \autoref{exampleItemCostInvestment} (Supplemental Appendix) study the case with $\itemnumber=\adnumber=1$.}\textsuperscript{,\kern0.12em}\footnote{{In reality, the platform or consumers may not be fully informed about item quality or ad characteristics.
	In such a case, we require all players to be symmetrically informed about these primitives, and we view $q(i)$, $r(j)$, and $d(j)$ as expectations conditional on public information.
Private information leads to an informed-principal problem or multidimensional types, which are beyond our scope.}}
{Without loss, assume that no ad $j$ has $\rev(j) = \disutil(j) = 0$.}\footnote{{Displaying an ad with $\rev(j) = \disutil(j) = 0$ is equivalent to displaying no ad, so we can omit such ads from $\adset$ without affecting our results.}}

We assume that $\disutil(j) \ge 0$ for every $j \in \adset$, i.e., ads reduce consumer utility.
The assumption---that ads are utility-decreasing bads---is also adopted by various papers and consistent with the idea that advertising is an implicit price consumers pay for using a digital service (e.g., \citealt*{stigler}, p.~63).\footnote{For papers that adopt a similar assumption, see, e.g., \citet*{anderson2005market, anderson2011platform, johnson2013targeted, gomes2016many, sato2019freemium}.
Our assumption does not exclude cases in which some ads bring (unmodeled) benefits to consumers, such as informing them about new products.
The assumption of nonnegative disutility only implies that the nuisance from ads exceeds other possible benefits of ads, the net effect of which is reflected in each $\disutil(j) \ge 0$.}
However, many of our results---including those on the characterization of the optimal mechanism (\autoref{lemma0}) and the platform's quality choice (\autoref{propositionInvest0})---hold even if some ads have negative disutility levels. That said, the key intuitions become clearer and our model's predictions have stronger applied relevance when consumers dislike ads.
For this reason, we assume upfront that all ads entail nonnegative disutility.

We also assume that the net quality of any pair of an item and an ad is nonnegative:
\begin{align}\label{equationNON}
\quality(i) - \disutil(j) \ge 0 \quad \text{for every } (i, j) \in \qset \times \adset.
\end{align}
The assumption mirrors the one in standard one-dimensional screening models, in which quality or quantity is nonnegative (see \autoref{sectionDiscussion} for detailed discussion).

\paragraph{Contracts.}
The platform offers consumers a menu of contracts.
A contract is a tuple $( \transfer, \bundle,\matching)$, where $\transfer \in \R$ denotes the monetary transfer from a consumer to the platform, and $\bundle \subseteq \qset$ denotes the set of allocated items, which we call the \emph{item bundle}.
The last component of the contract, $\matching$, denotes an \emph{advertising policy}, which specifies a one-to-one matching between items and ads with the possibility of matching some items with no ads.
Formally, an advertising policy $\matching$ is a map
\begin{equation}\label{equationAdpolicy}
\matching: \qset \to \adset\cup \{\noad\}
\end{equation}
such that there are no distinct items $i, i' \in \qset$ with $\matching(i) = \matching(i') \in \adset$.
Given an advertising policy $\matching$, each item $i \in \qset$ is bundled with ad $\matching(i)$, where $\matching(i) =\noad$ means that item $i$ is not bundled with any ad.
We call any element of $\qset \times ( \adset\cup \{\emptyset\})$, including $(i, \noad)$, an item-ad pair.
Let $\matchingset$ denote the set of all advertising policies.

One-to-one matching implies that (i) each item can be tied with at most one ad, and (ii) each ad can be shown to each consumer at most once.
In \autoref{remark1}, we discuss how we may relax the first restriction (i).
The second restriction (ii) is without loss of generality, because ads in $\adset$ can represent copies of the same ad.
For example, if the platform hosts a single ad but can show it $\adnumber$ times to a given consumer, we can interpret each ad $j \in \adset$ as the $j$-th opportunity to display the ad.\footnote{By taking $\rev(j)$ to be decreasing in $j$, we can capture a situation in which the effectiveness of advertising declines over time as the platform shows a given ad repeatedly to the same consumer.}

According to \eqref{equationAdpolicy}, the domain of an advertising policy $\matching$ is equal to the set of all items, $\qset$, regardless of the underlying set $\bundle$ of items specified in the contract.
We might think this is redundant because, as will become clear later, how $\matching$ varies outside of $\bundle$ does not affect a consumer's utility from a contract.
However, we take the domain of $\matching$ to be $\qset$ instead of $\bundle$ to streamline the exposition and simplify proofs.

\paragraph{Consumers.}
A unit mass of consumers interacts with the platform.
Each consumer is privately informed of her type $\type$, which is distributed according to $\typemeas \in \Delta \Type$ that has a positive density $\density$ on $\Type$.\footnote{{The same analysis holds for any discrete type space.
The platform's problem reduces to the maximization of a virtual surplus that has the same form as in our model, so the characterization of the optimal mechanism carries over. See Supplemental Appendix~\ref{sectionAppendixDiscreteType}.}}
The platform knows the type distribution.
Assume that the virtual value, $\virtual(\type):= \type - \frac{1 - \typemeas(\type)}{\density(\type)}$, is strictly increasing and continuous in $\type$.\footnote{Throughout, ``increasing,'' ``decreasing,'' ``greater than,'' ``less than,'' and related terms are understood in the weak sense unless explicitly qualified by ``strictly.''}
Define $\zerotype \in (\typemin, \typemax)$ as the unique type that has zero virtual value, $\virtual(\zerotype) =0$.\footnote{The assumption---that $\virtual(\type)$ is strictly increasing and continuous---ensures that $\zerotype$ uniquely exists.
If $\virtual(\cdot)$ is nonmonotone, we use the ironed virtual valuation \citep*{myerson1981optimal}.}
We refer to consumers with $\type<\zerotype$ as \emph{negative virtual types} and to those with $\type>\zerotype$ as \emph{positive virtual types}.

A consumer's utility from a contract $(\transfer, \bundle,\matching)$ is defined as follows:
\begin{equation}\label{equationUtil0}
\type \sum_{i \in \bundle} \left[ \quality (i) -  \disutil(\matching(i)) \right]  - \transfer
\end{equation}
with the notational convention that $\disutil (\noad) =0$.\footnote{Recall that $\disutil(\matching(i))$ is the disutility level of ad $\matching(i)$, which is matched with item $i \in \bundle$.
If item $i$ is not matched with any ad (i.e., if $\matching (i) =\noad$), we have $\disutil(\matching(i))=0$.}
Thus, consumers who place greater value on content quality (i.e., those with higher types) also experience greater disutility from ads.\footnote{\citet*{Bisceglia2025privacy} consider the same multiplicative payoff structure $\type (q-L)$ where $q$ is quality and $L$ is loss from data collection.}
In \autoref{sectionDiscussion}, we motivate our specification and discuss alternatives.
Also, we assume that consumers cannot ignore ads while consuming items.\footnote{See, e.g., \citet*{anderson2011platform}, \citet*{johnson2013targeted}, and \citet*{peitz2023adblocking} for implications of costly ad-avoidance technologies on market outcomes.}

\paragraph{Mechanism Design.} 
The platform chooses a direct mechanism, represented by 
$$\{( \transfer(\type), \bundle(\type),\matching(\cdot|\type))\}_{\type \in \Type},$$ 
which for each type $\type$ specifies a contract, i.e., monetary transfer $\transfer(\type)$ to the platform, item bundle $\bundle(\type)$, and advertising policy $\matching (\cdot|\type)$.
Aside from incentive compatibility and individual rationality constraints, we impose no restrictions on feasible mechanisms.\footnote{Nonetheless, a more realistic model might impose some restrictions that prevent platforms from differentiating content access, such as contracts between the platform and content producers.}

{A mechanism involves two types of matching (\autoref{fig:matching}): a one-to-one matching between items and ads, specified by the advertising policy $\matching(\cdot|\type)$ for each type $\type$, and a one-to-many matching between consumer types and item-ad pairs, because each type receives the item-ad pairs formed by their bundle and advertising policy.
Generally, consumers with different types may receive different sets of items, see different ads even for the same item, and pay different prices.}

\begin{figure}[H]
\centering
\begin{tikzpicture}[
    x=1cm, y=1cm,
    >=Stealth,
    every node/.style={font=\small},
    allocNode/.style={circle, draw=black, fill=white, line width=0.8pt, inner sep=0pt, minimum size=15pt, font=\footnotesize},
    grayNode/.style={circle, draw=black!40, fill=white, line width=0.5pt, inner sep=0pt, minimum size=15pt, font=\footnotesize, text=black!40},
    matchEdge/.style={draw=black, line width=0.7pt},
    noAdEdge/.style={draw=black, line width=0.7pt, dashed},
    grayEdge/.style={draw=black!35, line width=0.5pt},
    assignArrow/.style={draw=black, line width=0.5pt, -{Stealth[length=2mm]}},
    colHead/.style={font=\small, align=center},
    tagLabel/.style={font=\footnotesize, align=center}
]
\node[circle, draw=black, fill=black, inner sep=0pt, minimum size=5.5pt, label={below:$\type$}] (theta) at (0,-0.5) {};

\node[grayNode]  (i4) at (3.3,1.5)  {$4$};
\node[allocNode] (i3) at (3.3,0.5)  {$3$};
\node[allocNode] (i2) at (3.3,-0.5) {$2$};
\node[allocNode] (i1) at (3.3,-1.5) {$1$};

\node[grayNode]  (a3) at (6.3,1.5)  {$3$};
\node[allocNode] (a1) at (6.3,0.5)  {$1$};
\node[allocNode] (a0) at (6.3,-0.5) {$\noad$};
\node[allocNode] (a2) at (6.3,-1.5) {$2$};
\node[tagLabel, right=3pt of a0] {(no ad)};

\node[colHead] at (0,2.35)   {consumer\\type};
\node[colHead] at (3.3,2.35) {items\\$\qset$};
\node[colHead] at (6.3,2.35) {ads\\$\adset\cup\{\noad\}$};

\draw[grayEdge]  (i4) -- (a3);
\draw[matchEdge] (i3) -- (a1);
\draw[noAdEdge]  (i2) -- (a0);
\draw[matchEdge] (i1) -- (a2);
\node[tagLabel] at (4.8,-2.15) {one-to-one matching\\$\matching(\cdot|\type)$};

\draw[decorate, decoration={brace, amplitude=4pt}, line width=0.5pt]
    (2.5,0.78) -- (2.5,-1.78);
\node[tagLabel] at (2.5,1.1) {$\bundle(\type)$};

\draw[assignArrow] (theta) to[out=30,in=180]  (2.2,0.5);
\draw[assignArrow] (theta) -- (2.2,-0.5);
\draw[assignArrow] (theta) to[out=-30,in=180] (2.2,-1.5);
\node[tagLabel] at (1.0,-2.15) {one-to-many\\matching};

\end{tikzpicture}
\caption{{A one-to-one matching between items and ads (horizontal lines), and a one-to-many matching between consumer types and item-ad pairs (arrows).
Type $\type$ receives bundle $\bundle(\type) = \{1,2,3\}$; gray indicates the item and the ad that $\type$ does not receive.}}
\label{fig:matching}
\end{figure}

Given any direct mechanism, if a consumer with type $\type$  reports to be $\type'$, she obtains a payoff of 
\begin{equation}\label{equationUtil}
\type \sum_{i \in \bundle(\type')} \left[ \quality (i) -  \disutil(\matching(i|\type')) \right]  - \transfer(\type').
\end{equation}
If all consumers report their type truthfully, the platform's revenue is
\begin{align*}
\int^1_0  \transfer(\type) + \sum_{i \in \bundle(\type)} \rev(\matching(i|\type))
 \dint \typemeas(\type).
\end{align*}
The first term $\transfer(\type)$ of the integrand is the monetary transfer from type $\type$. This is the revenue from selling access to content and we call it the sales revenue.
The second term $\sum_{i \in \bundle(\type)} \rev(\matching(i|\type))$ is the total advertising revenue generated by type $\type$.

The platform chooses a mechanism to maximize total revenue subject to incentive compatibility (IC) and individual rationality (IR) constraints. The resulting problem is as follows.
\begin{align}
&\max_{\{( \transfer(\type), \bundle(\type), \matching(\cdot|\type))\}_{\type \in \Type}} \int^1_0  \transfer(\type) + \sum_{i \in \bundle(\type)} \rev(\matching(i|\type))
 \dint \typemeas(\type) \tag{M}\label{M} \\[12pt]
\text{subject to} \quad  \quad&\type \sum_{i \in \bundle(\type)} \left[ \quality (i) -  \disutil(\matching(i|\type)) \right] - \transfer(\type) 
\ge \type \sum_{i \in \bundle(\type')} \left[ \quality (i) -  \disutil(\matching(i|\type')) \right]  - \transfer(\type'), \forall \type, \type' \in \Type \quad \tag{IC}\label{IC} \\[12pt]
 \quad& \type \sum_{i \in \bundle(\type)} \left[ \quality (i) -  \disutil(\matching(i|\type)) \right]  - \transfer(\type) \ge 0 , \forall \type \in \Type. \quad \tag{IR}\label{IR}
\end{align}

\subsection{Discussion of Assumptions}\label{sectionDiscussion}
\paragraph{Consumer Utility Specification.}
Our utility specification \eqref{equationUtil0} follows the standard one-dimensional screening formulation of \citet*{mussa1978}, in which a buyer's gross utility is the product of her type and the quality of the good. In our model, the net quality of an item-ad pair is the item's quality minus the disutility from the associated ad. Thus, consumers with higher $\type$ have both a greater willingness to pay for item quality and a stronger aversion to advertising.

This type-dependent utility specification is important for our results. The model predicts that the platform may screen consumers through content access and advertising exposure, in addition to price, as observed in many ad-supported platforms. Some nearby specifications would shut down one of these screening margins.

First, suppose that a consumer's utility is
\[
\type \sum_{i\in \bundle}\quality(i)
-
\sum_{i\in \bundle}\disutil(\matching(i))
-
\transfer,
\]
so that advertising disutility is independent of $\type$. 
In this case, advertising exposure does not affect information rents. 
Then, which ads to display depends only on each $\rev(j)-\disutil(j)$ and becomes independent of consumer types.
This specification is subsumed by the type-independent ad disutility extension (see \autoref{sectionIndep}) and does not rationalize paid ad-free contracts, such as those observed on YouTube and Netflix.

Second, suppose that item utility is type-independent:
\[
\sum_{i\in \bundle}\quality(i)
-
\type \sum_{i\in \bundle}\disutil(\matching(i))
-
\transfer.
\]
Then, all consumers have the same value for item quality, so the platform does not use content access as a screening device.
Indeed, with zero marginal cost of providing items, the platform allocates all items to all consumers.
Such a model fails to explain partial-access plans or premium-content tiers, such as those offered by Peacock, Netflix, and The New York Times.

Finally, one could allow multidimensional heterogeneity:
\[
\type_1 \sum_{i\in \bundle}\quality(i)
-
\type_2 \sum_{i\in \bundle}\disutil(\matching(i))
-
\transfer,
\]
where $(\type_1,\type_2)$ is drawn from a joint distribution.\footnote{This specification brings our model closer to that of \citet*{yang2021costly}. However, his Theorem 1 does not apply to our baseline model or to the specification discussed here, for at least two reasons: (i) \citet*{yang2021costly} assumes that the ``productive component'' of screening instruments is one-dimensional, whereas in our model the space of productive allocations, namely advertising policies, is multidimensional; and (ii) his Theorem 1 requires $\type_1$ and $\type_2$ to be negatively correlated, whereas in our baseline model they are perfectly positively correlated (as $\type_1=\type_2$).}
This specification may generate richer menus, but one-dimensional types allow us to keep the analysis simple and render the model amenable to richer analysis, such as endogenous quality investment.

\paragraph{Cost of Producing Content.}
We assume that the platform incurs no marginal cost of providing consumers with access to items.
This is natural if (i) items are freely replicable digital goods and (ii) there is no expense tied to content dissemination on the platform's side, such as payments to content producers.
The assumption of zero fixed cost is relaxed in \autoref{sectionInnovation}, \autoref{sectionItemCost}, and \autoref{sectionNoncont}.

\paragraph{Nonnegative Net Quality and Contractible Consumption.}
The condition \eqref{equationNON} of nonnegative net quality is crucial, but its exact role depends on whether consumption is contractible.
If consumption is contractible (i.e., the platform can enforce the consumption of any allocated bundle), condition \eqref{equationNON} allows us to apply the standard technique for monopoly screening.\footnote{Without condition \eqref{equationNON}, the binding IR constraint can depend on the allocation.
For example, if the platform allocates only \ipairs with $\quality(i) - \disutil(j) < 0$, the IR constraint binds for $\type=1$.}
If consumption is not contractible, consumers would freely dispose of any allocated \ipair with negative net quality.
As a result, every consumed pair satisfies condition \eqref{equationNON}, even if some allocated pairs violate it.\footnote{The platform can, without loss, allocate only \ipairs with $\quality(i) - \disutil(j) \ge 0$, because consumers would choose not to consume any pair with negative net quality.}
The condition still simplifies our analysis, because it ensures that the matching between items and ads does not affect which item bundles the platform can assign to each consumer (see Supplemental Appendix~\ref{sectionAppendixPAM} for an example in which relaxing the condition changes the optimal advertising policy; see also Supplemental Appendix~\ref{sectionAppendixFreeDisposal} for a related analysis).
In sum, under condition \eqref{equationNON}, consumers have no ex post incentive to discard any allocated item-ad pair, so whether consumption is contractible does not affect our baseline analysis, but the way in which our analysis is affected by the absence of the condition depends on whether consumption is contractible.

\section{Optimal Mechanism}\label{sectionSubscription2}
We first define a few concepts that help us describe the platform's optimal mechanism.
We then characterize the optimal mechanism and present its key properties.

\subsection{Virtual Advertising Profits}
We begin by defining \emph{virtual advertising profits} (hereafter, \emph{virtual ad profits}), which measure an ad's contribution to the platform's total revenue when we take into account the impact of advertising on the sales revenue in any incentive-compatible mechanism.

Formally, for any type $\type \in [0,1]$ and ad $j \in \adset$, the \emph{$\type$-virtual ad profit} of ad $j$, denoted by $\rho_\type(j)$, is defined as
\begin{align}\label{equationVirtualAd}
\rho_\type(j) := \rev(j) - \virtual(\type) \disutil(j).
\end{align}
Unpacking the virtual value as $\virtual(\type)= \type - \frac{1 - \typemeas(\type)}{\density(\type)}$, we can express $\rho_\type(j)$ as
\begin{align}\label{equationVirtualAd2}
\rho_\type(j) = \underbrace{\rev(j)}_{\text{ad revenue}} \underbrace{- \type \disutil(j)}_{\substack{\text{lower}\\\text{price for $\type$ }}} \underbrace{+\frac{1 - \typemeas(\type)}{\density(\type)} \disutil(j)}_{\substack{\text{reduction in}\\\text{information rents}}}.
\end{align}
The first term $\rev(j)$ is the direct contribution of ad $j$ to the platform revenue through the advertising revenue.
The rest of the expression \eqref{equationVirtualAd2} captures the impact of advertising on the platform's sales revenue.
Specifically, the second term, $- \type \disutil(j)$, captures the impact of allocating ad $j$ on the sales revenue $\transfer(\type)$ from type $\type$, i.e., the nuisance from ad $j$ reduces type $\type$'s willingness to pay for an item by $\type \disutil(j)$. The last term, $\frac{1 - \typemeas(\type)}{\density(\type)} \disutil(j)$, captures the impact of advertising on information rents for types higher than $\type$.
Displaying ad $j$ to type $\type$ reduces the incentives of higher-type consumers to misreport and take the contract intended for type $\type$.
As a result, the platform can increase prices for higher types while keeping the mechanism incentive compatible.

Virtual ad profits illustrate the idea that 
the true contribution of an advertisement to the platform's revenue typically differs from either (i) its contribution to social welfare (i.e., ad revenue minus nuisance, $\rev(j) - \type \disutil(j)$) or (ii) its ad revenue (i.e., $\rev(j)$).
Below we provide an intuition for each comparison.

{First, an ad's contribution $\rho_\type(j)$ to total revenue exceeds the contribution to social welfare (i.e., $\rho_\type(j) > \rev(j) - \type \disutil(j)$ whenever $\type< 1$ and $\disutil(j)>0$), because nuisance enables the platform to reduce information rents, which is captured by the last term in \eqref{equationVirtualAd2}.}
Thus, for any fixed set of items, the platform has a socially excessive incentive to display ads (except for the highest type).
For example, suppose that ad $j$ satisfies $\rev(j)=\hat\theta \disutil(j)$ for some $\hat\theta \in (0,1)$. 
Then, for any type $\type > \hat\type$, the ad revenue $\rev(j)$ falls below the disutility $\type \disutil(j)$; thus, in the first best, the platform never allocates ad $j$ to type $\type$.
However, in the second best, the platform may strictly prefer allocating ad $j$ to allocating no ad in order to reduce information rents.
In other words, the same trade-off between efficiency and rent extraction, which generates downward distortions in ``goods,'' may generate upward distortions in ``bads.''

Second, the virtual ad profit $\rho_\type(j)$ is larger (smaller) than the ad revenue $\rev(j)$ for consumers with negative (positive) virtual types. 
For example, for negative virtual types, advertising is more profitable than suggested by their ad revenue---that is, $\rho_\type(j) > \rev(j)$ when $\virtual(\type) < 0$ {and $\disutil(j) > 0$}---because the reduction in information rents from ad nuisance is larger than the nuisance itself.
In fact, the virtual ad profit increases (decreases) with the disutility level for negative (positive) virtual types. Therefore, if all ads generate the same revenue, the platform prefers showing the ads with the highest (lowest) disutility levels for negative (positive) virtual types.
Generally, the platform's ranking of ads by virtual ad profits depends on both ad characteristics, $(\rev(j), \disutil(j))$, and a consumer's type, $\type$.

\subsection{Assortative Advertising Policies}
{The optimal mechanism can be implemented by matching items and ads in a negatively assortative way, based on item quality and virtual ad profits.}
To formalize this notion, we first define a class of advertising policies.
Take any map $\rho : \adset \to \mathbb{R}$, which assigns each ad $j \in \adset$ a score, $\rho(j)$.
Call a permutation $\mu:\adset \to \adset$ a $\rho$-permutation if $\mu$ labels ads in such a way that an ad with a larger index has a lower score---i.e., 
\begin{equation*}
\rho(\mu(1)) \ge \rho(\mu(2)) \ge \cdots \ge \rho(\mu(\adnumber)).
\end{equation*}
If multiple ads have the same score under $\rho(\cdot)$, the $\rho$-permutation may not be unique.

\begin{definition}\label{definitionAssortative}

An advertising policy is a \emph{$\rho$-negative assortative policy} if the following holds for some $\rho$-permutation, $\mu$:
For each $k=1,\ldots,\itemnumber$, item $k$, which has the $k$-th lowest quality, is matched with ad $\mu(k)$ if $k\leq\adnumber$ and $\rho(\mu(k))\geq0$, and not matched with any ad otherwise.
\end{definition}

As shown below, an optimal mechanism uses a $\rho_\type$-negative assortative policy for every type $\type$, where $\rho_\type$ is the virtual ad profit function for type $\type$ (see \eqref{equationVirtualAd}).
According to \autoref{definitionAssortative}, we can construct {this} advertising policy for each type $\type$ as follows.
First, we assign each ad $j$ its $\type$-virtual ad profit, $\rho_\type(j)$.  
Then, starting from the lowest-quality item, we match items and ads in a negatively assortative manner, according to the quality $\quality(i)$ of each item and $\rho_\type(j)$ of each ad.
Once we exhaust ads that have nonnegative $\rho_\type(j)$, we stop and leave the remaining higher-quality items unmatched with any ads.

\subsection{Characterization of the Optimal Mechanism}
We now solve the platform's problem and describe the optimal mechanism.
First, we rewrite the platform's problem \eqref{M} as maximization of virtual surplus subject to the monotonicity constraint, i.e., $\sum_{i \in \bundle(\type)} \left[ \quality (i) -  \disutil(\matching(i|\type)) \right]$ is increasing in $\type$ (e.g., \citealt*{myerson1981optimal}).
We then define the relaxed problem as the one in which we maximize the virtual surplus without the monotonicity constraint:
\begin{align}
\max_{\{( \bundle(\type), \matching(\cdot|\type))\}_{\type \in \Type}}  \int^1_0    
\virtual(\type)  \sum_{i \in \bundle(\type)} \left[ \quality (i) -  \disutil(\matching(i|\type)) \right]   
+\sum_{i \in \bundle(\type)} \rev(\matching(i|\type))  \dint \typemeas(\type) \tag{P}\label{P}.
\end{align}

Solving \hyperref[P]{Problem \eqref{P}} and applying the standard technique to recover monetary transfers from local IC constraints, we obtain the optimal mechanism.
\begin{lemma}\label{lemma0}
The platform's problem \eqref{M} has the following solution: For each type $\type \in \Type$, the platform adopts a $\rho_\type $-negative assortative matching, $\matching^*(\cdot|\type)$, and allocates all items that generate non-negative virtual surplus, i.e., 
 $\bundle(\type):= \{ i  \in \qset: \virtual(\type) \quality(i) + \rho_\type ( \matching^*(i|\type) )  \ge 0 \}$.
 The monetary transfer from type $\type$ equals
 $$\transfer(\type) = \type \displaystyle\sum_{i \in \bundle(\type)} \left[ \quality (i) -  \disutil(\matching^*(i|\type)) \right]  - 
 \int^\type_0 \displaystyle\sum_{i \in \bundle(t)} \left[ \quality (i) -  \disutil(\matching^*(i|t)) \right]  \dint t.$$
\end{lemma}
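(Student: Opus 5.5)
The plan follows the standard Myersonian route: reduce \eqref{M} to the relaxed problem \eqref{P}, solve \eqref{P} pointwise in $\type$, and then verify that the pointwise solution satisfies the monotonicity constraint, so that it also solves \eqref{M}.

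\textbf{Reduction to \eqref{P}.} Write $Q(\type):=\sum_{i\in\bundle(\type)}[\quality(i)-\disutil(\matching(i|\type))]$ for the net quality delivered to type $\type$. Preferences \eqref{equationUtil} are linear in $\type$ with ``allocation'' $Q$. By the envelope theorem, \eqref{IC} is then equivalent to two conditions: $Q$ is increasing, and $U(\type)=U(0)+\int_0^\type Q(t)\dint t$. Condition \eqref{equationNON} gives $Q\ge 0$ for every contract, so $U$ is increasing and \eqref{IR} reduces to $U(0)\ge 0$, which binds at the optimum. Substituting $\transfer=\type Q-U$ and integrating by parts yields the objective of \eqref{P}, and the transfer formula in the statement is exactly this envelope expression with $U(0)=0$.

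\textbf{Solving \eqref{P} pointwise.} For each $\type$ I maximize
\[
\sum_{i\in\bundle}\bigl[\virtual(\type)\quality(i)+\rho_\type(\matching(i))\bigr],
\]
with $\rho_\type(\noad):=0$, jointly over $\bundle$ and $\matching$. I split into two cases.

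\emph{Case $\virtual(\type)\ge0$.} Every item has $\virtual(\type)\quality(i)\ge0$, so in the unconstrained pointwise problem it is optimal to allocate all items. The task is then to assign ads with positive $\rho_\type$ to distinct items. The payoff is additively separable in items and ads, so which item carries which ad is irrelevant. Any assignment of the top-$\rho_\type$ ads with nonnegative scores is optimal, including the negative assortative one. Items $i$ with $\virtual(\type)\quality(i)+\rho_\type(\matching^*(i|\type))\ge0$ form the whole set, consistent with the stated $\bundle$.

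\emph{Case $\virtual(\type)<0$.} An item is worth including only if it is paired with an ad with $\rho_\type(j)>-\virtual(\type)\quality(i)=|\virtual(\type)|\quality(i)$. This is now a bipartite assignment problem with pair value $\virtual(\type)\quality(i)+\rho_\type(j)$ and outside option $0$ for unassigned items. Because the pair value is additively separable, an exchange argument applies. If a high-quality item is included while a lower-quality item is not, swapping which item carries the ad weakly raises the value. Likewise, pairing the higher-$\rho$ ad with the lower-quality item is weakly better, since the included set can then be taken to be a lower set. A cleaner alternative is to first show that the optimal included set consists of the $k$ lowest-quality items paired with the $k$ highest-$\rho$ ads, for some $k$. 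Then the $\rho_\type$-negative assortative matching, followed by dropping the pairs with negative virtual surplus, attains the optimum: within the negative assortative matching the $k$-th pair's value is decreasing in $k$, so dropping negative pairs selects exactly a prefix. Items left unmatched because their ads are exhausted have value $\virtual(\type)\quality(i)\le 0$ and are excluded, except for ties at zero, which do not matter.

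\textbf{Main obstacle: monotonicity of $Q$.} I must show that the resulting $Q^*(\type)$ is increasing. I would do this by a single-crossing argument on the pointwise objective
\[
\virtual(\type)Q+R,
\]
where $R$ is ad revenue and $(Q,R)$ is the vector delivered by a feasible $(\bundle,\matching)$. The objective has increasing differences in $(\virtual(\type),Q)$, and the feasible set does not depend on $\type$. Since $\virtual$ is increasing, monotone comparative statics (Topkis) implies that some selection of maximizers is increasing in $Q$. To ensure the particular selection in the statement is monotone, I would use the standard revealed-preference argument: for $\type<\type'$, adding the two optimality inequalities gives $(\virtual(\type')-\virtual(\type))(Q^*(\type')-Q^*(\type))\ge0$, and $\virtual$ is strictly increasing. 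Ties among multiple maximizers might break this for the specified selection. The fallback is to note that the set of types with ties has measure zero, or to fix the tie-breaking in $\bundle$ (which includes zero-value items) consistently. With monotonicity established, the relaxed solution is feasible for \eqref{M}, and the lemma follows.
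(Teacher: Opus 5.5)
Your proof is correct. Its outer structure matches the paper's: the Myerson reduction of \eqref{M} to \eqref{P}, pointwise maximization, a revealed-preference check of monotonicity, and transfers from the envelope formula with $U(0)=0$. The difference lies in how you solve the pointwise problem for $\virtual(\type)<0$.

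The paper folds the allocate-or-not decision into the truncated pair value $\max(0,\virtual(\type)\quality+\rho)$. It pads the ad set with fictitious zero-profit ads so that the item--ad matching is exhaustive. It then shows the truncated value is submodular in $(\quality,\rho)$ and invokes an optimal-assignment theorem (Galichon) to conclude that negative assortative matching is optimal.

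You instead exploit additive separability directly, in three steps:
\begin{enumerate}
\item An exchange argument makes the allocated set a lower set.
\item For a fixed number $k$ of allocated items, the best choice is the $k$ lowest-quality items with the $k$ largest nonnegative virtual ad profits, padded with the no-ad option.
\item Under the $\rho_\type$-negative assortative policy, pair values decrease with rank, so keeping the pairs with nonnegative value selects exactly the optimal prefix.
\end{enumerate}

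Your route is more elementary and self-contained. It exhibits the lower-contour form of the bundles (Part~1 of Theorem~\ref{theorem}) already at this stage. It also makes clear that the assortative order matters only for the ``match first, then drop'' rule, not for the payoff of a given allocated set (cf.\ Remark~\ref{remarkMatching}). The paper's submodularity formulation has its own payoff: it reuses it in Lemma~\ref{lemma1} to obtain decreasing differences of the value in $(\qvec,\rhovec)$ for the investment results.

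One correction: your concern about ties in the monotonicity step is unfounded. The two optimality inequalities hold for any pair of pointwise maximizers. So $(\virtual(\type')-\virtual(\type))(Q^*(\type')-Q^*(\type))\ge 0$, together with strict monotonicity of $\virtual$, gives $Q^*(\type')\ge Q^*(\type)$ for every selection, including the one in the statement. This is exactly the paper's argument, phrased there as a contradiction. You need neither a tie-breaking convention nor the measure-zero fallback, and the latter would not by itself secure \eqref{IC} at every type.
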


We sketch the proof of \autoref{lemma0} by describing how to solve the relaxed problem, \hyperref[P]{Problem \eqref{P}} (see \autoref{sectionAppendixA} for omitted details).
First, the virtual surplus is separable across types, so we can reduce \hyperref[P]{Problem \eqref{P}} to pointwise maximization of each type's virtual surplus, $\virtual(\type)  \sum_{i \in \bundle(\type)} \left[ \quality (i) -  \disutil(\matching(i|\type)) \right]   
+\sum_{i \in \bundle(\type)} \rev(\matching(i|\type))$.

Second, we calculate the contribution of each item-ad pair to type $\type$'s virtual surplus.
Suppose that the platform matches item $i \in \qset$ with ad $j\in \adset \cup \{\noad\}$.
If the platform allocates an item-ad pair $(i,j)$, the virtual surplus from type $\type$ increases by
\begin{align}\label{equationVirtual2}
\virtual(\type) [ \quality(i) - \disutil(j)]  + \rev(j),
\end{align}
where $\disutil(\noad) = \rev(\noad) =0$.

The platform allocates an item-ad pair $(i, j)$ if and only if its contribution to the virtual surplus is nonnegative. 
As a result, the contribution of the pair $(i, j)$ to the virtual surplus, given the platform's optimal decision for whether to allocate the pair, is equal to $\max\left(0, \virtual(\type) [ \quality(i) - \disutil(j)] + \rev(j)\right)$, or equivalently, 
\begin{align}
\max\left(0, \virtual(\type) \quality(i) + \rho_\type (j) \right),\label{equationVirtual1}
\end{align}
with $\rho_\type (\noad) =0$.
Thus, the platform's problem of maximizing type $\type$'s virtual surplus is 
\begin{align}
\max_{\matching \in \matchingset} \sum_{i \in \qset} \max\left(0, \virtual(\type) \quality(i) + \rho_\type(\matching(i)) \right).\tag{R-$\theta$}\label{equationVirtual3}
\end{align}
The outside max operator captures the choice of an advertising policy, and the inside max operator reflects the problem of whether to allocate each item-ad pair.

The problem \eqref{equationVirtual3} admits the following solution.
First, consider the outside max operator for a negative virtual type, $\virtual(\type)<0$.
The platform never displays ad $j$ when $\rho_\type(j)<0$.
Also, the virtual surplus \eqref{equationVirtual1} from an item-ad pair $(i,j)$  is submodular in $(\quality(i), \rho_\type(j))$.
Consequently, {one solution is to match} items in $\qset$ and ads in $\{j\in \adset: \rho_\type(j)>0\}$ in a negatively assortative way, which reduces to a $\rho_\type$-negative assortative matching.
The case of positive virtual types  follows the same logic, except that 
the submodularity of the objective in \eqref{equationVirtual3} is trivial:
If $\virtual(\type) > 0$, we have $\virtual(\type) \quality(i) + \rho_\type (j) =\virtual(\type) [ \quality(i) - \disutil(j)]  + \rev(j)\geq0$.
Finally, the inside max operator of  \eqref{equationVirtual3} determines whether to allocate each matched item-ad pair to type $\type$, the solution to which reduces to the item bundle $\bundle(\type)$ described in \autoref{lemma0}.

\begin{remark}\label{remarkMatching}
{The optimal mechanism in \autoref{lemma0} has two properties: If a consumer of a given type $\type$ receives $K$ items, then (i) the items are $K$ lowest-quality ones and are matched with at most $K$ ads with the highest nonnegative virtual ad profits, and (ii) within the allocated sets, items and ads are matched according to a $\rho_\type$-negative assortative policy.
We note that only (i) is a robust feature of the optimal mechanism, whereas (ii) is not.
Indeed, because the revenue and disutility of each ad do not depend on the matched item, the virtual surplus of an item-ad pair takes the additively separable form in \eqref{equationVirtual1}; hence, given the sets of allocated items and ads, the exact matching between them is payoff-irrelevant.
As a result, we view the negative assortative policy as a concise way to represent an optimal mechanism.
Item costs can yield an optimal mechanism with a different matching pattern (see \autoref{sectionItemCost}), while a richer structure of advertising revenue, such as complementarity between ad revenue and item quality, can make the exact matching payoff-relevant (see \autoref{sectionItemAdRevenue}).}
\end{remark}

We build on \autoref{lemma0} and establish key properties of the optimal mechanism.
To streamline exposition, assume that no two items have the same quality level.
We then define a set of items whose quality levels fall below some threshold.
\begin{definition}
An item bundle $B \subseteq \qset$ is a \emph{lower-contour bundle} if it can be written as
$B = \{i \in \qset : \quality(i) \leq \bar\quality\}$
for some quality cutoff $\bar\quality$, where either $\bar\quality < \quality(1)$ or
$\bar\quality \in \{\quality(1), \ldots, \quality(\itemnumber)\}$.
\end{definition}

For a quality cutoff
$\bar\quality < \quality(1) = \min_{i \in \qset} \quality(i)$,
the corresponding lower-contour bundle is empty and represents exclusion from
access to content. To streamline the exposition, we normalize the quality cutoff
for the empty bundle to a fixed constant, say $\bar\quality = -1$.
\begin{theorem}\label{theorem}
The optimal mechanism has the following properties:
\begin{enumerate}
\item Consumers with negative virtual types (i.e., $\type <\zerotype$) receive lower-contour bundles with quality cutoff $\maxquality(\type)$ that is increasing in $\type$.
Every allocated item is matched with an ad.
Specifically, if consumers with type $\type$ receive $K^-(\type)$ items, they are matched with the $K^-(\type)$ ads with the highest nonnegative virtual ad profits, where $K^-(\type)$ increases in $\type$.

\item Consumers with positive virtual types (i.e., $\type >\zerotype$) receive all items.
Consumers with type $\type$ will see $K^+(\type)\le \itemnumber$ ads with the highest nonnegative virtual ad profits, where $K^+(\type)$ decreases in $\type$.
\end{enumerate}
\end{theorem}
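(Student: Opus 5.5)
We derive both parts from the solution in \autoref{lemma0}. For each type $\type$ it uses a $\rho_\type$-negative assortative policy $\matching^*(\cdot|\type)$ and the bundle $\bundle(\type)=\{i:\virtual(\type)\quality(i)+\rho_\type(\matching^*(i|\type))\ge 0\}$. We treat the two sign regimes of $\virtual(\type)$ separately. Monotonicity in $\type$ will come from the fact that $\rho_\type(j)=\rev(j)-\virtual(\type)\disutil(j)$ is decreasing in $\type$ for every $j$ with $\disutil(j)\ge 0$, while $\virtual$ is strictly increasing. Let $S(\type)=\{j:\rho_\type(j)\ge 0\}$. Then $|S(\type)|$ is decreasing in $\type$.

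\emph{Positive virtual types.} If $\virtual(\type)>0$, every pair $(i,j)$ with $j\in S(\type)\cup\{\noad\}$ has virtual surplus
\[
\virtual(\type)[\quality(i)-\disutil(j)]+\rev(j)\ge 0
\]
by \eqref{equationNON}. Hence $\bundle(\type)=\qset$. By \autoref{definitionAssortative}, the number of ads shown is $K^+(\type)=\min(\itemnumber,|S(\type)|)$, and these are the highest-ranked ads in $S(\type)$. Since $|S(\type)|$ is decreasing, $K^+(\type)$ is decreasing.

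\emph{Negative virtual types.} For $\virtual(\type)<0$ we first show that every allocated item carries an ad. An unmatched item contributes $\virtual(\type)\quality(i)$, which is $\le 0$. It is strictly negative when $\quality(i)>0$; if $\quality(i)=0$ the item can be dropped without loss. For the lower-contour property, let $\mu$ be the $\rho_\type$-permutation. Item $k$ is paired with $\mu(k)$, and both $\virtual(\type)\quality(k)$ and $\rho_\type(\mu(k))$ are decreasing in $k$. So the virtual surplus of the $k$-th pair is decreasing in $k$, and the set of $k$ with nonnegative surplus is an initial segment $\{1,\dots,K^-(\type)\}$. This initial segment is a lower-contour bundle, and its ads are the $K^-(\type)$ ads with the highest nonnegative virtual ad profits.

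\emph{Monotonicity of $K^-$ (main obstacle).} Here the assortative ranking of ads can change with $\type$, so we cannot simply track a fixed pair $k$. The plan is to rewrite $K^-(\type)$ as the largest $K$ such that
\[
\virtual(\type)\quality(K)+\rho_\type^{(K)}\ge 0,
\]
where $\rho_\type^{(K)}$ is the $K$-th largest value in $\{\rho_\type(j)\}_j$, and $K\le \adnumber$. Pointwise in $\type$, the map $\type\mapsto\rho_\type(j)$ is decreasing and $\type\mapsto\virtual(\type)\quality(K)$ is increasing. These two forces conflict, so instead use the identity
\[
\virtual(\type)\quality(K)+\rho_\type(j)=\virtual(\type)[\quality(K)-\disutil(j)]+\rev(j).
\]
By \eqref{equationNON}, the right-hand side is increasing in $\type$ for each fixed $j$. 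Taking the $K$-th order statistic over $j$ preserves monotonicity, because an order statistic of increasing functions is increasing and the term $\virtual(\type)\quality(K)$ is common to all $j$. Hence the condition defining $K^-(\type)$ is monotone in $\type$, so $K^-(\type)$ is increasing. The cutoff $\maxquality(\type)=\quality(K^-(\type))$, normalized to $-1$ when $K^-(\type)=0$, is then increasing too.

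Finally, the monotonicity of the items-minus-disutility sum required by IC follows from the fact that the relaxed solution is optimal for \eqref{M}, which \autoref{lemma0} already guarantees. Ties in quality are excluded by assumption, and ties in $\rho_\type$ affect only the choice among optimal mechanisms.
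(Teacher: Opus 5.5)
Your proposal is correct. For Part~2, and for the lower-contour and ``every allocated item carries an ad'' claims in Part~1, it follows the paper's proof essentially step for step. The genuine difference is in how you prove that $\maxquality(\type)$ and $K^-(\type)$ are increasing on $[0,\zerotype)$.

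\emph{The paper's route.} It argues by contradiction. Suppose $\type_L<\type_H<\zerotype$ and $\type_L$ receives a strictly larger bundle. Take $\type_L$'s highest item $i_L$ and its ad $j_L$, and split into two cases according to whether $j_L$ is also used by $\type_H$. In each case it combines two facts:
\begin{itemize}
\item under \eqref{equationNON}, $\virtual(\type)[\quality(i)-\disutil(j)]+\rev(j)$ is increasing in $\type$ for each fixed pair;
\item an order comparison against $\type_H$'s $\rho_{\type_H}$-permutation.
\end{itemize}
Together these exhibit an unallocated pair for $\type_H$ with nonnegative virtual surplus, which is a contradiction.

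\emph{Your route.} You observe that the marginal surplus at rank $K$, namely $\virtual(\type)\quality(K)+\rho^{(K)}_\type$, is the $K$-th largest element of the finite family $\{\virtual(\type)[\quality(K)-\disutil(j)]+\rev(j)\}_{j}$. Each member of this family is increasing in $\type$, so the marginal surplus at every rank is increasing in $\type$. The sets of allocated ranks are therefore nested across negative virtual types.

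\emph{Comparison.} Both arguments rest on the same key fact, the pairwise monotonicity guaranteed by \eqref{equationNON}. Yours is direct rather than by contradiction and avoids the case analysis. It also delivers the slightly stronger, rank-by-rank monotonicity statement. This is essentially the structure the paper itself uses, via a piecewise-affine slope argument, in the supplemental treatment of costly items. The paper's exchange argument instead tracks specific items and ads, which makes explicit which pair the higher type would be forgoing.

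\emph{A small point left implicit.} You define $\rho^{(K)}_\type$ by ranking over all of $\adset$. This matches the assortative policy because, for $\virtual(\type)<0$, every ad has $\rho_\type(j)>0$: $\rev(j),\disutil(j)\ge0$ and they are not both zero. You should also state that $K$ ranges over $1,\ldots,\min(\itemnumber,\adnumber)$.
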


{\autoref{theorem} highlights the departure from standard models.
In standard screening, the seller finds it optimal to exclude consumers with negative virtual values, because serving such consumers raises information rents by more than the increment of total surplus.
In our model, the platform can earn ad revenue by serving consumers with negative virtual values.
However, whenever the platform does so, it trades off ad revenue against information rents: 
Allocating items bundled with ads increases ad revenue but raises information rents and reduces the platform's sales revenue, even though nuisance from those ads reduces information rents.}

{Part 1 illustrates how this trade-off shapes the allocation of content and ads for negative virtual types.
	The platform may serve these types, but allocate only lower-contour bundles, with every allocated item matched with an ad.
This is because low-quality items create smaller rents, and ad nuisance further limits rents by reducing net quality. Under condition \eqref{equationNON}, absent ad revenue the platform has no incentive to serve negative virtual types.
The type-dependent threshold $\maxquality(\type)$ is increasing in $\type$.
Thus, higher types gain access to higher-quality items, in addition to lower-quality items, and correspondingly pay higher prices.} %
{Such a contract might arise as a plan that has a low price but excludes access to some premium content.}

{%
The extent to which the platform serves negative virtual types depends on ad characteristics.\space
Part 1 of \autoref{theorem} includes two extreme cases: If all ads generate zero revenue and disutility, the threshold $\maxquality(\type)$ is so low for every $\type < \zerotype$ that the platform excludes all negative virtual types, as in standard screening.\space
If all ads generate high revenues, the threshold is so high that consumers receive all items regardless of their types.\space
The next result connects these extremes: The platform expands content allocation to negative virtual types if ads are more profitable or annoying.
}

\begin{corollary}\label{corollaryThreshold}
{Consider two sets of ads, $\{(\rev(j), \disutil(j))\}_{j \in \adset}$ and $\{(\rev'(j), \disutil'(j))\}_{j \in \adset}$, such that $\rev'(j) \ge \rev(j)$ and $\disutil'(j) \ge \disutil(j)$ for every $j \in \adset$.
Then, for every $\type < \zerotype$, the quality threshold $\maxquality(\type)$ in the optimal mechanism is greater under $\{(\rev'(j), \disutil'(j))\}_{j \in \adset}$ than under $\{(\rev(j), \disutil(j))\}_{j \in \adset}$.}
\end{corollary}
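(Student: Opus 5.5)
The plan is to fix a negative virtual type $\type<\zerotype$ and compare the item bundles that \autoref{lemma0} prescribes under the two ad environments. I assume both environments satisfy \eqref{equationNON}, so that \autoref{lemma0} applies to each. I also use the selection in \autoref{lemma0}, which includes an item whenever its contribution to virtual surplus is nonnegative, so that ties are resolved the same way under both environments.

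\emph{Step 1: virtual ad profits rise pointwise.} Because $\virtual(\type)<0$, the virtual ad profit $\rho_\type(j)=\rev(j)-\virtual(\type)\disutil(j)$ is increasing in both $\rev(j)$ and $\disutil(j)$. Hence $\rho'_\type(j)\ge\rho_\type(j)$ for every $j\in\adset$. For $k=1,\ldots,\itemnumber$, let $\rho_{(k)}$ denote the $k$-th largest value of $\rho_\type$ over $\adset$, truncated at zero (an unmatched item counts as $0$), and set $\rho_{(k)}=0$ when $k>\adnumber$. Let $\rho'_{(k)}$ be defined in the same way from $\rho'_\type$. A standard order-statistic argument shows that a pointwise increase of a function raises each of its order statistics: at least $k$ ads have $\rho'_\type\ge\rho_{(k)}$. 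Truncation at zero preserves this, so $\rho'_{(k)}\ge\rho_{(k)}$ for every $k$.

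\emph{Step 2: the allocation is a threshold in $k$.} Under the $\rho_\type$-negative assortative policy, item $k$ is paired with the ad whose score is $\rho_{(k)}$, and it is allocated if and only if
\[
g(k):=\virtual(\type)\quality(k)+\rho_{(k)}\ge 0 .
\]
Since $\virtual(\type)<0$ and $\quality(k)$ increases in $k$, the first term decreases in $k$. The second term $\rho_{(k)}$ also decreases in $k$. Hence $g$ decreases in $k$, and the allocated set is $\{1,\ldots,K^-(\type)\}$ with $K^-(\type)=\max\{k:g(k)\ge0\}$, or $0$ if no such $k$ exists. This is the lower-contour bundle with cutoff $\maxquality(\type)=\quality(K^-(\type))$, using the normalization $\maxquality(\type)=-1$ when the bundle is empty.

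\emph{Step 3: comparison.} By Step 1, $g'(k)\ge g(k)$ for every $k$. Therefore $g(K^-(\type))\ge0$ implies $g'(K^-(\type))\ge0$, and so $K'^-(\type)\ge K^-(\type)$. Since $\quality(\cdot)$ is increasing, the cutoff under the primed environment is weakly greater than the original one, which is the claim.

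I expect the main obstacle to be bookkeeping rather than economics. First, ties and items of zero quality paired with no ad give $g(k)=0$, so the selection convention in \autoref{lemma0} must be applied identically under both environments for the cutoff comparison to be well defined. Second, one must confirm that truncation at zero and the case $k>\adnumber$ interact correctly with the order-statistic monotonicity. I would handle both by working directly with the truncated order statistics $\max(0,\cdot)$ defined in Step 1 and checking the case $k>\adnumber$ separately.
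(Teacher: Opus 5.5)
Your proof is correct and follows essentially the paper's argument. You use the pointwise increase of $\rho_\type$ when $\virtual(\type)<0$ and the monotonicity of its order statistics to get an itemwise increase in the contribution $\virtual(\type)\quality(k)+\rho_{(k)}$, which gives $\bundle(\type)\subseteq\bundle'(\type)$. The only cosmetic difference is that you pass explicitly through the cutoff index $K^-(\type)$, whereas the paper states the bundle inclusion directly and reads off the threshold.
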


{%
Higher ad revenues raise virtual ad profits directly.
Higher disutility levels also raise virtual ad profits for negative virtual types, because the nuisance of an ad reduces the information rents of higher types by more than it reduces type $\type$'s willingness to pay.
In both cases, each item-ad pair becomes more profitable to allocate, and thus the platform expands the set of items offered to negative virtual types.}

The platform's problem for positive virtual types also differs from standard screening (Part 2).
The platform allocates all items to consumers with positive virtual types, because the marginal cost of production is zero.\footnote{Another reason the platform does not screen consumers with positive virtual types by quality is that consumers incur no cost of consuming items. \autoref{sectionItemCost} relaxes both cost assumptions; together with \autoref{sectionNoncont}, it shows that attention costs can induce quality screening for positive virtual types whether or not consumption is contractible.}
This property also allows us to select an optimal mechanism that uses negative assortative matching for every positive virtual type.
 Yet advertising exposure varies with type: As a consumer's type increases, the reduction in their willingness to pay due to ad nuisance becomes more important than the reduction in the information rents of higher types.
As a result, consumers with higher types are exposed to fewer ads.
Thus, the screening of positive virtual types operates through advertising exposure. 

Our results also predict how the platform distorts advertising allocation relative to the first best.
{For consumers with positive virtual types, the virtual ad profit of each ad is greater than its contribution to total surplus (i.e., $\rho_\type(j) \ge \rev(j) - \type \disutil(j)$); thus, the platform shows more ads than the efficient level to each type, except for the highest type, which receives the first-best set of ads.}

For consumers with negative virtual types, Part 1 shows that they receive too few items. 
This implies that, contrary to the case of positive virtual types, the platform may serve fewer ads than the efficient level, because ads are bundled with items.
Moreover, the platform is biased toward high-nuisance ads. To see this, suppose that all ads generate the same revenue.
Then, given the number of ads to be displayed, the platform prefers displaying ads with the highest nuisance, even though total surplus is maximized by allocating ads with the lowest nuisance.
This is because for negative virtual types, virtual ad profits increase with disutility levels.
Such a distortion is absent for positive virtual types, for which virtual ad profits are decreasing in disutility levels.

The following example illustrates the typical structure and the distortion in the allocation of items and ads induced by the platform's optimal mechanism.
\begin{example}
Consumer types are distributed on $[0,1]$ according to $\typemeas(\type)=\type^3$, so the virtual type is given by $\virtual(\type)=\frac{4\type^3-1}{3\type^2}$ and $\zerotype\approx 0.63$.
The platform has the same number of items and ads, i.e., $I=J\geq 2$; each item $i=1,\ldots,\itemnumber$ has quality $\quality(i)=1+\frac{i-1}{\itemnumber-1}$; and each ad $j=1,\ldots,\itemnumber$ generates the same revenue $\rev>0$ and has disutility level $\disutil(j)=\frac{j-1}{\itemnumber-1}$.
The example is a discrete version of a situation in which item quality and ad disutility are uniformly distributed on $[1,2]$ and $[0,1]$, respectively.
Hereafter, we identify each item and ad by its quality level and disutility level, respectively; the figure below sets $\rev=0.1$ and depicts the continuum limit, $I=J\to\infty$.

We use \autoref{lemma0} to derive the optimal allocation policy.
For consumers with negative virtual types, the platform matches every item $\quality$ with ad $\disutil=2-\quality$, and allocates an item-ad pair $(\quality,\disutil)$ if and only if $\virtual(\type)\,(\quality-\disutil)+\rev\ge 0$, i.e.,
$\quality\le \overline{\quality}^{P}(\type):=1+\min\left\{1,\frac{\rev}{2|\virtual(\type)|}\right\}$.
Correspondingly, the platform allocates to each type $\type$ all ads whose disutility levels exceed $\underline{\disutil}^{P}(\type)=2-\overline{\quality}^{P}(\type)$.
For consumers with positive virtual types, the platform allocates all items and displays only ads that yield a {nonnegative} virtual ad profit, i.e., $\rev-\disutil\,\virtual(\type)\ge 0$, or equivalently
$\disutil\le \overline{\disutil}^{P}(\type):=\min\left\{1,\frac{\rev}{\virtual(\type)}\right\}$.

\autoref{example1} depicts the resulting policy.
The left panel shows that in the optimal mechanism, each consumer receives all items whose quality levels lie in $[1,\overline{\quality}^{P}(\type)]$.
Because the first best is to allocate all items to all types, the second-best allocation to negative virtual types reflects the standard underprovision of goods. 
But in this example, unlike standard screening, the platform allocates some items to every type.

The right panel presents the set of ads allocated to each type $\type$ under the first best and the second best. Under the first best (i.e., the orange dashed area), the platform allocates all ads whose disutility levels are below $\overline{\disutil}^{FB}(\type)=\min\left\{1,\frac{\rev}{\type}\right\}$.
Hence, the number of ads decreases with $\type$: As all ads generate the same revenue in this example, the first-best policy allocates fewer annoying ads to minimize welfare loss.

The right panel also presents the second-best ad policy, where each type $\type$ receives all ads whose disutility levels lie in $[\underline{d}^{P}(\theta),\overline{d}^{P}(\theta)]$ (i.e., the light blue area).\footnote{{Here, $\overline{d}^{P}(\theta)=1$ for negative virtual types and $\underline{d}^{P}(\theta)=0$ for positive virtual types.}}
For low types, the platform distorts the advertising policy in terms of both  the
 disutility levels and volume of advertising.
Given a volume of advertising, the platform allocates ads with the highest disutility levels, although total surplus would be maximized if it allocated ads with the lowest disutility levels.
At the same time, these consumers are exposed to fewer ads (i.e., $1-\underline{\disutil}^{P}(\type)<\overline{\disutil}^{FB}(\type)$), because they view fewer items.

For positive virtual types, the distortion is in advertising volume: 
The platform prioritizes allocating ads with low disutility levels, but it now allocates more ads than under the first best.
Negative virtual types close to $\zerotype$ may also be exposed to more ads than under the first best, but for them the platform continues to prioritize ads with high disutility levels.
Overall, the example illustrates that our model can capture rich ways in which the platform distorts its advertising policy away from the first best.

\begin{figure}[H]
\centering
\def\r{0.1}
\begin{minipage}{0.495\textwidth}
\centering
\begin{tikzpicture}
\begin{axis}[
  width=\linewidth, height=0.72\linewidth,
  xmin=0, xmax=1,
  ymin=1, ymax=2,
  xlabel={$\theta$},
  ylabel={$q$},
  ylabel style={rotate=-90, anchor=center},
  xtick={0,0.63,1},
  xticklabels={$0$,$\zerotype$,$1$},
  ytick={1,2},
  clip=true,
  axis on top
]
  \addplot[very thick, dashed, name path=curve,
           domain=0.001:0.61771, samples=200]
           {1 + 1.5*\r*x^2/(1 - 4*x^3)};

  \addplot[name path=baseline1, domain=0:0.61771, samples=2, draw=none] {1};
  \addplot[name path=baseline2, domain=0.61771:1, samples=2, draw=none] {1};
  \addplot[name path=topline, domain=0.61771:1, samples=2, draw=none] {2};
  \addplot[fill=blue!15, draw=none] fill between[of=curve and baseline1];
  \addplot[fill=blue!15, draw=none] fill between[of=topline and baseline2];

  \addplot[very thick, dashed] coordinates {(0.61771,2) (1,2)};

  \addplot[domain=0:1, samples=2, mark=none, transform canvas={yshift=0.8pt}] {2};
  \node[anchor=south east] at (axis cs:0.99,2) {$q=2$};

  \node at (axis cs:0.4,1.45) {$\overline{q}^{P}(\theta)$};
  \addplot[dashed] coordinates {(0.62996,1) (0.62996,2)};
\end{axis}
\end{tikzpicture}
\end{minipage}
\hfill
\begin{minipage}{0.495\textwidth}
\centering
\begin{tikzpicture}
\begin{axis}[
  width=\linewidth, height=0.72\linewidth,
  xmin=0, xmax=1,
  ymin=0, ymax=1,
  xlabel={$\theta$},
  ylabel={$d$},
  ylabel style={rotate=-90, anchor=center},
  xtick={0,0.63,1},
  xticklabels={$0$,$\zerotype$,$1$},
  ytick={1}
]
  \addplot[thick, domain=0:\r, samples=2] {1};
  \addplot[thick, domain=\r:1, samples=200] {\r/x};

  \addplot[domain=0.001:0.61771, samples=150] {1 - 1.5*\r*x^2/(1 - 4*x^3)};
  \addplot[domain=0.65598:1, samples=150] {3*\r*x^2/(4*x^3 - 1)};
  \addplot[dashed, domain=0:1, samples=2] {1};

  \addplot[name path=leftC,  domain=0.001:0.61771, samples=150, draw=none] {1 - 1.5*\r*x^2/(1 - 4*x^3)};
  \addplot[name path=rightC, domain=0.65598:1, samples=150, draw=none] {3*\r*x^2/(4*x^3 - 1)};
  \addplot[name path=topL, domain=0:0.61771,  samples=2, draw=none] {1};
  \addplot[name path=botR, domain=0.65598:1,  samples=2, draw=none] {0};
  \addplot[fill=blue!15, draw=none] fill between[of=topL and leftC];
  \addplot[fill=blue!15, draw=none] fill between[of=rightC and botR];
  \addplot[name path=topMid, domain=0.61771:0.65598, samples=2, draw=none] {1};
  \addplot[name path=botMid, domain=0.61771:0.65598, samples=2, draw=none] {0};
  \addplot[fill=blue!15, draw=none] fill between[of=topMid and botMid];

  \addplot[name path=bottomPartial, domain=0:\r,   samples=2,   draw=none] {0};
  \addplot[name path=bottomAll,    domain=0:1,     samples=2,   draw=none] {0};
  \addplot[name path=topCap,       domain=0:\r,    samples=2,   draw=none] {1};
  \addplot[name path=fbCurve,      domain=\r:1,    samples=200, draw=none] {\r/x};
  \addplot[pattern=north east lines, pattern color=orange, draw=none]
    fill between[of=topCap and bottomPartial];
  \addplot[pattern=north east lines, pattern color=orange, draw=none]
    fill between[of=fbCurve and bottomAll];

  \node[] at (axis cs:0.85, 0.35) {\footnotesize $\overline{d}^{P}(\theta)$};
  \node[] at (axis cs:0.32,0.8) {\footnotesize $\underline{d}^{P}(\theta)$};
  \node[] at (axis cs:0.34,0.47) {\footnotesize $\overline{d}^{FB}(\theta)$};
  \addplot[dashed] coordinates {(0.62996,0) (0.62996,1)};
\end{axis}
\end{tikzpicture}
\end{minipage}
\caption{The left panel depicts the set of allocated items in terms of their quality for each type under the second best.
The right panel depicts the set of allocated ads in terms of their disutility levels for each type under the first best (dashed orange area) and the second best (light blue area).
The vertical dashed line indicates $\zerotype$.
}
\label{example1}
\end{figure}

The example also illustrates how advertising affects consumers.
To this end, we compare the above platform with a \emph{subscription platform}, which hosts the same set of items but no ads.
Its optimal mechanism is a posted price of $\frac{3}{2}\zerotype$ for the set of all items, which consumers with types above $\zerotype$ purchase.
Let $CS_S \approx 0.239$ denote the resulting aggregate consumer surplus, and let $CS_A(\rev)$ denote the aggregate consumer surplus under the optimal mechanism of the ad-supported platform, viewed as a function of the per-ad revenue $\rev$.
\autoref{figureCScomparison} depicts the comparison: There is a threshold $\rev^* \approx 0.23$ such that advertising lowers consumer surplus for $\rev<\rev^*$ and raises it for $\rev>\rev^*$.
In particular, at the depicted value $\rev=0.1$, consumers are on average worse off than under the subscription platform (the right panel of \autoref{figureCScomparison}).

{Advertising affects consumers in two opposing ways: Consumers benefit when the platform uses ads to expand the item allocation to negative virtual types, but for a fixed item allocation, advertising harms consumers because of its nuisance.
	In this example, the negative effect dominates when $\rev$ is small, whereas the positive effect dominates when $\rev$ is large, which generates the single crossing between $CS_A(\rev)$ and $CS_S$.
	However, such a single crossing is not general.\footnote{For example, suppose that $\type \sim U[0,1]$ and the platform hosts two items with $(\quality(1),\quality(2))=(1.01,4)$ and two ads, each with disutility $1$ and revenue $\rev$.
In this case, as $\rev$ increases, $CS_A(\rev)$ crosses $CS_S$ from above at $\rev_1\approx0.02$ and from below at $\rev_2\approx0.84$.}}
Finally, the comparison of total surplus is also ambiguous; depending on the primitives, either platform can attain higher total surplus.

\medskip

\begin{figure}[H]
\centering
\begin{minipage}{0.495\textwidth}
\centering
\begin{tikzpicture}
\begin{axis}[
  width=\linewidth, height=0.8\linewidth,
  xmin=0, xmax=3,
  ymin=0.22, ymax=0.43,
  xlabel={$r$},
  ylabel={$CS$},
  ylabel style={rotate=-90, anchor=center},
  xtick={0, 1, 2, 3},
  ytick={0.239118, 0.3, 0.35, 0.4},
  yticklabels={$CS_S$, $0.3$, $0.35$, $0.4$},
  legend pos=north west,
  legend style={draw=none, font=\footnotesize},
  axis on top
]
  \addplot[very thick, myblue] coordinates {(0.0,0.239118) (0.01,0.239101) (0.02,0.239065) (0.03,0.239019) (0.04,0.238968) (0.05,0.238914) (0.06,0.238860) (0.07,0.238808) (0.08,0.238760) (0.09,0.238716) (0.1,0.238679) (0.11,0.238649) (0.12,0.238627) (0.13,0.238614) (0.14,0.238611) (0.15,0.238618) (0.16,0.238637) (0.17,0.238667) (0.18,0.238709) (0.19,0.238763) (0.2,0.238831) (0.21,0.238912) (0.22,0.239007) (0.23,0.239117) (0.24,0.239240) (0.25,0.239379) (0.26,0.239532) (0.27,0.239701) (0.28,0.239885) (0.29,0.240085) (0.3,0.240301) (0.31,0.240532) (0.32,0.240780) (0.33,0.241044) (0.34,0.241325) (0.35,0.241622) (0.36,0.241935) (0.37,0.242265) (0.38,0.242611) (0.39,0.242975) (0.4,0.243355) (0.42,0.244164) (0.44,0.245041) (0.46,0.245983) (0.48,0.246990) (0.5,0.248063) (0.52,0.249199) (0.54,0.250398) (0.56,0.251658) (0.58,0.252977) (0.6,0.254355) (0.65,0.258040) (0.7,0.262040) (0.75,0.266314) (0.8,0.270813) (0.85,0.275480) (0.9,0.280254) (0.95,0.285064) (1.0,0.289831) (1.05,0.294488) (1.1,0.299025) (1.15,0.303445) (1.2,0.307754) (1.25,0.311955) (1.3,0.316054) (1.35,0.320053) (1.4,0.323957) (1.45,0.327768) (1.5,0.331491) (1.55,0.335128) (1.6,0.338683) (1.65,0.342158) (1.7,0.345556) (1.75,0.348880) (1.8,0.352131) (1.85,0.355314) (1.9,0.358429) (1.95,0.361479) (2.0,0.364465) (2.1,0.370258) (2.2,0.375822) (2.3,0.381170) (2.4,0.386316) (2.5,0.391271) (2.6,0.396046) (2.7,0.400650) (2.8,0.405094) (2.9,0.409386) (3.0,0.413533)};
  \addlegendentry{$CS_A(\rev)$}
  \addplot[very thick, myred, dashed, domain=0:3, samples=2] {0.239118};
  \addlegendentry{$CS_S$}
\end{axis}
\end{tikzpicture}
\end{minipage}
\hfill
\begin{minipage}{0.495\textwidth}
\centering
\begin{tikzpicture}
\begin{axis}[
  width=\linewidth, height=0.8\linewidth,
  xmin=0, xmax=0.5,
  ymin=0.2380, ymax=0.2485,
  xlabel={$r$},
  xtick={0, 0.1, 0.23, 0.4},
  xticklabels={$0$, $0.1$, $\rev^*$, $0.4$},
  ytick={0.239118, 0.244, 0.248},
  yticklabels={$CS_S$, $0.244$, $0.248$},
  axis on top
]
  \addplot[very thick, myblue] coordinates {(0.0,0.239118) (0.01,0.239101) (0.02,0.239065) (0.03,0.239019) (0.04,0.238968) (0.05,0.238914) (0.06,0.238860) (0.07,0.238808) (0.08,0.238760) (0.09,0.238716) (0.1,0.238679) (0.11,0.238649) (0.12,0.238627) (0.13,0.238614) (0.14,0.238611) (0.15,0.238618) (0.16,0.238637) (0.17,0.238667) (0.18,0.238709) (0.19,0.238763) (0.2,0.238831) (0.21,0.238912) (0.22,0.239007) (0.23,0.239117) (0.24,0.239240) (0.25,0.239379) (0.26,0.239532) (0.27,0.239701) (0.28,0.239885) (0.29,0.240085) (0.3,0.240301) (0.31,0.240532) (0.32,0.240780) (0.33,0.241044) (0.34,0.241325) (0.35,0.241622) (0.36,0.241935) (0.37,0.242265) (0.38,0.242611) (0.39,0.242975) (0.4,0.243355) (0.42,0.244164) (0.44,0.245041) (0.46,0.245983) (0.48,0.246990) (0.5,0.248063)};
  \addplot[very thick, myred, dashed, domain=0:0.5, samples=2] {0.239118};
  \addplot[only marks, mark=*, mark size=1.8pt, black]
    coordinates {(0.230, 0.239118)};
  \addplot[dashed, gray] coordinates {(0.230, 0.2380) (0.230, 0.239118)};
\end{axis}
\end{tikzpicture}
\end{minipage}
\caption{Aggregate consumer surplus under the ad-supported platform, $CS_A(\rev)$ (solid blue), and under the subscription platform, $CS_S \approx 0.239$ (dashed orange), as functions of the per-ad revenue $\rev$.
The right panel zooms in on $\rev \in [0, 0.5]$: Advertising lowers consumer surplus for $\rev < \rev^* \approx 0.23$ and raises it for $\rev > \rev^*$.
}
\label{figureCScomparison}
\end{figure}

\end{example}

\section{Application: Rationalizing Platform Menus}\label{sectionApplications}

We now use the characterization above to rationalize contracts used by ad-supported platforms in practice.
The optimal mechanism can screen consumers through content access, advertising exposure, or both.\footnote{To be precise, the platform price discriminates in all three cases, so ``content access differentiation'' should be read as ``content access and price differentiation.''}
For each screening pattern, we first present the condition under which the optimal mechanism exhibits it.
We then point to platforms whose menus exhibit the corresponding pattern; Appendix~\ref{app:platform_menus} documents these menus in more detail.

Throughout the section, we impose two assumptions. 
First, the number of ads exceeds the number of items, $\adnumber\ge\itemnumber$; let $\rev(\itemnumber)$ denote the $\itemnumber$-th highest ad revenue in $\adset$.
Second, there is a constant $\cost>0$ such that $\disutil(j)=\cost\rev(j)$ for every $j\in\adset$. 
The parameter $\cost$ is the nuisance cost per unit of advertising revenue: Higher $\cost$ means that ads are more annoying relative to the revenue they generate. The specification captures the idea that more intrusive ads---longer, louder, less skippable, or more privacy-invasive---tend to generate more revenue but also impose greater nuisance.

Under this assumption, the $\type$-virtual ad profit of ad $j$ is $\rho_\type(j)=\rev(j)(1-\cost\virtual(\type))$, whose sign depends on $\type$ but not on $j$.
When a cutoff exists, let $\type^*>\zerotype$ solve $1-\cost\virtual(\type^*)=0$.
Then, all ads have positive virtual ad profit for types below $\type^*$ and negative for types above it. 
The two parameters, $\cost$ and $\rev(\itemnumber)$, generate three regions.

\paragraph{Case 1: Differentiated Content Access.}
Suppose $\cost<1$. 
Then, ad nuisance is low enough that every ad has positive virtual ad profit for every type, so whenever an item is allocated it is allocated with an ad. 
The platform can still differentiate by content access and price, with low types receiving no items or only items below a quality threshold and high types receiving all items.
However, the menu does not contain an ad-free contract. 
This fits menus in which a consumer can pay to access more content but cannot pay to remove ads.
For example, on news platforms such as the New York Times and the Financial Times, subscribers gain access to articles, archives, newsletters, or related products, but advertising is never eliminated.

\paragraph{Case 2: Differentiated Ad Exposure.}
Suppose instead that ads are annoying, $\cost>1$, but even the $\itemnumber$-th most profitable ad is lucrative,
\begin{equation*}
\rev(\itemnumber) > \frac{-\virtual(0)\quality(\itemnumber)}{1-\cost\virtual(0)},
\quad\text{equivalently}\quad
\virtual(0)\quality(\itemnumber)+\rev(\itemnumber)\bigl(1-\cost\virtual(0)\bigr)>0,
\end{equation*}
where $\virtual(0)<0$ is the virtual value of the lowest type and $\quality(\itemnumber)$ is the highest item quality. 
The condition means that even the lowest-type consumers generate a positive virtual surplus when they receive the highest-quality item matched with the ad that has the $\itemnumber$-th highest revenue.
The platform then finds it optimal to allocate all items to all consumers. 
At the same time, because $\cost>1$, the cutoff $\type^*$ is interior, and ads have negative virtual ad profit for types above it, so those types pay to remove ads. The menu differentiates through advertising exposure rather than content access, as on platforms such as YouTube and X, whose free or low-priced tiers offer broad access while premium tiers reduce or eliminate advertising.

\paragraph{Case 3: Differentiation in Both Dimensions.}
The remaining region is
\begin{equation}\label{equationNetflix}
\cost>1 \quad\text{and}\quad \rev(\itemnumber) < \frac{-\virtual(0)\quality(\itemnumber)}{1-\cost\virtual(0)}.
\end{equation}
In this case, the platform no longer finds it profitable to allocate all items to all consumers. 
As a result, the optimal menu differentiates along both content access and advertising (Figure~\ref{figureMenu}). 
Let $\type^{**} \in (0, \zerotype)$ denote the type at which the highest-quality item, matched with the ad with the $\itemnumber$-th highest revenue, first generates nonnegative virtual surplus, i.e., $\virtual(\type^{**})[\quality(\itemnumber) - \cost\rev(\itemnumber)] + \rev(\itemnumber) = 0$ (\autoref{sectionAppendixAB}).
Types above $\type^*$ receive all items without ads; types in $[\type^{**}, \type^*)$, including the negative virtual types in $[\type^{**}, \zerotype)$, receive all items with ads; and types below $\type^{**}$ receive a lower-contour bundle that excludes the highest-quality item, or are excluded, with every allocated item bundled with an ad.

This pattern appears on streaming services such as Netflix, Spotify, and Peacock, whose lower tiers combine limited access with advertising and whose higher tiers provide broader access without ads.
What distinguishes this region is that both cutoffs are interior: ads are lucrative enough that the platform still serves negative virtual types, but not lucrative enough to grant full access to the lowest types, while $\cost>1$ keeps $\type^*$ below $\typemax$.
Content access is the screening margin below $\type^{**}$, ad exposure is the screening margin at $\type^*$, and the middle tier pools consumers on both sides of $\zerotype$ at one price.

\begin{figure}[H]
  \centering
  \begin{tikzpicture}[x=13.5cm, y=2cm, thick]
    \def\thetaLow{0.4}
    \def\thetaZero{0.5}
    \def\thetaStar{0.7}
    \draw[thick] (0,0) -- (1,0);
    \foreach \x in {0, \thetaLow, \thetaStar, 1} {
      \draw[thick] (\x,0.1) -- (\x,-0.1);
    }
    \draw[thin, densely dashed] (\thetaZero,0.1) -- (\thetaZero,-0.1);
    \node[above=0.7cm] at (0,-0.25) {$0$};
    \node[above=0.7cm] at (\thetaLow+0.005,-0.25) {$\theta^{**}$};
    \node[above=0.7cm] at (\thetaZero+0.005,-0.25) {$\theta^0$};
    \node[above=0.7cm] at (\thetaStar+0.005,-0.25) {$\theta^*$};
    \node[above=0.7cm] at (1,-0.25) {$1$};
    \node[below=0.3cm, align=center] at (0.2,0) {$\bundle(\type)= \{i : \quality(i) \le \maxquality(\type)\} \subsetneq \qset$\\with ads};
    \node[below=0.3cm, align=center] at (0.55,0) {All items\\with ads};
    \node[below=0.3cm, align=center] at (0.85,0) {All items\\without ads};
  \end{tikzpicture}
  \caption{Optimal mechanism under condition~\eqref{equationNetflix}. The cutoff $\type^{**} \in (0, \zerotype)$ is defined in \autoref{sectionAppendixAB}; the tier of all items with ads contains $\zerotype$ in its interior.}
  \label{figureMenu}
\end{figure}

\paragraph{Comparison with Alternative Explanations.}
Each menu form can also arise from other mechanisms. First, differentiation in content access alone can arise from standard second-degree price discrimination:
	A platform may offer vertically differentiated tiers whose lower-priced tiers restrict content access \citep*{mussa1978,deneckere1996damaged}.
Second, differentiation in advertising exposure alone is studied in the freemium literature, in which a platform offers an ad-supported free tier alongside an ad-free paid tier.
This literature analyzes when the platform offers both tiers rather than relying exclusively on either advertising or subscriptions \citep*{sato2019freemium,zennyo2020freemium,cai2023freemium}.
Third, differentiation in both content access and advertising exposure can arise from two-sided incentives in media versioning \citep*{lin2020two}; an application of multidimensional screening in which consumers differ separately in content valuation and ad aversion (see \citealt*{rochetstole2003}); or behavioral heterogeneity.
For example, \citet*{acemoglu2024online} obtain subscription, ad-based, and mixed plans when users differ in their sophistication about digital advertising.\footnote{A related but more distant mechanism is provided by \citet*{corrao2023nonlinear}, who study goods whose usage generates revenue for the seller and that buyers can freely underutilize. Their mechanism rationalizes multipart tariffs and is relevant to freemium and usage-based pricing, but does not feature advertising nuisance or ad-content assignment.}

Our mechanism differs in two respects.
First, it generates all three menu forms from a one-dimensional type without separate ad aversion or heterogeneous sophistication.
The dual role of ads in generating revenue and imposing nuisance is crucial: Without nuisance, consumers would not pay to avoid ads; without ad revenue, the platform would not show ads.
The platform uses ad exposure as a screening device only when ads have both features.
Second, the platform chooses which items and ads to assign within a tier, and, when ad revenue depends on the matched item (\autoref{sectionItemAdRevenue}), our model additionally pins down which ads are paired with which items.
The other models discussed above abstract from this concern.

\section{Platform's Innovation Incentives}\label{sectionInnovation}
The rent extraction-advertising trade-off has an implication for the platform's incentive to improve content quality. To formally investigate this, we extend the model by incorporating the platform's choice of quality profile.

Throughout this section, we fix the number of items, $\itemnumber$.
We also take as given an arbitrary set of ads, $\adset^*$, along with their characteristics, ${(\rev(j), \disutil(j))}_{j \in \adset^*}$.
The set of ads hosted by the platform, which we call the \emph{advertising set}, is a subset of $\adset^*$ and is denoted by $\adset$.
Given $\itemnumber$ and $\adset$, the platform chooses the quality of each item, then adopts the optimal mechanism characterized in \autoref{lemma0}.
Our goal is to examine how the platform's optimal quality choice depends on the advertising set.

We introduce several pieces of notation and define the platform's problem.
First, let $\qvecset$ denote the set of feasible quality profiles:
\begin{align*}
\qvecset  := \{ (\quality(1),..., \quality(\itemnumber)) \in \mathbb{R}^\itemnumber_+: \quality(\itemnumber) \ge \cdots \ge \quality(1) \ge \max_{j \in \adset^*} \disutil(j)\}.
\end{align*}
The inequality $\quality(1) \ge \max_{j \in \adset^*} \disutil(j)$ implies that for any feasible quality profile, the net quality of any item-ad pair is nonnegative.

For any quality profile $\qvec \in \qvecset$ and advertising set $\adset\subseteq \adset^*$, let $\profit(\qvec, \adset)$ denote the platform's revenue from the optimal mechanism.
Let $C(\qvec) \in \R$ denote the cost of choosing a quality profile $\qvec$.
Assume that $C(\cdot)$ is submodular, which holds, e.g., if $C(\cdot)$ is additively separable, i.e., $C(\qvec) = \sum^{\itemnumber}_{i=1} C_i(\quality(i))$. Finally, for any $x, y \in \R^K_+$, we write $x  \ge y$ to mean $x_i \ge y_i$ for all $i=1,..., K$. 

Our specification implies that the cost of increasing the quality of a given item does not depend on the number of consumers who receive the item.
This assumption is natural for digital goods, which feature free replicability.

Define $\qvecset^*(\adset)$ as the set of solutions to the platform's quality choice problem:
\begin{equation}
\qvecset^*(\adset):= \arg\max_{\qvec \in \qvecset} \profit(\qvec, \adset) - C(\qvec).\tag{Q}\label{Q}
\end{equation}
Assume that the problem has a solution, i.e., $\qvecset^*(\adset)\not=\emptyset$ for any $\adset \subseteq \adset^*$.

The following lemma describes how a change in the advertising set affects the platform’s quality choice through a shift in virtual ad profits.
We say that advertising set $\adset_H$ \emph{attains uniformly higher virtual ad profits than} $\adset_L$ if for every type $\type$ and $i=1,..., \itemnumber$, $\max(0,\rho_\type(j^H_{\type, i})) \ge \max(0,\rho_\type(j^L_{\type, i}))$, where for each $x \in \{L, H\}$, $j^x_{\type, i}$ is the ad that generates the $i$-th highest $\type$-virtual ad profit in $\adset_x$ if it exists; otherwise (i.e., if $i>|\adset_x|$), $j^x_{\type, i}=\noad$, so that $\rho_\type(j^x_{\type, i})=0$.

\begin{lemma}\label{lemmaInvest}
If the advertising set $\adset_H$ attains uniformly higher virtual ad profits than $\adset_L$, the platform has a lower incentive to improve content quality under $\adset_H$ than under $\adset_L$, i.e., for any $\qvecH, \qvecL \in \qvecset$ such that $\qvecH \ge \qvecL$,
 we have $\profit(\qvecH, \adset_H)- \profit(\qvecL, \adset_H)\le \profit(\qvecH, \adset_L)- \profit(\qvecL, \adset_L)$.
 Consequently, $\qvecset^*(\adset_H)$ is smaller than $\qvecset^*(\adset_L)$ in the strong set order. 
\end{lemma}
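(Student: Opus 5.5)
The plan is to reduce $\profit(\qvec,\adset)$ to an explicit, type-by-type formula using \autoref{lemma0}, show it has decreasing differences in quality and in the sorted virtual ad profits, and then apply Topkis's theorem.

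\textbf{Step 1: a closed form for $\profit$.} By \autoref{lemma0} and the pointwise reduction to \eqref{equationVirtual3}, $\profit(\qvec,\adset)$ equals $\int_0^1 V_\type(\qvec,\adset)\dint\typemeas(\type)$, where $V_\type$ is the value of \eqref{equationVirtual3}. Let $\rho^+_{\type,k}:=\max(0,\rho_\type(j_{\type,k}))$, with $j_{\type,k}$ the ad with the $k$-th highest $\type$-virtual ad profit in $\adset$ (and $\noad$ if $k>|\adset|$). I will show that
\[
V_\type(\qvec,\adset)=\sum_{k=1}^{\itemnumber} g\bigl(\virtual(\type)\quality(k)+\rho^+_{\type,k}\bigr),\qquad g(x):=\max(0,x).
\]
For $\virtual(\type)<0$, this is exactly the $\rho_\type$-negative assortative policy of \autoref{lemma0}: item $k$ is matched with $j_{\type,k}$ when its profit is nonnegative, and is otherwise left unmatched, which contributes $\rho=0$. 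For $\virtual(\type)\ge 0$, condition \eqref{equationNON} gives $\virtual(\type)\quality(i)+\rho_\type(j)\ge0$ for every pair. Hence $V_\type=\virtual(\type)\sum_k\quality(k)+\sum_k\rho^+_{\type,k}$, which coincides with the formula because every argument of $g$ is nonnegative. I expect this step to be the main obstacle: I must check that the value of \eqref{equationVirtual3} depends on $\adset$ only through the sorted vector $(\rho^+_{\type,k})_k$, which is what the definition of ``uniformly higher virtual ad profits'' controls. The alignment ``$k$-th lowest item with $k$-th highest $\rho^+$'' is itself the content of \autoref{lemma0}, so I will invoke that result rather than reprove it. I also need the formula to be valid for every $\qvec\in\qvecset$, not only the baseline one; this holds because $\qvecset$ keeps items ordered and enforces \eqref{equationNON}.

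\textbf{Step 2: decreasing differences.} Fix $\type$ and $k$, and take $\qvecH\ge\qvecL$. If $\virtual(\type)\ge0$, the term is additively separable in $\quality(k)$ and $\rho^+_{\type,k}$, so the cross difference is zero. If $\virtual(\type)<0$, set $a:=\virtual(\type)\overline{\quality}(k)\le b:=\virtual(\type)\underline{\quality}(k)$. The quality increment is then $g(a+y)-g(b+y)=-[g(b+y)-g(a+y)]$ with $y=\rho^+_{\type,k}$. Since $g$ is convex, $g(b+y)-g(a+y)$ is increasing in $y$, so the increment is decreasing in $y$. Under the hypothesis, $\rho^{+,H}_{\type,k}\ge\rho^{+,L}_{\type,k}$ for all $\type$ and $k$. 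Summing over $k$ and integrating over $\type$ (the type $\zerotype$ has measure zero) yields $\profit(\qvecH,\adset_H)-\profit(\qvecL,\adset_H)\le\profit(\qvecH,\adset_L)-\profit(\qvecL,\adset_L)$.

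\textbf{Step 3: comparative statics.} $\qvecset$ is a sublattice of $\R^\itemnumber$, because componentwise max and min preserve both the ordering $\quality(1)\le\cdots\le\quality(\itemnumber)$ and the lower bound $\quality(1)\ge\max_j\disutil(j)$. By Step 1, $\profit(\cdot,\adset)$ is additively separable across the coordinates $\quality(k)$, hence modular. Since $C$ is submodular, $\profit(\cdot,\adset)-C$ is supermodular on $\qvecset$. Order $\adset_H$ below $\adset_L$; Step 2 then says the objective has increasing differences in $(\qvec,\text{parameter})$. Topkis's monotonicity theorem (Milgrom--Shannon) therefore gives $\qvecset^*(\adset_H)\le\qvecset^*(\adset_L)$ in the strong set order.
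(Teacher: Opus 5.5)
Your proposal is correct and takes essentially the same route as the paper. The paper writes each type's virtual surplus under negative assortative matching as $\sum_k \vs_\type(\quality(k),\rho(k))$ and derives decreasing differences from the submodularity of $\max(0,\virtual(\type)\quality+\rho)$, packaged as \autoref{lemma1}. It then integrates over types and applies monotone comparative statics using the supermodularity of $\profit(\cdot,\adset)-C$. Your closed form in terms of $\rho^+_{\type,k}$, together with the convexity of $g$, is simply an inlined version of that lemma.
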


The intuition is as follows.
According to \autoref{lemma0}, {the platform allocates} the item with the $i$-th lowest quality {together} with the ad with the $i$-th highest virtual ad profit, provided that the latter is nonnegative.
Thus, if $\adset_H$ attains uniformly higher virtual ad profits than $\adset_L$, the platform facing $\adset_H$ finds it more profitable to allocate each matched item-ad pair.
As a result, the platform allocates more items to consumers with negative virtual types.
However, the expanded allocation makes it more costly for the platform to improve item quality, because the loss due to information rents is proportional to item quality.
Therefore, the uniform increase in virtual ad profits discourages the platform from investing in content quality.

\autoref{fig:combined} illustrates \autoref{lemmaInvest} for a platform that hosts one item and one ad.
The ad imposes zero disutility, and thus the virtual ad profit is equal to its ad revenue, $\rev$.
The platform's cost of investment is given by $0.1\quality^2$, and consumer types are uniformly distributed on $[0,1]$.
As ad revenue $\rev$ increases, the platform invests less in item quality (the left panel) and allocates the item to more consumers (the right panel).
Once the ad revenue exceeds a threshold ($\rev = 10/27 \approx 0.37$), the platform sets the item quality to $0$ and allocates the item to all consumers for free.\footnote{{The downward jump arises because the platform's revenue from the optimal mechanism is convex in $\quality$, so the profit, which is the revenue minus a convex investment cost, need not be concave and may have multiple maximizers. In this example, the profit is bimodal in $\quality$, and at the cutoff ad revenue, the platform is indifferent between the two optimal quality levels; as $\rev$ crosses the cutoff, the global maximum switches from the higher level to $\quality = 0$. The jump is not a general feature: For other type distributions, the profit can have a unique maximizer and the optimal quality can decrease continuously in $\rev$.}}
In \autoref{exampleItemCostInvestment}, we allow the platform to incur a positive item-provision cost and show that the optimal quality choice first increases and then decreases with $\rev$.

\begin{figure}[htbp]
    \centering
    \begin{subfigure}[t]{0.46\textwidth}
        \centering
        \begin{tikzpicture}
            \begin{axis}[
                clip=false,
                width=6.5cm,
                height=5cm,
                xlabel={Ad revenue $r$},
                ylabel={Optimal quality $q$},
                grid=none,
                axis lines=left,
                line width=0.08pt,
                tick label style={font=\footnotesize},
                label style={font=\footnotesize},
                enlargelimits=upper
            ]
                \addplot [line width=0.9pt, blue] table {data_points3_global.txt};
                \node[fill=black, circle, inner sep=1.5pt, blue] at (axis cs:0.370370, 1.111111) {};
                \node[fill=black, circle, inner sep=1.5pt, blue] at (axis cs:0.370370, 0) {};
            \end{axis}
        \end{tikzpicture}
        \label{fig:a}
    \end{subfigure}
    \hspace{0.05\textwidth}
    \begin{subfigure}[t]{0.46\textwidth}
        \centering
        \begin{tikzpicture}
            \begin{axis}[
                clip=false,
                width=6.5cm,
                height=5cm,
                xlabel={Ad revenue $r$},
                ylabel={Threshold $\type$},
                grid=none,
                axis lines=left,
                line width=0.08pt,
                tick label style={font=\footnotesize},
                label style={font=\footnotesize},
                enlargelimits=upper,
                ymin=0, ymax=1,
                ytick={0,0.2,0.33,0.5,1}
            ]
                \addplot [name path=phi, line width=0.9pt, blue]
                    table {data_points_phi_global.txt};

                \path [name path=top] (axis cs:0,1) -- (axis cs:1,1);

                \addplot [
                    pattern=north east lines,
                    pattern color=blue,
                    draw=none
                ] fill between[
                    of=top and phi,
                    soft clip={domain=0:1}  %
                ];

                \addplot [line width=0.9pt, blue]
                    table {data_points_phi_global.txt};
                \node[fill=black, circle, inner sep=1.5pt, blue] at (axis cs:0.370370, 0) {};
                \node[fill=black, circle, inner sep=1.5pt, blue] at (axis cs:0.370370, 0.333333) {};
            \end{axis}
        \end{tikzpicture}
        \label{fig:b}
    \end{subfigure}

    \caption{The left panel depicts the optimal quality as a function of $r \in [0,1]$ when $C(\quality) = 0.1\quality^2$ and $\theta \sim U[0,1]$.
At a unique cutoff $r = 10/27 \approx 0.37$, where the optimal quality drops to $0$, there are two optimal quality levels.
The right panel depicts the set of types that receive the item at each $r$, where the blue line captures the lowest type that receives the item.}
    \label{fig:combined}
\end{figure}

\autoref{lemmaInvest} implies that the platform invests less in content quality if it hosts a larger set of ads or faces a higher ad revenue for each ad.
\begin{proposition}\label{propositionInvest0}
Take two advertising sets, $\adset_H$ and $\adset_L$.
Suppose we have (i) $\adset_H \supset \adset_L$ or (ii) we obtain $\adset_H$ by increasing the ad revenue $\rev(j)$ of each ad $j\in \adset_L$ while keeping disutility levels the same.
Then, $\adset_H$ attains uniformly higher virtual ad profits than $\adset_L$, and thus the platform chooses lower quality profiles under $\adset_H$ than under $\adset_L$ in the strong set order.
\end{proposition}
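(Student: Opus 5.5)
The plan is to verify the definition of ``uniformly higher virtual ad profits'' directly in each case and then invoke \autoref{lemmaInvest}. The whole argument reduces to comparing order statistics. Fix a type $\type$ and write $\rho^x_{(1)}\ge \rho^x_{(2)}\ge\cdots$ for the sorted values of $\rho_\type(j)$ over $j\in\adset_x$, padded with zeros (the value of $\noad$) for indices beyond $|\adset_x|$. Since $\max(0,\cdot)$ is increasing, it suffices to show, for each $i=1,\ldots,\itemnumber$,
\[
\max\bigl(0,\rho^H_{(i)}\bigr)\ge \max\bigl(0,\rho^L_{(i)}\bigr).
\]

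\emph{Case (i), $\adset_H\supset\adset_L$.} The virtual ad profits of common ads are identical under both sets, because $\rho_\type(j)=\rev(j)-\virtual(\type)\disutil(j)$ depends only on the ad's own characteristics. The $i$-th largest element of a superset is at least the $i$-th largest element of the subset. To see this, note that among the $i$ ads of $\adset_L$ with the highest profits, all belong to $\adset_H$. Hence $\adset_H$ has at least $i$ ads with profit $\ge\rho^L_{(i)}$, which gives $\rho^H_{(i)}\ge\rho^L_{(i)}$. The only care needed is when $i>|\adset_L|$: then the right-hand side is $\max(0,0)=0$, and the inequality is trivial.

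\emph{Case (ii), revenues raised with disutilities fixed.} Here there is a bijection $j\mapsto j'$ with $\rho_\type(j')=\rho_\type(j)+(\rev'(j)-\rev(j))\ge\rho_\type(j)$ for every $\type$. A pointwise increase of a finite list of numbers weakly increases each order statistic. To prove this, the $i$ ads achieving the top $i$ values in $\adset_L$ map to $i$ ads in $\adset_H$, each with at least as large a profit, so $\rho^H_{(i)}\ge\rho^L_{(i)}$. The padding indices are the same in both sets because their cardinalities coincide.

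Given the uniform dominance established in both cases, \autoref{lemmaInvest} gives the decreasing-differences inequality and the strong-set-order comparison $\qvecset^*(\adset_H)\le\qvecset^*(\adset_L)$. I expect no real obstacle here. The only subtle point is making sure the definition's ``$i$-th highest'' ads with ties and the $\noad$ padding are handled consistently, which the order-statistic formulation above does.
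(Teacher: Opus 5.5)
Your proposal is correct and follows essentially the paper's route. The paper also verifies that each (zero-truncated) $i$-th highest virtual ad profit weakly rises, in case (i) by set expansion and in case (ii) by a pointwise increase in $\rho_\type$, and then applies \autoref{lemmaInvest}; you simply spell out the order-statistic argument and the $\noad$-padding bookkeeping that the paper states in one line.
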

\begin{proof}
If the advertising set expands, the $i$-th highest virtual ad profit increases for every $i$.
Increasing the ad revenue for each ad has the same effect because it increases virtual ad profits \eqref{equationVirtualAd} across all consumer types.
Thus, either change leads to uniformly higher virtual ad profits.
By \autoref{lemmaInvest}, the platform chooses lower quality profiles under $\adset_H$ than under $\adset_L$.
\end{proof}

As we discuss in the next section, \autoref{propositionInvest0} allows us to examine how policy or technological changes surrounding advertising may affect the platform's quality choice.
In addition, \autoref{propositionInvest0} allows us to examine the relation between the platform's business model and its quality choice.
To formalize this idea, we categorize mechanisms into three business models:
The platform adopts a \emph{subscription} model if it displays no ads (which occurs if and only if $\adset = \emptyset$);
a \emph{purely ad-funded} model if the optimal mechanism sets a price of zero for all types;
and a \emph{hybrid} model if the platform adopts neither a subscription nor a purely ad-funded model.

Both the platform's business model and quality choice are endogenous and depend on the advertising set.
Thus, by varying the advertising set, we obtain different pairs of optimal business models and quality profiles.
The following result illustrates the resulting relationship between the platform's business model and quality choice.
\begin{proposition}\label{propositionInvest1}
A platform with a subscription model chooses a higher quality profile than a platform with a hybrid model  (in the strong set order), which, in turn, chooses a higher quality profile than a purely ad-funded platform.
\end{proposition}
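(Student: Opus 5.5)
The plan is to reduce \autoref{propositionInvest1} to \autoref{lemmaInvest}. For each pair of business models, I will show that the advertising set inducing the "more ad-reliant" model attains uniformly higher virtual ad profits than the advertising set inducing the less ad-reliant one. The strong set order conclusion then follows directly from \autoref{lemmaInvest}, and transitivity of the strong set order is not even needed, since each comparison is made separately.

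\emph{Subscription vs.\ hybrid (or any model).} Under the subscription model, $\adset_L=\emptyset$. Then every $j^L_{\type,i}=\noad$, so $\max(0,\rho_\type(j^L_{\type,i}))=0$ for all $\type$ and $i$. For any other advertising set $\adset_H$, $\max(0,\rho_\type(j^H_{\type,i}))\ge 0$ trivially. Hence every advertising set attains uniformly higher virtual ad profits than $\emptyset$, and \autoref{lemmaInvest} gives that the subscription platform chooses a higher quality profile than the hybrid one (and than the purely ad-funded one). This is essentially the case $\adset_H\supset\adset_L=\emptyset$ of \autoref{propositionInvest0}.

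\emph{Hybrid vs.\ purely ad-funded.} This is the substantive step. I will first characterize when the optimal mechanism charges zero to all types. By the transfer formula in \autoref{lemma0}, $\transfer(\type)=\type S(\type)-\int_0^\type S(t)\dint t$, where $S(\type)=\sum_{i\in\bundle(\type)}[\quality(i)-\disutil(\matching^*(i|\type))]$ is increasing. Hence $\transfer\equiv0$ if and only if $S$ is constant in $\type$. Because $S$ at high types is positive (all items are allocated to positive virtual types, and $\quality(1)>\max \disutil$ generically), zero prices require that the lowest types already receive the same net-quality allocation as the highest type $\type=1$. Since $\rho_1(j)\le\rho_\type(j)$ for all $j$ and $\type$ (as $\virtual$ is increasing and $\disutil\ge0$), the hybrid/ad-funded distinction amounts to a statement about how large the virtual ad profits are at every type.

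The key claim I will try to establish is the following: if $\adset_A$ induces a purely ad-funded model and $\adset_B$ induces a hybrid model, then for every $\type$ and $i$,
\[
\max(0,\rho_\type(j^A_{\type,i}))\ge \max(0,\rho_\type(j^B_{\type,i})).
\]
This is the main obstacle. A hybrid set could contain an ad with very large revenue in one rank while the ad-funded set is only modestly profitable, so the inequality need not hold literally for arbitrary sets. I expect to need either the interpretation the paper seems to intend, namely comparing advertising sets along a chain such as the nested family $\adset_L\subset\adset_H$ or revenue scalings as in \autoref{propositionInvest0}, along which the model moves monotonically from subscription to hybrid to ad-funded, or an argument that the purely ad-funded condition forces the $\rho_\type$ ranks to be large enough. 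If the direct comparison fails, I will first try to prove monotonicity of business models along such chains. Along a chain, raising virtual ad profits raises $\maxquality(\type)$ by \autoref{corollaryThreshold}-type reasoning and lowers the variation of $S$, so once prices are zero they stay zero. Ordering along the chain, combined with \autoref{lemmaInvest}, then yields the stated ranking.
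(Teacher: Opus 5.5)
Your subscription step is the paper's argument: it is the case $\adset_L=\emptyset$ of \autoref{propositionInvest0}(i). The gap is in the hybrid-versus-ad-funded comparison, and it goes beyond the obstacle you flag.

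You are right that an arbitrary purely ad-funded set need not uniformly dominate an arbitrary hybrid set. Take $\itemnumber=1$, $\quality(1)=1$, $\type\sim U[0,1]$ (so $\virtual(\type)=2\type-1$), ad $a$ with $(\rev,\disutil)=(1,0)$, and ad $b$ with $(\rev,\disutil)=(1/2,1)$.
\begin{itemize}
\item The set $\{a\}$ is purely ad-funded: every type gets the item with $a$.
\item The set $\{b\}$ is hybrid: types $\type\le 3/4$ get net quality $0$, and higher types buy the ad-free item.
\item Yet $\rho_0(b)=3/2>\rho_0(a)=1$, so uniform domination fails.
\end{itemize}
Your fallback claim, that along a nested chain ``once prices are zero they stay zero,'' is also false at a fixed quality profile. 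Under $\{a\}\subset\{a,b\}$, types below $1/4$ now receive the item with $b$ (net quality $0$), and the rest receive it with $a$ (net quality $1$). Prices therefore become positive, and the model switches from purely ad-funded to hybrid. High-nuisance ads raise virtual ad profits for low types while lowering their net quality, so they increase rather than reduce the variation of your $S$. In any case, restricting to chains would prove a weaker statement than the proposition, which compares arbitrary platforms.

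The paper never compares two arbitrary advertising sets. Starting from the hybrid set $\adset$, it adds $\itemnumber$ ads with zero disutility and revenue high enough that, for every type, these ads carry the $\itemnumber$ highest virtual ad profits and every item is allocated to every type. Every type then receives the same net quality, so all prices are zero. Hence $\hat{\adset}\supset\adset$ is purely ad-funded, and \autoref{propositionInvest0}(i), via \autoref{lemmaInvest}, places $\qvecset^*(\hat{\adset})$ below $\qvecset^*(\adset)$.

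The idea your proposal lacks is the final step. For \emph{any} purely ad-funded platform, total revenue is pure advertising revenue, which the paper argues does not depend on $\qvec$. Its optimal quality set is therefore $\argmin_{\qvec} C(\qvec)$, the same for every purely ad-funded platform, including the constructed one. This equivalence carries the comparison from $\hat{\adset}$ to an arbitrary ad-funded platform. Neither uniform domination between the two given sets nor monotonicity of business models along chains is needed.
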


Intuitively, a platform whose business model relies more heavily on advertising also allocates more items to negative virtual types. 
Such an allocation rule discourages the platform from investing in content quality. 
We also note that a purely ad-funded model could be interpreted as a platform that cannot use monetary transfers for an exogenous reason, rather than one that optimally sets a zero price for all types. 
With this interpretation, \autoref{propositionInvest1} holds verbatim.

{Finally, we note that our results abstract from the following countervailing force: Higher-quality content may attract more attention and increase the value of advertising.
We capture this force in two extensions. 
First, \autoref{sectionItemCost} assumes costly provision and consumption of items, under which higher ad revenue expands allocation among positive virtual types.
Second, \autoref{sectionItemAdRevenue} allows ad revenue to increase in item quality.
In either extension, the platform's optimal quality choice can increase in ad revenue.}

\section{Policy Implications}\label{sectionPolicy}

Propositions \ref{propositionInvest0} and \ref{propositionInvest1} provide two sets of policy implications.
First, the results clarify the possible impacts of policy and technological changes on platforms' incentives to invest in content quality. We illustrate this first implication with two examples.

The first example concerns a recent law enacted in California that applies the Commercial Advertisement Loudness Mitigation (CALM) Act to streaming services.\footnote{For details of the law, see \url{https://leginfo.legislature.ca.gov/faces/billNavClient.xhtml?bill_id=202520260SB576} (accessed on September 11, 2026).}
The CALM Act is a federal law that requires television commercials to be no louder than the average volume of the program.
The recent law in California applies the same restriction to platforms such as Netflix and Hulu.

To examine the potential impact of the CALM Act in our model, suppose the platform faces a baseline set of ads, $\adset$.
For each ad $j \in \adset$, $(\rev(j), \disutil(j))$ represents the ad revenue and disutility level when the platform plays the ad at the same loudness as the main content.
The platform can also play each ad more loudly.
The louder version of ad $j$ is defined as ad $\hat{j}$ such that $\rev(\hat{j}) > \rev(j)$ and $\disutil(\hat{j}) > \disutil(j)$, i.e., louder ads are more annoying but also more effective at grabbing users' attention. 
Thus, without any regulation, the advertising set is given by $\hat{\adset} = \adset \cup \{\hat{j}\}_{j \in \adset}$.

The CALM Act shrinks the advertising set from $\hat{\adset}$ to $\adset$.
As formalized by \autoref{propositionInvest0}, such a regulation limits the platform's ability to degrade the quality of offerings to negative virtual types, which, in turn, incentivizes the platform to invest more in content quality and allocate it to consumers who have higher valuations.

The second example concerns improvements in advertising technology.
While we do not explicitly model the relevance of each ad to consumers, we may capture better targeting---such as the platform's ability to display relevant ads when users are likely to respond---as an increase in the ad revenue associated with each $j \in \adset$.
\autoref{propositionInvest0} then implies that such improvements, which expand the allocation of items to negative virtual types, may reduce the platform's incentives to invest in content quality.
An increase in each $\rev(j)$ might also arise if the platform can extract more surplus from (unmodeled) advertisers, either through greater market power or through improved information about advertisers' willingness to pay.\footnote{In an earlier version of this paper, we explicitly modeled advertisers as strategic players privately informed of their valuation of impressions. We showed that when the platform learns advertisers' types, which eliminates their private information, this can be captured as an increase in each $\rev(j)$. See  \citet*{ichihashi2024mechanism}.}
In summary, while the policy and technological changes described above may seem unrelated, our model provides a unified way to understand their potential impacts as a shift of virtual ad profits.

The second policy implication is that existing concerns about the negative impact of the ad-supported business model on quality and innovation  \citep*{cma2020online,stigler} apply not only to purely ad-funded platforms but also to hybrid platforms, which combine both advertising and direct pricing.
We provide a general rationale for why  this concern is relevant for hybrid platforms from the mechanism-design perspective: Reliance on advertising makes it costly for the platform to improve content quality because of the associated increase in information rents.

The underinvestment result for hybrid platforms in \autoref{propositionInvest1} is particularly relevant to media platforms.
\citet*{latham} note that ``an ad-funded business model (which) might have knock-on effects for media plurality by, for example, reducing incentives for investment in content generation." Moreover,
if artificial intelligence disproportionately lowers the cost of producing low-quality news articles relative to high-quality ones, this strengthens the negative impact of increased ad profitability on the quality of journalism.\footnote{For instance, AI can generate content automatically by converting data into informative and narrative texts,
leading to the production of thousands of stories with little or no human intervention \citep*{Noain}.}

Finally, while Propositions \ref{propositionInvest0} and \ref{propositionInvest1} identify changes in the advertising environment or business model that lead to lower content quality, the resulting impact on consumers is ambiguous.
Suppose, for example, that advertising becomes more profitable as in \autoref{propositionInvest0}.
On the one hand, the platform reduces its quality profile, which lowers every consumer's utility for a fixed allocation rule.\footnote{In any incentive-compatible mechanism such that the IR  constraint for the lowest type binds, the interim payoff of type $\type$ is given by  $\int^\type_0\sum_{i \in \bundle(t)} \left[ \quality (i) -  \disutil(\matching(i|t)) \right]  \dint t$.
For a fixed allocation rule, this expression decreases as the quality of each item decreases.}
On the other hand, a platform that hosts many profitable ads allocates more items to low-valuation consumers, which raises their utilities. This expansion can also raise the information rents of higher-valuation consumers; \autoref{figureCScomparison} illustrates how this force can increase aggregate consumer surplus for a given set of contents.
Thus, greater reliance on advertising may shift the platform's focus from quality toward the quantity of allocation, which may benefit or harm consumers.

\section{Extensions}\label{sectionExtension}
We discuss several extensions that relax some of the key assumptions.

\subsection{Costly Item Provision and Consumption}\label{sectionItemCost}
In our baseline model, distributing and consuming items are costless for the platform and consumers.
This assumption implies that no quality screening occurs for consumers with positive virtual types; i.e., all such consumers receive all items.

Quality screening for positive virtual types can arise when distributing and consuming items are costly.
Specifically, in Supplemental Appendix~\ref{sectionAppendixItemCost}, we assume that the platform incurs a cost of providing items and consumers incur a type-independent attention cost.
The cost of providing items may capture, for example, royalties that the platform pays to content creators each time their content is played.
{We assume that consumption is contractible, so consumers consume every allocated item.}

In this extension, we first characterize the optimal mechanism.
If higher-quality items have higher total item costs (i.e., the sum of provision and attention costs), the allocation for negative virtual types reduces to the pattern in Part 1 of \autoref{theorem}.
Moreover, if item costs are constant, the allocation to positive virtual types can be the mirror image of that to negative virtual types:
They receive items above a type-dependent quality threshold, which decreases with type, and item-ad matching is positive assortative in item quality and virtual ad profits.

We also show that higher ad profitability may increase the platform's incentive to invest in quality:
In a one-item example, the platform's optimal quality choice can increase in ad revenue, because increasing ad revenue may expand allocation among positive virtual types, which raises the marginal return to quality.

\subsection{{Free Disposal with Attention Costs}}\label{sectionNoncont}
{Attention costs can also induce quality screening for positive virtual types when consumption is not contractible.
In Supplemental Appendix~\ref{sectionAppendixFreeDisposal}, we assume that the platform does not incur the cost of providing items, but consumers incur a common attention cost of $a>0$ per item.
We allow consumers to freely dispose of allocated item-ad pairs.
Thus, a type-$\type$ consumer's gross utility from consuming item $i$ matched with ad $j$ is $\type [\quality(i)-\disutil(j)]-\attention$.
The combination of the attention cost and free disposal limits the space of implementable allocation rules, because a consumer ignores any item-ad pair for which $\type [\quality(i)-\disutil(j)]-\attention<0$.
Under a certain set of assumptions, we characterize the optimal mechanism and show that positive virtual types receive items above a quality threshold that decreases with type (\autoref{propositionICO}).
The main technical challenge is that incentive-compatible mechanisms now need to prevent double deviations, whereby consumers misreport their type and then consume a strict subset of allocated items.
To solve the problem, we adopt an approach inspired by \citet*{corrao2023nonlinear}: We first solve the problem without double deviations and then verify that its solution satisfies all IC constraints.}

\subsection{Item-Ad-Specific Advertising Revenue}\label{sectionItemAdRevenue}
In Supplemental Appendix~\ref{sectionAppendixItemAdRevenue}, we allow the ad revenue from matching ad $j$ with item $i$ to be a function $A(i,j)$.
This extension reveals three insights.
First, for any $A(\cdot, \cdot)$, some of the properties of the optimal mechanism in \autoref{theorem} continue to hold: Every item allocated to a negative virtual type is matched with an ad; and within positive virtual types, every consumer type receives all items, and higher types are exposed to fewer ads.
Second, higher-quality content may attract more attention and raise ad revenue. If $A(i,j)=R(\rev(j),\quality(i))$ with $R$ having strictly increasing differences, then for any consumer type, within the set of allocated item-ad pairs, higher-quality items are matched with higher-revenue ads.
Finally, the standard relaxed-problem approach is valid in this setup, and we present simple algorithms to solve the relaxed problem.

\autoref{remarkPairDependentNuisance} considers an extension in which a given ad becomes more annoying when matched with a higher-quality item.
Formally, the nuisance from an item-ad pair has increasing differences in item quality $\quality(i)$ and the ad's baseline disutility level $\disutil(j)$. To examine the implications of item-dependent disutility for item-ad matching, fix a consumer type and a set of allocated items and ads. For positive virtual types, the platform wants to reduce total nuisance, so it prefers to match higher-quality items with less annoying ads. For negative virtual types, the platform instead benefits from greater nuisance and prefers to match higher-quality items with more annoying ads. To compare this nuisance effect with the revenue effect above, suppose that higher-revenue ads are also more annoying. For positive virtual types, earning more ad revenue favors matching these ads with higher-quality items, while reducing nuisance favors matching them with lower-quality items. For negative virtual types, the two effects lead to the same matching pattern: both favor matching higher-revenue, more annoying ads with higher-quality items.

\subsection{Type-Independent Advertising Disutility} \label{sectionIndep}
Our model subsumes a setup in which ads impose both type-dependent and type-independent nuisance.
Each ad $j$ is characterized by $(\rev(j), \disutil_1(j), \disutil_2(j)) \in \R^3_+$, and a consumer's utility from an item-ad pair $(i,j)$ is $\type [\quality(i) - \disutil_1(j)] - \disutil_2(j)$.
The new component is $\disutil_2(j)$, which captures a type-independent but ad-dependent disutility from ad $j$.
This model is equivalent to ours, because we can redefine ad revenue as $\rev(j) - \disutil_2(j)$ and then normalize each $\disutil_2(j)$ to $0$.
This modification does not affect the optimal mechanism, consumers' interim payoffs, or the platform's revenue.\footnote{The modification might affect the analysis if the platform is constrained to using only nonnegative monetary transfers.
Indeed, the nonnegative price constraint binds whenever the platform tries to induce the participation of the lowest type $\type=0$ and exposes type $0$ to ads with positive $\disutil_2(j)>0$.
In contrast, the constraint never binds if $\disutil_2(j)=0$. However, it seems natural to assume a constant type-independent consumption utility if we introduce a constant type-independent disutility. Then, the constraint may not bind.}

\subsection{Many-to-One Matching} \label{remark1}
Our results hold even when each \content can be matched with at most $K \in \mathbb{N}$ \ads, or alternatively, when $\adnumber = K \itemnumber$ holds and each \content must be matched with exactly $K$ \ads.
For example, in the first case, the platform matches, for each $\type$, the lowest-quality \content with the $K$ \ads that have the highest virtual \ad profits, provided they are nonnegative.
The platform then matches the second-lowest-quality \content with the next $K$ highest nonnegative virtual \ad profits, and so on.
In either model, \autoref{theorem} and the results in \autoref{sectionInnovation} hold.\footnote{Both extensions require the lowest \content quality to exceed the total disutility of any $K$ \ads, which is an analogue of our assumption $\quality(i) - \disutil(j) \geq 0$ for any $(i, j)$. When each \content must be matched with exactly $K$ \ads, the platform similarly matches, for each negative virtual type, \contents in increasing order of quality with successive blocks of $K$ \ads ordered by decreasing virtual \ad profit, where the virtual \ad profit of a block is the sum of the virtual \ad profits of the \ads in that block.
The platform allocates the resulting bundles in this order and stops when the next bundle would generate negative virtual surplus.}

\section{Conclusion}
Digital platforms, such as social media, streaming services, and news organizations, all serve content and ads, but they adopt widely different contracts.
Meanwhile, ad-supported platforms have become a focus of policy debates, amid concerns that ad-funded business models may weaken platforms' incentives to invest in quality 
\citep*{scottmorton, cma2020online, cremer2019competition, stigler}.

In this paper, we study mechanism design for ad-supported platforms.
Our results rationalize various monetization strategies observed in practice and clarify how advertising affects a platform’s incentives to invest in content quality.
The results are derived from unified economic forces that are captured by two key concepts: virtual ad profits and the rent extraction--advertising trade-off.
These concepts will be relevant in richer models, such as those with platform competition or strategic content providers, which we leave for future research.

\begingroup
\catcode`\&=12
\putbib[freemium]
\endgroup

\appendix
\renewcommand{\thesection}{\Alph{section}}
\renewcommand{\theequation}{\thesection.\arabic{equation}}
\setcounter{equation}{0}
\begin{center}
\LARGE \textbf{Appendix}
\end{center}

\section{Proofs of \autoref{lemma0}, \autoref{theorem}, and \autoref{corollaryThreshold}} \label{sectionAppendixA}
\subsection{Proof of \autoref{lemma0}} \label{sectionAppendixA1}
We solve problem \eqref{equationVirtual3} for any given type $\type$.
Hereafter, we refer to the $\type$-virtual ad profit simply as the virtual ad profit.
We begin with the case of negative virtual types.

First, the platform prioritizes displaying ads with higher virtual ad profits and never displays ads with negative ones.
Also, the platform is restricted to displaying at most one ad for each item.
Thus, instead of the full set $\adset$ of ads, we can focus on ads that could potentially be displayed. 
Specifically, let $\adnumber^+$ be the number of ads that have a strictly positive virtual ad profit. 
If $\adnumber^+< \itemnumber$---i.e., if the number of such ads is strictly smaller than the number of items---then let $\adset^*$ be the set that consists of (i) all ads with strictly positive virtual ad profits and (ii) $\itemnumber-\adnumber^+$ fictitious ads, which have zero virtual ad profits.
If $\adnumber^+ \ge \itemnumber$, then let $\adset^*$ be the set that consists of ads that have the first through the $\itemnumber$-th highest virtual ad profits.
The set $\adset^*$ depends on type $\type$.

The platform can now focus on a subclass of advertising policies that specify a one-to-one  exhaustive matching between the items in $\qset$ and the ads in $\adset^*$, where matching with a fictitious ad corresponds to matching with no ad in our original setup.\footnote{By one-to-one  exhaustive matching, we mean that every item is matched with some ad in $\adset^*$.}
Let $\matchingset^*$ be the set of such advertising policies. 
Then, we can write the platform's relaxed problem for type $\type$ as
\begin{align}
\max_{\matching \in \matchingset^*} \sum_{i \in \qset} \max\left(0, \virtual(\type) \quality(i) + \rho_\type(\matching(i)) \right).\label{equationVirtual4}
\end{align}
We show that the function $\vs_\type (\quality, \rho) =  \max\left(0, \virtual(\type) \quality + \rho  \right)$ is submodular in $(\quality, \rho)$.
Take any $\quality_L$ and $\quality_H$ such that $\quality_H> \quality_L$.
We have 
\begin{align*}
\vs_\type (\quality_H,\rho)  - \vs_\type (\quality_L,\rho)
=
\begin{cases}
0 \quad &\text{if}\quad \rho < - \virtual(\type)\quality_L\\
-\virtual(\type) \quality_L - \rho  \quad &\text{if}\quad  - \virtual(\type)\quality_L \le \rho \le  - \virtual(\type)\quality_H \\
\virtual(\type)(\quality_H - \quality_L) \quad &\text{if}\quad \rho >- \virtual(\type)\quality_H,
\end{cases}
\end{align*}
which is overall decreasing in $\rho$.
Thus $\vs_\type (\quality, \rho)$ is submodular in $(\quality, \rho)$.
As a result, a negative assortative matching between items in $\qset$ and ads in $\adset^*$ solves \eqref{equationVirtual4} \citep*[e.g.,][Theorem 3.4 and Theorem 4.3]{galichon2018optimal}.

We now translate the solution of \eqref{equationVirtual4} to that of \eqref{equationVirtual3}.
First, the negative assortative matching between items in $\qset$ and ads in $\adset^*$ is equivalent to 
the $\rho_\type $-negative assortative policy, i.e., they match lower-quality items with ads that have higher values of $\rho_\type(j)$, until we exhaust ads that have positive values of $\rho_\type(j)$.
Let $\matching^*(\cdot| \type): \qset \to \adset\cup \{\emptyset\}$ denote the optimal advertising policy (for type $\type$) in our original formulation.
The platform then allocates each item-ad pair $(i, \matching^*(i|\type))$ if and only if $\virtual(\type) \quality(i) + \rho_\type ( \matching^*(i| \type)  ) \ge 0$, which reduces to the allocation policy described in the lemma.

The case of positive virtual types follows the same logic.
In this case, $\vs_\type (\quality_H,\rho)  - \vs_\type (\quality_L,\rho)  = \virtual(\type)(\quality_H - \quality_L)$ is independent of $\rho$ and thus trivially submodular.
To implement the tie-breaking in \autoref{definitionAssortative}, match any available zero-virtual-profit ads to remaining items before using the no-ad option; this leaves virtual surplus unchanged.

Having solved the relaxed problem, we now verify the monotonicity of allocation.
Given the allocation that solves the relaxed problem, we adopt the following notation:
$\grossq(\type) \triangleq \sum_{i \in \bundle(\type)} \quality(i)$, $\grossd(\type) \triangleq \sum_{i \in \bundle(\type)} \disutil( \matching^*(i|\type))$, and $\grossr(\type) \triangleq \sum_{i \in \bundle(\type)} \rev(\matching^*(i|\type))$.
The monotonicity constraint requires that $\grossq(\type) - \grossd(\type)$ is non-decreasing in $\type$.

Suppose to the contrary that there are some $\type_L, \type_H$ such that $\type_H> \type_L$ but $\grossq(\type_L) - \grossd(\type_L)> \grossq(\type_H) - \grossd(\type_H)$.
At the solution to the relaxed problem, we have 
\begin{equation*}
\virtual(\type_L)  [\grossq(\type_L) - \grossd(\type_L)]  
+ \grossr(\type_L)  \ge \virtual(\type_L)  [\grossq(\type_H) - \grossd(\type_H)]   + \grossr(\type_H).
\end{equation*}
If we replace $\virtual(\type_L)$ by $\virtual(\type_H)$ in both sides, the LHS increases strictly more than the RHS, because $\grossq(\type_L) - \grossd(\type_L)> \grossq(\type_H) - \grossd(\type_H) \ge 0$.
The resulting inequality is
\begin{equation*}
\virtual(\type_H)  [\grossq(\type_L) - \grossd(\type_L)]  + \grossr(\type_L)  > \virtual(\type_H)  [\grossq(\type_H) - \grossd(\type_H)]  
+ \grossr(\type_H).
\end{equation*}
This is a contradiction, because the platform could have increased its virtual surplus by replacing the outcome for type $\type_H$ with that for $\type_L$. 
Because the item and ad allocation satisfies monotonicity, the solution to the relaxed problem constitutes the optimal mechanism.
Finally, the binding IR constraint for type $\type=0$ and local IC constraints pin down the monetary transfer as shown in the lemma (see, e.g., Lemma 2 of \citealt*{myerson1981optimal}). \hfill $\square$

\subsection{Proof of \autoref{theorem}} \label{sectionAppendixA2}
We adopt the following notation: For any item $i \in \qset$, ad $j \in \adset \cup \{ \emptyset\}$, and type $\type \in [0,1]$, let $\vs_\type(i, j) = \virtual(\type)\quality(i) + \rho_\type(j)$ be the contribution to the virtual surplus when the platform allocates an item-ad pair $(i, j)$ to type $\type$.

First, we prove Part 1, the case of negative virtual types.
The platform allocates item $i$ to type $\type$ iff $\virtual(\type) \quality(i) + \rho_\type ( \matching^*(i| \type)  ) \ge 0$.
Under the $\rho_\type$-negative assortative policy, a higher-quality item is matched with a lower virtual ad profit.
Thus, the LHS is strictly decreasing in $\quality(i)$, because $\virtual(\type) \quality(i)$ is decreasing in $\quality(i)$ and a higher $\quality(i)$ is matched with a lower virtual ad profit.
Therefore, the platform allocates a lower-contour bundle:
\[
\bundle(\type)=\{i\in\qset:\quality(i)\le \maxquality(\type)\}.
\]
If the platform allocates no items to type $\type$, choose $\maxquality(\type)<\quality(1)=\min_{i\in\qset}\quality(i)$; otherwise, let $\maxquality(\type)=\quality(i)$ if the highest quality level allocated to type $\type$ equals $\quality(i)$.
For a negative virtual type, an allocated item matched with no ad contributes $\virtual(\type)\quality(i)\le0$ to virtual surplus.
Under the allocation rule, such an item can be allocated only when this contribution is zero.
Because $\virtual(\type)<0$, zero contribution with no ad implies $\quality(i)=0$, so omitting the pair leaves virtual surplus and consumer utility unchanged, and therefore leaves IC unchanged.
Hence every allocated item can be treated as matched with an ad with positive $\rho_\type(j)$.

We show that $\maxquality(\type)$ is increasing in types on $[0, \zerotype)$.
Suppose to the contrary that there are some $\type_L$ and $\type_H$ with $\type_L< \type_H < \zerotype$ such that the lower type $\type_L$ has a strictly higher lower-contour cutoff than type $\type_H$.
Let $i_L$ denote the highest-quality item that type $\type_L$ receives, which implies that type $\type_H$ does not receive item $i_L$.
By the selection argument above, type $\type_L$ receives item $i_L$ along with some ad $j_L\in\adset$ with $\rho_{\type_L}(j_L)>0$.

There are now two cases to consider.
First, suppose that ad $j_L$ is not allocated to type $\type_H$.
Note that $\vs_{\type_L}(i_L, j_L)  \ge 0$ implies $\vs_{\type_H}(i_L, j_L)  \ge 0$.
Let $k_H := |\bundle(\type_H)|$ denote the number of items that type $\type_H$ receives, and let $\mu_H$ denote a $\rho_{\type_H}$-permutation under which the advertising policy for type $\type_H$ is $\rho_{\type_H}$-negative assortative.
Because type $\type_H$ receives a lower-contour bundle and, by the selection argument above, every allocated item is matched with an ad, type $\type_H$ receives items $1, \ldots, k_H$ together with ads $\mu_H(1), \ldots, \mu_H(k_H)$.
Ad $j_L$ is not among these ads, so $k_H < \adnumber$ and $\rho_{\type_H}(j_L) \le \rho_{\type_H}(\mu_H(k_H+1))$ by the definition of a $\rho_{\type_H}$-permutation.
Item $i_L$ is not among these items, so $\quality(i_L) \ge \quality(k_H+1)$ and, because $\virtual(\type_H) < 0$, $\virtual(\type_H)\quality(i_L) \le \virtual(\type_H)\quality(k_H+1)$.
Adding the two inequalities, we obtain $\vs_{\type_H}(k_H+1, \mu_H(k_H+1)) \ge \vs_{\type_H}(i_L, j_L) \ge 0$.
The $\rho_{\type_H}$-negative assortative policy matches item $k_H+1$ with ad $\mu_H(k_H+1)$, and the allocation rule in \autoref{lemma0} includes every item whose pair has a nonnegative contribution (ties are broken in favor of allocation).
Hence $k_H+1 \in \bundle(\type_H)$, which contradicts $|\bundle(\type_H)| = k_H$.

Second, suppose that the platform allocates ad $j_L$ to type $\type_H$ along with some item other than item $i_L$.
The fact that type $\type_L$ receives pair $(i_L, j_L)$ implies that any item allocated to type $\type_L$, whose quality is lower than $i_L$, comes with some ad, because the advertising policy for $\type_L$ is the $\rho_{\type_L}$-negative assortative matching.
Then, it holds that type $\type_L$ receives a strictly greater number of ads than type $\type_H$, so there must be some ad, say $\hat{j}_L$, which is allocated to $\type_L$ but not to $\type_H$.
We have $\rho_{\type_L}(\hat{j}_L) \ge \rho_{\type_L}(j_L)$, because ad $\hat{j}_L$ is matched with a lower-quality item than ad $j_L$.
Thus, we have $\vs_{\type_L}(i_L, \hat{j}_L)  \ge \vs_{\type_L}(i_L, j_L) \ge 0$, which implies $\vs_{\type_H}(i_L, \hat{j}_L)   \ge \vs_{\type_L}(i_L, \hat{j}_L)  \ge 0$.
By the same argument as in the previous paragraph, if the unallocated pair $(i_L, \hat{j}_L)$ has a nonnegative contribution to the virtual surplus created by $\type_H$, then the platform must have allocated type $\type_H$ either $(i_L, \hat{j}_L)$ or another item-ad pair from the set of unallocated item-ad pairs.
This is a contradiction.
Therefore, $\maxquality(\type)$ is increasing on $[0,\zerotype)$.

Define $K^-(\type)=|\bundle(\type)|$.
By the selection argument above, every allocated item is matched with an ad with positive $\rho_\type(j)$.
The $\rho_\type$-negative assortative policy matches lower-quality items with higher values of $\rho_\type(j)$ until positive values are exhausted.
Therefore, type $\type$'s $K^-(\type)$ allocated items are matched with the $K^-(\type)$ ads with the highest positive values of $\rho_\type(j)$.
Because (i) lower-contour bundles expand as $\maxquality(\type)$ increases and (ii) no two items have the same quality level, $K^-(\type)$ is increasing in $\type$.

We now prove Part 2, the case of positive virtual types.
The first statement holds because, for any $i\in\qset$ and $j\in\adset\cup\{\emptyset\}$, $\virtual(\type)[\quality(i)-\disutil(j)]+\rev(j)\ge0$ when $\virtual(\type)>0$.
Equivalently, $\virtual(\type)\quality(i)+\rho_\type(j)\ge0$ for every matched pair, so $\bundle(\type)=\qset$.
Define $\adnumber^+_\type=|\{j\in\adset:\rho_\type(j)\ge0\}|$ and $K^+(\type)=\min(\itemnumber,\adnumber^+_\type)$.
The optimal advertising policy uses exactly $K^+(\type)$ ads, namely the $K^+(\type)$ ads with the highest nonnegative values of $\rho_\type(j)$.
As $\type$ increases, $\rho_\type(j) = \rev(j) - \virtual(\type) \disutil(j)$ is decreasing for any $j$, so $\adnumber^+_\type$ is decreasing in $\type$.
Thus, $K^+(\type)$ is decreasing in $\type$.
\hfill $\square$

\subsection{Proof of \autoref{corollaryThreshold}}\label{sectionAppendixA3}
{Fix any type $\type < \zerotype$.
Let $\rho_\type(j) = \rev(j) - \virtual(\type) \disutil(j)$ and $\rho'_\type(j) = \rev'(j) - \virtual(\type) \disutil'(j)$ denote the virtual ad profits under the two profiles of ad characteristics.
For every ad $j \in \adset$, we have $\rho'_\type(j) \ge \rho_\type(j)$, because $\rev'(j) \ge \rev(j)$, $\disutil'(j) \ge \disutil(j)$, and $-\virtual(\type) > 0$.}

{Let $\rho^{(k)}_\type$ and $\rho^{\prime(k)}_\type$ denote the $k$-th highest values of $\rho_\type(\cdot)$ and $\rho'_\type(\cdot)$ among ads in $\adset$, with $\rho^{(k)}_\type := 0$ and $\rho^{\prime(k)}_\type := 0$ for $k > \adnumber$.
For $k\le\adnumber$, at least $k$ number of ads $j$ satisfy $\rho'_\type(j)\ge\rho_\type(j)\ge\rho^{(k)}_\type$, so $\rho^{\prime(k)}_\type\ge\rho^{(k)}_\type$; for $k>\adnumber$, both values are zero by definition.
For each item $k\le\itemnumber$, the virtual-surplus contribution under the $\rho_\type$-negative assortative policy is $\virtual(\type)\quality(k)+\rho^{(k)}_\type$. If $k>\adnumber$, item $k$ has no ad, so $\rho^{(k)}_\type=0$.
Thus this contribution is greater under $\{(\rev'(j),\disutil'(j))\}_{j\in\adset}$ for every item $k$.
As a result, $\bundle(\type) \subseteq \bundle'(\type)$ and the quality threshold $\maxquality(\type)$ is greater under $\{(\rev'(j), \disutil'(j))\}_{j \in \adset}$.} \hfill $\square$

\section{Details for Case 3 in \autoref{sectionApplications}}\label{sectionAppendixAB}
We provide details of the analysis for Case 3 in \autoref{sectionApplications}, which assumes $c>1$ and $\quad \rev(\itemnumber) < \frac{-\virtual(0)\quality(\itemnumber)}{1 - \cost \virtual(0)}$.
Because $c>1$, there is a unique type $\type^* \in (0,1)$ such that $\rho_{\type^*}(j) = \rev(j)(1 -  c \virtual(\type^*) )=0$ for every ad $j$.
We have $\type^*> \zerotype$, because $\rho_{\zerotype}(j) = \rev(j)(1 -  c \virtual(\zerotype)) = \rev(j)>0$.
Thus, any type $\type> \type^*$ receives all items without ads, and  
any type $\type \in [\type^{**}, \type^*)$, where $\type^{**} \in (0, \zerotype)$ is the cutoff derived below, receives all items and every item is matched with some ad.

Type $\type =0$---and generically, a positive measure of types that includes $\type=0$---receives a contract with zero price, because type $0$'s value of any item-ad pair is $0$.
Also, \autoref{theorem} implies that type $\type=0$ receives a (possibly empty) lower-contour bundle.
This bundle excludes at least item $\itemnumber$, which has the highest quality.\footnote{The bundle is empty if and only if $\rev(1)<-\virtual(0)\quality(1)/[1-\cost\virtual(0)]$, where ad $1$ has the highest revenue; at equality, item $1$ is allocated under the tie-breaking rule in \autoref{lemma0}.}
The reason is as follows.
For any $\type< \zerotype$, the virtual ad profit $\rev(j)(1 -  c \virtual(\type) )$ is proportional to $\rev(j)$.
Thus, the negative assortative matching between items and ads implies that item $\itemnumber$ is matched with ad $\itemnumber$, which (by notation) has the $\itemnumber$-th highest ad revenue.
Then, $\rev(\itemnumber) < \frac{-\virtual(0)\quality(\itemnumber)}{1 - \cost \virtual(0)}$ implies that the contribution of item $\itemnumber$ under the optimal item-ad matching to type $0$'s virtual surplus is negative.

For any $\type < \type^*$, we have $1 - \cost\virtual(\type) > 0$, so the virtual ad profit $\rev(j)(1 - \cost\virtual(\type))$ is a positive multiple of $\rev(j)$; the $\rho_\type$-negative assortative policy matches item $k$ with the ad with the $k$-th highest revenue, which we call ad $k$, and by \eqref{equationVirtual2} this pair contributes $h_k(\type) := \rev(k) + \virtual(\type)[\quality(k) - \cost\rev(k)]$ to type $\type$'s virtual surplus, so \autoref{lemma0} allocates item $k$ if and only if $h_k(\type) \ge 0$.
Condition \eqref{equationNON} gives $\quality(k) - \cost\rev(k) = \quality(k) - \disutil(k) \ge 0$, so $h_k$ is increasing in $\type$; and $h_k(\zerotype) = \rev(k) > 0$, because $\rev(k) = 0$ would imply $\disutil(k) = \cost\rev(k) = 0$.
Condition \eqref{equationNetflix} states that $h_\itemnumber(0) < 0$, which requires $\quality(\itemnumber) - \cost\rev(\itemnumber) > 0$ because $\rev(\itemnumber) > 0$ and $\virtual(0) < 0$.
Hence, $h_\itemnumber$ is strictly increasing, and there is a unique $\type^{**} \in (0, \zerotype)$ with $h_\itemnumber(\type^{**}) = 0$, i.e., $\virtual(\type^{**}) = -\rev(\itemnumber)/[\quality(\itemnumber) - \cost\rev(\itemnumber)]$.
For $\type \in [\type^{**}, \zerotype]$ and $k \le \itemnumber$, we have $h_k(\type) \ge h_\itemnumber(\type) \ge h_\itemnumber(\type^{**}) = 0$, because $\virtual(\type) \le 0$, $\quality(k) \le \quality(\itemnumber)$, $\rev(k) \ge \rev(\itemnumber)$, and $1 - \cost\virtual(\type) > 0$; for $\type \in (\zerotype, \type^*)$, $h_k(\type) \ge h_k(\zerotype) > 0$.
Thus, every type in $[\type^{**}, \type^*)$ receives all items, each matched with an ad, whereas every type $\type < \type^{**}$ has $h_\itemnumber(\type) < 0$ and receives a lower-contour bundle that excludes item $\itemnumber$ (\autoref{theorem}).
Finally, let $X(\type) := \sum_{i \in \bundle(\type)} [\quality(i) - \disutil(\matching^*(i|\type))]$.
For every $\type \in [\type^{**}, \type^*)$, $X(\type) = \sum_{i \in \qset} \quality(i) - \cost \sum_{k=1}^{\itemnumber} \rev(k) = X(\type^{**})$, so the transfer in \autoref{lemma0} equals $\transfer(\type) = \type X(\type^{**}) - \int_0^\type X(t) \dint t = \int_0^{\type^{**}} [X(\type^{**}) - X(t)] \dint t$, the same price for every type in this tier.

\section{Proof of \autoref{lemmaInvest} and \autoref{propositionInvest1}} \label{sectionAppendixB}

To prove \autoref{lemmaInvest} and \autoref{propositionInvest1}, we first establish a lemma.  Recall that in the optimal mechanism, the platform's virtual surplus from each type $\type$ is written as the optimal value of the one-to-one matching problem, \eqref{equationVirtual4}, where (i) the platform matches $\itemnumber$ items and $\itemnumber$ ads, (ii) each ad is expressed by its virtual ad profit, and (iii) some ads could be fictitious ads with zero virtual ad profit.
Therefore, we first establish comparative statics for such a problem.

Abusing notation, let $\matchingset^*$ denote the set of all one-to-one exhaustive matching policies between $\itemnumber$ items and $\itemnumber$ ads, where items are represented by the profile of their quality levels, $\qvec$, and ads are represented by the profile of their virtual ad profits, $\rhovec$.
For any $\matching \in \matchingset^*$ and $(i, j) \in \{1,..., \itemnumber\}^2$, $\matching (i, j) =1$ if item $i$ and ad $j$ are matched and $0$ otherwise.

\begin{lemma}\label{lemma1}

Take any submodular function $\vs : \R^2 \to \R$, and
define $$V(\qvec, \rhovec)  := \max_{\matching \in \matchingset^*}\sum_{(i,j) \in \{1,..., \itemnumber\}^2} \vs(\quality(i),\rho(j)) \matching(i,j).$$ 
Then, $V$ has decreasing differences in $(\qvec, \rhovec)$, i.e., for any  $\rhovecH$, $\rhovecL$, $\qvecH$, and $\qvecL$ such that $\rhovecH \ge \rhovecL$ and $\qvecH \ge \qvecL$, we have
$V(\qvecH, \rhovecH)- V(\qvecH, \rhovecL)\le V(\qvecL, \rhovecH)- V(\qvecL, \rhovecL)$.

\end{lemma}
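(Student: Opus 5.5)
The plan is to obtain a closed-form expression for $V$ via negative assortative matching. Once $V$ is written as a sum of values of $\vs$ at sorted coordinates, decreasing differences follows term by term from the submodularity of $\vs$.

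\emph{Step 1 (negative assortative matching is optimal).} Fix $(\qvec,\rhovec)$. Let $\quality_{(1)}\le\cdots\le\quality_{(\itemnumber)}$ be the order statistics of $\qvec$ and $\rho_{[1]}\ge\cdots\ge\rho_{[\itemnumber]}$ the entries of $\rhovec$ in decreasing order. I claim that
\[
V(\qvec,\rhovec)=\sum_{k=1}^{\itemnumber}\vs\bigl(\quality_{(k)},\rho_{[k]}\bigr).
\]
This is the same fact used in the proof of \autoref{lemma0}, where it was cited from \citet*{galichon2018optimal}. To make the argument self-contained for an arbitrary submodular $\vs$, I would use a swap argument. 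Take any $\matching\in\matchingset^*$ containing a ``crossing'': items with $\quality(i)<\quality(i')$ matched to ads with $\rho(j)<\rho(j')$. Submodularity gives
\[
\vs(\quality(i),\rho(j'))+\vs(\quality(i'),\rho(j))\ge \vs(\quality(i),\rho(j))+\vs(\quality(i'),\rho(j')),
\]
so swapping the two partners weakly raises the objective. Each swap strictly reduces the number of crossing pairs, which is a finite inversion count, so finitely many swaps reach the negative assortative matching without lowering the value. Ties in $\quality$ or $\rho$ do not matter, because any tie-breaking yields the same sum.

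\emph{Step 2 (monotonicity of sorted vectors).} If $\qvecH\ge\qvecL$ componentwise, then $\overline{\quality}_{(k)}\ge\underline{\quality}_{(k)}$ for every $k$. To see this, note that the $k$-th smallest entry equals $\max_{|S|=\itemnumber-k+1}\min_{i\in S}\quality(i)$, and this expression is monotone in every coordinate. By the same argument, $\rhovecH\ge\rhovecL$ implies $\overline{\rho}_{[k]}\ge\underline{\rho}_{[k]}$ for every $k$.

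\emph{Step 3 (conclusion).} Combining Steps 1 and 2,
\[
V(\qvecH,\rhovecH)-V(\qvecH,\rhovecL)=\sum_k\Bigl[\vs(\overline{\quality}_{(k)},\overline{\rho}_{[k]})-\vs(\overline{\quality}_{(k)},\underline{\rho}_{[k]})\Bigr].
\]
By submodularity of $\vs$, with $\overline{\quality}_{(k)}\ge\underline{\quality}_{(k)}$ and $\overline{\rho}_{[k]}\ge\underline{\rho}_{[k]}$, each bracket is at most $\vs(\underline{\quality}_{(k)},\overline{\rho}_{[k]})-\vs(\underline{\quality}_{(k)},\underline{\rho}_{[k]})$. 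Summing over $k$ gives $V(\qvecL,\rhovecH)-V(\qvecL,\rhovecL)$, which is the desired inequality.

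The main point requiring care is Step 1. The swap argument must terminate, which the finite inversion count guarantees. It must also be valid for $\vs$ that is merely submodular and not differentiable, which holds because it uses only the four-point inequality. Everything else is routine once the sorted representation is in hand.
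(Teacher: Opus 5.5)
Your proposal is correct and follows the same route as the paper's proof. Both reduce $V$ to the sorted sum $\sum_k \vs(\quality(k),\rho(k))$ via negative assortative matching, note that componentwise dominance survives sorting, and then apply submodularity term by term before summing. The only difference is that you make two steps self-contained: the swap argument for the optimality of negative assortative matching, which the paper takes from submodularity, and the monotonicity of order statistics, which the paper asserts directly because its quality profiles in $\qvecset$ are already sorted.
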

\begin{proof}
The submodularity of $\vs$ implies that the platform adopts negative assortative matching, i.e., the $k$-th lowest $\quality$ is matched with the $k$-th highest $\rho$.
We write $\quality(k)$ for the $k$-th lowest quality in $\qvec$ and $\rho(k)$ for the  $k$-th highest virtual ad profit in $\rhovec$.
Then, we have 
\begin{equation}\label{equationSum}
V(\qvec, \rhovec)  =  \sum^\itemnumber_{k=1} \vs(\quality(k ), \rho(k)).
\end{equation}
Because $\qvecH \ge \qvecL$, 
we have $\qvecH_k \ge \qvecL_k$ for each $k=1,..., \itemnumber$.
Similarly, 
$\rhovecH \ge \rhovecL$ implies $\rhovecH_k \ge \rhovecL_k$.
The submodularity of $\vs$ implies 
\begin{align*}
\vs(\qvecH_k, \rhovecH_k)  - \vs(\qvecH_k, \rhovecL_k)
\le\, \vs(\qvecL_k, \rhovecH_k) - \vs(\qvecL_k, \rhovecL_k) .
\end{align*}
Adding up both sides with respect to $k=1,..., \itemnumber$, we conclude that $V$ has decreasing differences in $(\qvec, \rhovec)$.
\end{proof}

We are now ready to prove \autoref{lemmaInvest}.

\begin{proof}[Proof of Lemma \ref{lemmaInvest}]
Suppose that the advertising set $\adset_H$ attains uniformly higher virtual ad profits than $\adset_L$. As we have shown in the proof of \hyperref[theorem]{Theorem \ref{theorem}}, for each $\type \in \Type$, type $\type$'s virtual surplus $\vs_\type (\quality, \rho)$ is submodular in $(\quality, \rho)$ (trivially so if $\type> \zerotype$).
Thus for any $\type \in \Type$, Lemma \ref{lemma1} implies that 
\begin{equation}\label{equationVS}
\vs_\type (\qvec, \rhovec_\type)  := \max_{\matching \in \matchingset^*}\sum_{(i,j) \in \{1,..., \itemnumber\}^2} \vs_\type (\quality(i),\rho_\type(j))  \matching(i,j)
\end{equation}
has decreasing differences in $(\qvec, \rhovec_\type)$.
Here, $\rhovec_\type \in \R^\itemnumber_+$ is the profile of virtual ad profits (including those of fictitious ads) constructed in the proof of \hyperref[theorem]{Theorem \ref{theorem}} for each $\type$.
 Let $\vs_\type (\qvec, \adset)$ denote the maximized virtual surplus from type $\type$ when written as a function of the quality vector $\qvec$ and advertising set $\adset$.

If the advertising set changes from $\adset_L$ to $\adset_H$, then for every $\type$, the corresponding vector of the top $\itemnumber$ virtual ad profits (ranked in the descending order) increases from $\rhovecL_\type$ to $\rhovecH_\type$.
By the decreasing-difference property of  $\vs_\type (\qvec, \rhovec_\type)$, $\vs_\type (\qvec, \adset_H) - \vs_\type (\qvec, \adset_L)$ decreases in $\qvec$.
Therefore, $\profit(\qvec, \adset_H) - \profit (\qvec, \adset_L)$ also decreases in $\qvec$.

Finally, for any fixed $\adset$, $\profit(\qvec, \adset)$ is trivially supermodular in $\qvec = (\quality(1),..., \quality(\itemnumber))$.
Also, by assumption, $-C(\qvec)$ is supermodular. Thus, $\profit(\qvec, \adset)-C(\qvec)$ is supermodular in $\qvec$. The standard argument of monotone comparative statics establishes the result.

\end{proof}

\begin{proof}[Proof of Proposition \ref{propositionInvest1}]
To show the first part, note that a subscription model arises if and only if $\adset=\emptyset$, whereas a hybrid model arises only if $\adset\not= \emptyset$.
By Part (i) of \autoref{propositionInvest0}, a platform with a subscription model chooses a higher quality profile.

To show the second part, suppose the platform adopts a hybrid model with an advertising set $\adset$.
We then expand $\adset$ to $\hat{\adset}$ by adding $\itemnumber$ ads, each of which has a sufficiently high ad revenue and zero disutility level so that (a) for every type $\type$, its $\theta$-virtual ad profit exceeds any of the $\theta$-virtual ad profits that could arise from $\adset$, and (b) the platform is willing to allocate every item  to all types by matching it with one of the newly added ads.
Because the lowest type is $0$, the platform can implement such an allocation rule only by setting a price of $0$ to all types. 
In such a case, the platform's total revenue equals advertising revenue, i.e., the platform adopts a purely ad-funded model. By Part (i) of \autoref{propositionInvest0}, such a platform chooses a lower quality profile than a hybrid model.

Finally, any purely ad-funded platform must have the same set of optimal quality profiles as the purely ad-funded platform described in the previous paragraph.
Indeed, if two platforms are purely ad-funded, their total revenues are equal to respective advertising revenues, which depend on their advertising sets and the number of allocated items (which equals $\itemnumber$), but not on their quality levels.
Thus, the set of optimal quality profiles for these platforms is equal to $\argmin_{\qvec} C(\qvec)$; otherwise, we obtain a contradiction, because a platform could strictly increase the platform's objective by choosing a quality profile in $\argmin_{\qvec} C(\qvec)$ while maintaining the same optimal mechanism.
\end{proof}

\section{Illustrative Platform Menus}
\label{app:platform_menus}

We provide illustrative examples of platforms that correspond to the three cases discussed in \autoref{sectionApplications}. 
Table~\ref{tab:platform_menus} emphasizes content access and advertising exposure, rather than prices, which change relatively frequently. 
The case labels refer to the theoretical regions in \autoref{sectionApplications}.

\begingroup
\setstretch{1}
\renewcommand{\arraystretch}{1.1}
\footnotesize

\begin{xltabular}{\linewidth}{>{\RaggedRight\arraybackslash}p{2.0cm} >{\RaggedRight\arraybackslash}p{2.8cm} Y Y}
\caption{Illustrative platform menus.}
\label{tab:platform_menus} \\
 
\toprule
\textbf{Platform} & \textbf{Tiers} & \textbf{Content-access margin} & \textbf{Advertising margin} \\
\midrule
\endfirsthead
 
\multicolumn{4}{c}{\tablename\ \thetable{} -- \textit{Continued from previous page}} \\
\toprule
\textbf{Platform} & \textbf{Tiers} & \textbf{Content-access margin} & \textbf{Advertising margin} \\
\midrule
\endhead
 
\bottomrule
\multicolumn{4}{r}{\textit{Continued on next page}} \\
\endfoot
 
\bottomrule
\endlastfoot
 
\multicolumn{4}{l}{\textbf{Case 1: Differentiated Content Access} ($c < 1$)} \\*
\midrule
 
New York Times
& Free access \newline All Access \newline All Access Family
& Free access is limited to a small number of articles per month; All Access provides unlimited access to news together with Cooking, Games, Wirecutter, and The Athletic. All Access Family extends the subscription to multiple household members.
& Advertising is present on both free and subscriber pages; no ad-free tier is offered. \\
\addlinespace[5pt]
 
Financial Times
& Standard Digital \newline FT Digital Edition \newline Premium Digital
& Standard Digital provides full access to FT journalism on ft.com and the FT app, including markets data and portfolio tools. The FT Digital Edition is a standalone digital replica of the printed newspaper, without ft.com or app access. Premium Digital adds the Lex column, mergers-and-acquisitions coverage, and premium-only newsletters such as Moral Money and Trade Secrets.
& Advertising is present in all tiers; no ad-free tier is offered. \\
\addlinespace[10pt]
 
\multicolumn{4}{l}{\textbf{Case 2: Differentiated Ad Exposure} ($c > 1$, high $\rev(\itemnumber)$)} \\*
\midrule
 
YouTube
& Free \newline Premium Lite \newline YouTube Premium
& Full video-catalog access for all users. YouTube Premium adds offline downloads, background play, and YouTube Music Premium.
& Free users see pre-roll, mid-roll, and overlay ads. Premium Lite removes ads from most videos, while ads remain on music content, Shorts, and browsing. YouTube Premium removes ads across the platform. \\
\addlinespace[5pt]
 
X
& Free \newline Basic \newline Premium \newline Premium+
& Free tier provides standard read and write access. Paid tiers progressively unlock post editing, longer posts, priority ranking in replies, creator monetization tools, and higher usage limits on the Grok AI model.
& Free and Basic show the standard ad load. Premium shows approximately half as many ads in the For You and Following timelines. Premium+ provides an ad-free experience across most areas of X, including timelines, replies, and profiles, with occasional sponsored content. \\
\addlinespace[10pt]
 
\multicolumn{4}{l}{\textbf{Case 3: Differentiation in Both Dimensions} ($c > 1$, low $\rev(\itemnumber)$)} \\*
\midrule
 
Netflix
& Standard with Ads \newline Standard \newline Premium
& All plans provide the catalog of movies, TV shows, and games; on Standard with Ads, a small number of titles are unavailable and show a lock icon. Premium adds 4K (Ultra HD) with HDR, spatial audio, more simultaneous streams, and additional extra-member slots.
& Standard with Ads is ad-supported. Standard and Premium are entirely ad-free. \\
\addlinespace[5pt]
 
Peacock
& Select \newline Premium \newline Premium Plus
& Select provides TV series from NBC, Bravo, and other networks, excluding movies, live sports, and Peacock Originals. Premium provides the full lineup, including movies, live sports, and Peacock Originals. Premium Plus adds offline downloads and access to the subscriber's local NBC channel live, 24/7.
& Select and Premium are ad-supported. Premium Plus removes ads with limited exceptions: ads remain in channels, live sports and events, and a small number of titles. \\
 
\end{xltabular}
 
\vspace{0.4em}
\noindent\textit{Note:} The table reports representative menu dimensions rather than a complete description of each platform's pricing or product design. ``Content-access margin'' records whether tiers differ in access to content, content quality, or related products. ``Advertising margin'' records whether tiers differ in advertising exposure. Platform details were accessed in July 2026 from the official help or pricing pages of the New York Times (\url{https://www.nytimes.com/subscription}), Financial Times (\url{https://help.ft.com}), YouTube (\url{https://www.youtube.com/premium}), X (\url{https://help.x.com/en/using-x/x-premium}), Netflix (\url{https://help.netflix.com/en/node/24926}), and Peacock (\url{https://www.peacocktv.com}).

\normalsize
\endgroup

\end{bibunit}

\clearpage

\pagenumbering{arabic}
\renewcommand{\thepage}{S\arabic{page}}
\begin{center}
{\LARGE \textbf{Supplemental Appendix}}\\[6pt]
{\large \textit{For Online Publication}}\\[10pt]
{\large Mechanism Design for Ad-Supported Platforms}\\[5pt]
{\normalsize Shota Ichihashi \qquad Doh-Shin Jeon \qquad Byung-Cheol Kim}
\end{center}

\setcounter{section}{0}
\renewcommand{\thesection}{S\arabic{section}}
\renewcommand{\theequation}{S\arabic{section}.\arabic{equation}}
\setcounter{equation}{0}
\setcounter{theorem}{0}\renewcommand{\thetheorem}{S\arabic{theorem}}
\setcounter{lemma}{0}\renewcommand{\thelemma}{S\arabic{lemma}}
\setcounter{proposition}{0}\renewcommand{\theproposition}{S\arabic{proposition}}
\setcounter{corollary}{0}\renewcommand{\thecorollary}{S\arabic{corollary}}
\setcounter{claim}{0}\renewcommand{\theclaim}{S\arabic{claim}}
\setcounter{assumption}{0}\renewcommand{\theassumption}{S\arabic{assumption}}
\setcounter{definition}{0}\renewcommand{\thedefinition}{S\arabic{definition}}
\setcounter{example}{0}\renewcommand{\theexample}{S\arabic{example}}
\setcounter{remark}{0}\renewcommand{\theremark}{S\arabic{remark}}
\setcounter{note}{0}\renewcommand{\thenote}{S\arabic{note}}
\setcounter{case}{0}\renewcommand{\thecase}{S\arabic{case}}
\setcounter{figure}{0}\renewcommand{\thefigure}{S\arabic{figure}}
\setcounter{table}{0}\renewcommand{\thetable}{S\arabic{table}}
\setcounter{footnote}{0}

\makeatletter
\renewcommand{\@seccntformat}[1]{%
  \@ifundefined{suppappprefix@#1}{}{Appendix~}\csname the#1\endcsname\quad}
\@namedef{suppappprefix@section}{}
\makeatother

\begin{bibunit}

\section{Omitted Materials for \autoref{sectionItemCost}}\label{sectionAppendixItemCost}

\subsection{Item-Specific Distribution and Consumption Costs}\label{sectionAppendixItemCostModel}
In this appendix, the platform incurs a distribution cost of $\kappa_{\mathrm{D}}(i)\geq 0$ when it allocates item $i$, and the consumer incurs a type-independent attention cost $\kappa_{\mathrm{A}}(i)\geq 0$ from consuming item $i$.
These costs enter players' payoffs additively separably.

After optimizing transfers as in the baseline model, the platform's virtual surplus is written as
\[
\int_0^1
\sum_{i\in \bundle(\type)}
\left\{
\virtual(\type)\left[\quality(i)-\disutil(\matching(i|\type))\right]
+\rev(\matching(i|\type))-\kappa(i)
\right\}
\dint \typemeas(\type),
\]
where
$\kappa(i):= \kappa_{\mathrm{D}}(i)+\kappa_{\mathrm{A}}(i)$ {is the total item cost for item $i$}.
The item and ad contributions to the virtual surplus are
\begin{equation}
\alpha_\type(i):= \virtual(\type)\quality(i)-\kappa(i)
\quad\text{and}\quad
\rho_\type(j):= \rev(j)-\virtual(\type)\disutil(j),
\end{equation}
respectively,
with $\rev(\noad)=\disutil(\noad)=\rho_\type(\noad)=0$.

To concisely describe the optimal mechanism, we introduce positive assortative matching between items and ads.
Take any maps $\rho:\adset\to\R$ and $\alpha:\qset\to\R$, which assign each ad $j\in\adset$ an ad score, $\rho(j)$, and each item $i\in\qset$ an item score, $\alpha(i)$.
Recall that a permutation $\mu:\adset\to\adset$ is a $\rho$-permutation if 
\begin{equation*}
\rho(\mu(1))\geq \rho(\mu(2))\geq \cdots \geq \rho(\mu(\adnumber)).
\end{equation*}
Call a permutation $\lambda:\qset\to\qset$ an $\alpha$-permutation if $\lambda$ labels items in such a way that an item with a larger index has a lower score---i.e.,
\begin{equation*}
\alpha(\lambda(1))\geq \alpha(\lambda(2))\geq \cdots \geq \alpha(\lambda(\itemnumber)).
\end{equation*}
If multiple ads or items have the same score under $\rho(\cdot)$ or $\alpha(\cdot)$, the corresponding permutation may not be unique.
\begin{definition}\label{definitionRhoPositiveAssortative}
An advertising policy is a \emph{$\rho$-positive assortative matching} if the following holds for some $\rho$-permutation, $\mu$:
For each $k=1,\ldots,\itemnumber$, item $\itemnumber-k+1$, which has the $k$-th highest quality, is matched with ad $\mu(k)$ if $k\leq\adnumber$ and $\rho(\mu(k))\geq0$, and not matched with any ad otherwise.
\end{definition}
We also introduce positive assortative matching in terms of item scores and virtual ad profits, which simplifies the exposition of the optimal mechanism.\footnote{In either definition, matching an item with an ad whose score is zero is payoff equivalent, in virtual surplus, to matching it with no ad; the definitions select the zero-profit ad whenever it is available.}
\begin{definition}\label{definitionPositiveAssortative}
An advertising policy is a \emph{$(\alpha,\rho)$-positive assortative matching} if the following holds for some $\alpha$-permutation, $\lambda$, and some $\rho$-permutation, $\mu$:
For each $k=1,\ldots,\itemnumber$, item $\lambda(k)$, which has the $k$-th highest item score, is matched with ad $\mu(k)$ if $k\leq\adnumber$ and $\rho(\mu(k))\geq0$, and not matched with any ad otherwise.
\end{definition}

\begin{proposition}\label{propositionItemCost}
An optimal mechanism has the following properties.
\begin{enumerate}
\item For any consumer type $\type \in [0,1]$, the platform adopts an $(\alpha_\type,\rho_\type)$-positive assortative matching.
{There is a type-dependent threshold $\underline{\alpha}(\type)$ such that every item with $\alpha_\type(i) > \underline{\alpha}(\type)$ is allocated, no item with $\alpha_\type(i)<\underline{\alpha}(\type)$ is allocated, and possibly a subset of items with $\alpha_\type(i)=\underline{\alpha}(\type)$ is allocated.}
\item If higher-quality items have higher total item costs, i.e., for any items $i$ and $i'$, $\quality(i)\geq\quality(i')$ implies $\kappa(i)\geq\kappa(i')$, then for negative virtual types (i.e., $\type<\zerotype$), the allocation in {Part 1} is equivalent to adopting a $\rho_\type$-negative assortative matching and allocating all items whose quality levels are below a type-dependent threshold $\maxquality(\type)$, no items whose quality levels are above it, and possibly a subset of items at the threshold, where $\maxquality(\type)$ is increasing in $\type$.
\item If $\kappa(i)$ is constant across items, then for positive virtual types (i.e., $\type>\zerotype$), the allocation in {Part 1} is equivalent to adopting a $\rho_\type$-positive assortative matching and allocating all items whose quality levels are above a type-dependent threshold $\minquality(\type)$, no items whose quality levels are below it, and possibly a subset of items at the threshold, where $\minquality(\type)$ is decreasing in $\type$.
\end{enumerate}
\end{proposition}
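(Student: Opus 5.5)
The plan mirrors the proof of \autoref{lemma0}: solve the relaxed problem type by type, then verify monotonicity. Pointwise, type $\type$'s virtual surplus is
\[
\max_{\matching\in\matchingset}\sum_{i\in\qset}\max\bigl(0,\ \alpha_\type(i)+\rho_\type(\matching(i))\bigr).
\]
As in \autoref{sectionAppendixA1}, the first step drops ads with negative $\rho_\type$ and pads with fictitious zero-profit ads. This turns the problem into an exhaustive one-to-one matching of $\itemnumber$ items, scored by $\alpha_\type(i)$, with $\itemnumber$ ads, scored by the top $\itemnumber$ values of $\max(0,\rho_\type)$.

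The function $\phi(a,\rho)=\max(0,a+\rho)$ is convex in $a+\rho$, so it is \emph{supermodular} in $(a,\rho)$ regardless of the sign of $\virtual(\type)$. The standard assignment result already cited (\citealp{galichon2018optimal}) then shows that positive assortative matching in $(\alpha_\type,\rho_\type)$ is optimal. This gives the $(\alpha_\type,\rho_\type)$-positive assortative policy of Part~1. Under that policy, the pair contributions $\alpha_\type(\lambda(k))+\rho_\type(\mu(k))$ are decreasing in $k$. Allocating exactly the pairs with nonnegative contribution therefore yields a threshold $\underline\alpha(\type)$ on item scores, with possible ties at the threshold.

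The next step is monotonicity of $\sum_{i\in\bundle(\type)}[\quality(i)-\disutil(\matching(i|\type))]$. I would reuse the revealed-preference argument of \autoref{sectionAppendixA1}, replacing $\grossr$ by $\grossr-\sum\kappa$. That argument only uses the fact that net quality enters multiplicatively with $\virtual(\type)$ and that $\virtual$ is increasing, so it carries over verbatim. Transfers then follow from the envelope formula.

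Part~2 is a reinterpretation of Part~1 for $\virtual(\type)<0$. If $\quality(i)\ge\quality(i')$ implies $\kappa(i)\ge\kappa(i')$, then $\alpha_\type(i)=\virtual(\type)\quality(i)-\kappa(i)$ is decreasing in quality. An $\alpha_\type$-permutation can thus be taken to be increasing quality order, so positive assortative matching in $(\alpha,\rho)$ is $\rho_\type$-negative assortative in $(\quality,\rho)$. A threshold in $\alpha$ becomes a lower-contour set in quality. Ties arise when distinct qualities share the same $\alpha$ only if both $\virtual\quality$ and $\kappa$ offset, and I will handle these by allowing a subset at the threshold.

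For monotonicity of $\maxquality(\type)$ on $[0,\zerotype)$, I will adapt the two-case contradiction from the proof of \autoref{theorem}. The key inequality there is that $\vs_{\type_L}(i,j)\ge0$ implies $\vs_{\type_H}(i,j)\ge0$. This still holds because $\vs_\type(i,j)=\virtual(\type)[\quality(i)-\disutil(j)]+\rev(j)-\kappa(i)$ is increasing in $\type$ by condition \eqref{equationNON}. The other ingredient is that pair contributions decrease along the matching order.

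Part~3 applies the same reinterpretation for $\virtual(\type)>0$ with constant $\kappa$. Here $\alpha_\type$ is increasing in quality, so the $\alpha$-order is decreasing quality, and the policy is $\rho_\type$-positive assortative. The allocated set is an upper-contour set $\{\quality(i)\ge\minquality(\type)\}$. To show $\minquality$ is decreasing, I will run the mirror image of the theorem's argument. Suppose $\type_L<\type_H$ and $\type_H$ has a strictly higher cutoff. Take the lowest-quality item $i_H$ allocated to $\type_L$, with its ad $j$. Since $\vs_\type(i,j)$ is increasing in $\type$, the pair $(i_H,j)$, or a better replacement ad, has nonnegative contribution at $\type_H$. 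The assortative structure then forces $\type_H$ to receive an additional item, which is a contradiction.

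I expect the main obstacle to be this monotonicity step. The ad ranking $\rho_\type$ itself changes with $\type$, so the identity of the ad paired with the marginal item shifts. I would handle this exactly as in the second case of the proof of \autoref{theorem}: counting ads shows that some ad used at $\type_L$ is unused at $\type_H$, and that ad has $\rho$ at least that of the marginal ad, giving the contradiction.
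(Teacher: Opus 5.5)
Your proposal is correct. For Part~1, the monotonicity of net quality, and the translation of the $(\alpha_\type,\rho_\type)$-ranking into quality order in Parts~2 and~3, it follows the paper's proof essentially verbatim.

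The genuine difference is in how you prove that $\maxquality(\type)$ increases and $\minquality(\type)$ decreases. The paper uses no exchange argument. It fixes the rank-$k$ item, which is the $k$-th lowest-quality item in Part~2 and the $k$-th highest in Part~3; this is the same item for every type on a given side of $\zerotype$. It then writes that item's contribution as $\virtual(\type)\quality_{[k]}-\kappa_{[k]}+\beta_k(\type)$, where $\beta_k(\type)$ is the $k$-th largest value of $\max(0,\rho_\type(\cdot))$, padded with zeros. This order statistic is continuous and piecewise affine in $\virtual(\type)$, with slopes $-\disutil(j)$ or $0$. By \eqref{equationNON}, the contribution therefore has slopes $\quality_{[k]}-\disutil(j)\ge0$ or $\quality_{[k]}\ge0$, so it is nondecreasing in $\type$ and allocated sets are nested by rank. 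This handles Parts~2 and~3 at once. The obstacle you anticipate---the identity of the ad attached to the marginal item changing with $\type$---never arises, because only the value of the order statistic matters.

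Your route, mirroring the two-case argument in the proof of \autoref{theorem}, also works and stays purely combinatorial. In Part~3, however, it meets cases with no counterpart for negative virtual types:
\begin{itemize}
\item a positive virtual type may receive its lowest-quality item with no ad, so ad counting is unavailable;
\item an ad used at $\type_L$ may have $\rho_{\type_H}<0$, so the $(k_H+1)$-th ranked item at $\type_H$ need not carry $\mu_H(k_H+1)$.
\end{itemize}
Both are dispatched by one observation: $\beta_{k_H+1}(\type_H)\ge0$ weakly exceeds the score of the no-ad option and of every ad that $\type_H$ leaves unused. When $\type_L$'s marginal item does carry a real ad, positive assortativity ensures that all of $\type_L$'s items do, so your counting step is then valid.

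A minor point: with $\virtual(\type)<0$ and $\kappa$ weakly increasing in quality, distinct qualities never share an $\alpha_\type$ value. Ties in Part~2 therefore come only from equal-quality items.
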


\begin{proof}
For a fixed type $\type$, use the definitions of $\alpha_\type$ and $\rho_\type$ to write the relaxed problem as
\begin{equation}\label{equationItemCostRelaxed}
\max_{\matching\in\matchingset}
\sum_{i\in\qset}
\max\left\{0,\alpha_\type(i)+\rho_\type(\matching(i))\right\}.
\end{equation}
As in the baseline characterization, ads with negative $\rho_\type$ are never used because the no-ad option has score zero.
Let $\beta_1(\type)\geq\cdots\geq\beta_{\itemnumber}(\type)$ be the ordered list obtained by taking all nonnegative ad scores $\rho_\type(j)$, adding enough zeros, if necessary, to reach $\itemnumber$ entries, and then keeping the $\itemnumber$ largest entries.
An entry $\beta_k(\type)=0$ means that the corresponding item is matched with a zero-profit ad if such an ad is available at rank $k$, and with no ad otherwise.

The function $m(\alpha,\rho)=\max\{0,\alpha+\rho\}$ is supermodular in $(\alpha,\rho)$.
Thus, \eqref{equationItemCostRelaxed} is solved by an $(\alpha_\type,\rho_\type)$-positive assortative matching.
Because both $\alpha_\type(\lambda(k))$ and $\beta_k(\type)$ are decreasing in $k$, the raw pairwise contribution $\alpha_\type(\lambda(k))+\beta_k(\type)$ is decreasing in $k$.
Choose the payoff-equivalent tie-breaking that allocates item-ad pairs with zero raw contribution.
Then an item at rank $k$ is allocated if and only if $\alpha_\type(\lambda(k))+\beta_k(\type)\geq0$, so the allocated items form an initial segment in the $\alpha_\type$-ranking.
{Equivalently, there is a type-dependent threshold $\underline{\alpha}(\type)$ such that all items with $\alpha_\type(i) > \underline{\alpha}(\type)$ are allocated, no item with $\alpha_\type(i)<\underline{\alpha}(\type)$ is allocated, and possibly a subset of items with $\alpha_\type(i)=\underline{\alpha}(\type)$ is allocated.}

If $\type<\zerotype$ and higher-quality items have higher total item costs, then $\alpha_\type(i)$ is decreasing in $q(i)$: Higher quality lowers $\virtual(\type)\quality(i)$ and raises $\kappa(i)$.
Thus the $(\alpha_\type,\rho_\type)$-positive assortative matching is equivalent to a $\rho_\type$-negative assortative matching in terms of $q(i)$.
The initial segment in the $\alpha_\type$-ranking is a lower set in quality.
Thus, up to possible ties at the cutoff, it can be written as all items with $\quality(i)\leq\maxquality(\type)$, with the convention that this set is empty if $\maxquality(\type)<\quality(1)$.
Index items by increasing quality, and let $\quality_{[k]}$ and $\kappa_{[k]}$ denote the quality and cost of the $k$-th lowest-quality item.
For negative virtual types, the $k$-th lowest-quality item is paired with $\beta_k(\type)$ and has contribution $\virtual(\type)\quality_{[k]}-\kappa_{[k]}+\beta_k(\type)$.
The order statistic $\beta_k(\type)$ is piecewise affine in $\virtual(\type)$, with slope either $-\disutil(j)$ for some ad $j$ or zero.
Wherever this contribution is differentiable in $\virtual(\type)$, its slope is therefore either $\quality_{[k]}-\disutil(j)$ or $\quality_{[k]}$.
The baseline net-quality assumption makes both slopes nonnegative.
Because the contribution is continuous and piecewise affine, it is increasing in $\virtual(\type)$.
Because $\virtual(\type)$ is increasing in $\type$, once the $k$-th lowest-quality item is allocated to some negative virtual type, it remains allocated to all higher negative virtual types.
Therefore $\maxquality(\type)$ can be chosen increasing on $[0,\zerotype)$.

Finally, suppose $\type>\zerotype$ and $\kappa(i)=\kappa$ for every item $i$.
Then $\alpha_\type$ is increasing in quality.
Therefore the $(\alpha_\type,\rho_\type)$-positive assortative matching from Part 1 pairs items and ads in descending order of $\quality(i)$ and $\rho_\type(j)$.
Under the tie-breaking convention in \autoref{definitionRhoPositiveAssortative}, this advertising policy is a $\rho_\type$-positive assortative matching.
The initial segment in the $\alpha_\type$-ranking is an upper set in quality.
Thus, up to possible ties at the cutoff, it can be written as all items with $\quality(i)\geq\minquality(\type)$, with the convention that this set is empty if $\minquality(\type)>\quality(\itemnumber)$.
It remains to verify that $\minquality(\type)$ can be chosen decreasing.
Index items by decreasing quality.
Under positive assortative matching, the $k$-th highest-quality item has contribution $\virtual(\type)\quality^{[k]}-\kappa+\beta_k(\type)$,
where $\quality^{[k]}$ is its quality.
The order statistic $\beta_k(\type)$ is piecewise affine in $\virtual(\type)$, with slope either $-\disutil(j)$ for some ad $j$ or zero.
By the baseline net-quality assumption, the slope of this contribution with respect to $\virtual(\type)$ is either $\quality^{[k]}-\disutil(j)\geq0$ or $\quality^{[k]}\geq0$.
Because the contribution is continuous and piecewise affine, it is increasing in $\virtual(\type)$ and hence in $\type$.
Thus, once the $k$-th highest-quality item is allocated to some positive virtual type, it remains allocated to all higher types.
Therefore $\minquality(\type)$ can be chosen decreasing in $\type$ {on $(\zerotype,1]$}.

The allocation rule that solves the relaxed problem satisfies the monotonicity condition for incentive compatibility.
Suppose, toward a contradiction, that two types $\type_H>\type_L$ satisfy
\[
\sum_{i\in \bundle(\type_L)}\left[\quality(i)-\disutil(\matching(i|\type_L))\right]
>
\sum_{i\in \bundle(\type_H)}\left[\quality(i)-\disutil(\matching(i|\type_H))\right]
\]
under the relaxed solution.
Pointwise optimality for type $\type_L$ gives
\[
\begin{aligned}
&\virtual(\type_L)\sum_{i\in \bundle(\type_L)}
\left[\quality(i)-\disutil(\matching(i|\type_L))\right]
+\sum_{i\in \bundle(\type_L)}\rev(\matching(i|\type_L))
-\sum_{i\in \bundle(\type_L)}\kappa(i)\\
&\qquad\geq
\virtual(\type_L)\sum_{i\in \bundle(\type_H)}
\left[\quality(i)-\disutil(\matching(i|\type_H))\right]
+\sum_{i\in \bundle(\type_H)}\rev(\matching(i|\type_H))
-\sum_{i\in \bundle(\type_H)}\kappa(i).
\end{aligned}
\]
Replacing $\virtual(\type_L)$ by $\virtual(\type_H)$ raises the left-hand side by strictly more than the right-hand side, because $\virtual(\type_H)>\virtual(\type_L)$ and the net-quality sum for $\type_L$ is strictly larger than that for $\type_H$.
This contradicts pointwise optimality for type $\type_H$.
Thus the net-quality sum is nondecreasing in $\type$, and the standard envelope formula recovers the transfer rule.
\end{proof}

Part 1 gives a general description of the optimal allocation for any consumer type.
Once item costs are introduced, the relevant ordering of items is no longer the quality ordering itself but the cost-adjusted item score, $\alpha_\type(i)=\virtual(\type)\quality(i)-\kappa(i)$.
The platform positively assortatively matches this score with virtual ad profits and allocates items above a threshold in this score, up to possible ties at the threshold.
In general, positive assortative matching based on $\alpha_\type(i)$ and $\rho_\type(j)$ does not imply positive or negative assortative matching based on $q(i)$ and $\rho_\type(j)$.

Part 2 shows that, for negative virtual types, this general formula reduces to the pattern described in \autoref{theorem} when higher-quality items are more costly.
In that case, $\alpha_\type(i)$ is decreasing in quality, so the score cutoff is equivalent to an upper cutoff in quality, up to possible ties at the cutoff.

Part 3 shows that under the stronger assumption that the item cost is constant across items, the allocation to positive virtual types becomes the mirror image of that for negative virtual types.
If the common cost is strictly positive, consumers with positive virtual types may no longer receive all items.
Whenever they do not, the allocated set is written as all items whose quality levels are above a threshold.
The item-ad matching also becomes {positively} assortative in item quality and virtual ad profits.
The reason is that, for $\virtual(\type)>0$ and a common cost $\kappa>0$, the contribution $\max\{0,\virtual(\type)\quality-\kappa+\rho\}$ has nontrivial increasing differences in $(\quality,\rho)$; equivalently, the positive assortative matching follows from the supermodularity of this virtual-surplus function.

\paragraph{Investment Incentives.}
The item cost $\kappa(i)$ also changes how advertising affects the platform's investment incentives.
The following example illustrates this.

\begin{example}\label{exampleItemCostInvestment}
Consider a platform with one item and one ad.
The ad has revenue $\rev>0$ and no disutility, $\disutil=0$.
Allocating the item entails a total item cost $\kappa>0$.
Consumer types are uniformly distributed on $[0,1]$, so the virtual type is $\virtual(\type)=2\type-1$.
The platform chooses the item's quality $\quality\geq 0$ at investment cost $c\quality^2/2$, where $c>0$.

Fix $\quality$.
After optimizing transfers, the platform's payoff equals the maximized virtual surplus net of the investment cost:
\begin{equation}\label{equationItemCostInvestmentProfit}
\Pi(\quality,\rev)
=
\int_0^1
\max\left\{(2\type-1)\quality+\rev-\kappa,0\right\}\dint \type
-\frac{c\quality^2}{2}.
\end{equation}
Let $\typemeas^*(x)=\Pr(\virtual(\type)\leq x)$ denote the distribution of the virtual type $x=\virtual(\type)$.
For $\quality>0$, \eqref{equationItemCostInvestmentProfit} can be written as
\begin{equation}\label{equationItemCostInvestmentProfitVirtual}
\Pi(\quality,\rev)
=
\int^1_{(\kappa-\rev)/\quality}
\left(x\quality+\rev-\kappa\right)\dint \typemeas^*(x)
-\frac{c\quality^2}{2},
\end{equation}
where the lower limit is understood to be truncated to the support $[-1,1]$.
Thus, given $\quality$, the platform allocates the item to consumers whose virtual types exceed $(\kappa-\rev)/\quality$, or equivalently to types above
\begin{equation}\label{equationCutoff}
    \type^*(\quality,\rev)
    =
    \frac{1}{2}\left(1+\frac{\kappa-\rev}{\quality}\right),
\end{equation}
again with the cutoff truncated to $[0,1]$.

The marginal effect of quality on the platform's payoff is
\begin{equation}\label{equationItemCostInvestmentMarginal}
\frac{\partial \Pi}{\partial \quality}
=
\int^1_{(\kappa-\rev)/\quality} x\dint \typemeas^*(x)
-c\quality,
\end{equation}
with the same truncation convention.
For each fixed $\quality$, the integral term in \eqref{equationItemCostInvestmentMarginal} is non-monotone in $\rev$ and is single-peaked at $\rev=\kappa$.
{When $\rev<\kappa$, \eqref{equationCutoff} shows that an increase in ad revenue lowers the allocation cutoff, which expands allocation among consumers with positive virtual types and raises the marginal return to quality.}
This effect is absent in our baseline model.
{When $\rev>\kappa$, all consumers with positive virtual types are already served; a further increase in ad revenue expands allocation among negative virtual types, which lowers the marginal return to quality.}
Consequently, when the quality maximizer is unique, the platform's optimal quality choice is single-peaked in ad revenue, with the peak at $\rev=\kappa$ (see \autoref{figureItemCostInvestment}).
\begin{figure}[!htbp]
    \centering
    \begin{tikzpicture}
        \begin{axis}[
            clip=false,
            width=8cm,
            height=6cm,
            xlabel={Ad revenue $\rev$},
            ylabel={Optimal quality $\quality$},
            grid=none,
            axis lines=left,
            line width=0.08pt,
            tick label style={font=\footnotesize},
            label style={font=\footnotesize},
            enlargelimits=upper,
            xmin=0,
            xmax=1,
            ymin=0,
            ymax=1.35
        ]
            \addplot [line width=0.9pt, blue] coordinates {
                (0.000000,1.167459)
                (0.025000,1.182382)
                (0.050000,1.195321)
                (0.075000,1.206529)
                (0.100000,1.216196)
                (0.125000,1.224468)
                (0.150000,1.231454)
                (0.175000,1.237241)
                (0.200000,1.241895)
                (0.225000,1.245467)
                (0.250000,1.247994)
                (0.275000,1.249500)
                (0.300000,1.250000)
                (0.325000,1.249500)
                (0.350000,1.247994)
                (0.375000,1.245467)
                (0.400000,1.241895)
                (0.425000,1.237241)
                (0.450000,1.231454)
                (0.475000,1.224468)
                (0.500000,1.216196)
                (0.525000,1.206529)
                (0.550000,1.195321)
                (0.575000,1.182382)
                (0.600000,1.167459)
                (0.625000,1.150200)
                (0.650000,1.130102)
                (0.670370,1.111111)
            };
            \addplot [line width=0.9pt, blue] coordinates {
                (0.670370,0.000000)
                (1.000000,0.000000)
            };
            \node[fill=black, circle, inner sep=1.5pt, blue] at (axis cs:0.670370,1.111111) {};
            \node[fill=black, circle, inner sep=1.5pt, blue] at (axis cs:0.670370,0.000000) {};
        \end{axis}
    \end{tikzpicture}
    \caption{{The optimal quality choice as a function of ad revenue $\rev\in[0,1]$.
    The figure uses $c=0.2$ and $\kappa=0.3$.
    At the cutoff $\rev=\kappa+2/(27c)\simeq0.670$, both $\quality=0$ and $\quality=2/(9c)\simeq1.111$ are optimal.}}
    \label{figureItemCostInvestment}
\end{figure}
\end{example}

\section{Omitted Materials for \autoref{sectionNoncont}}\label{sectionAppendixFreeDisposal}\label{sectionAppendixD}

{We build on the item-cost environment of Supplemental Appendix~\ref{sectionAppendixItemCost}.
Specifically, we set $\kappa_{\mathrm{D}}(i)=0$ and $\kappa_{\mathrm{A}}(i)=\attention>0$ for every item $i$, so the total item cost is $\kappa(i)=\attention$.
Unlike in the preceding appendix, consumption is noncontractible: Consumers may freely dispose of any allocated item-ad pair.
A type-$\type$ consumer's gross utility from consuming item $i$ matched with ad $j$ is therefore}
\[
    \type\bigl[\quality(i) - \disutil(j)\bigr] - \attention,
\]
{and her payoff from consuming a set of items $\bundle$ under an advertising policy $\matching$ and paying $\transfer$ is}
\begin{equation}\label{equationUtil1}
    \type \sum_{i\in \bundle} \bigl[\quality(i) - \disutil(\matching(i))\bigr] - \attention\,|\bundle| - \transfer,
\end{equation}
{where $|\bundle|$ is the cardinality of $\bundle$.}

{Free disposal, rather than the attention cost alone, creates a new constraint for the platform.
A consumer ignores an item-ad pair whenever $\type[\quality(i)-\disutil(j)]-\attention<0$.
If consumption were contractible, the platform could induce consumption of such a pair by offering a negative price (i.e., $\transfer<0$).
With free disposal, however, the consumer would receive the monetary transfer and then ignore the pair.}

{Accordingly, we replace the incentive compatibility constraint \eqref{IC} with the following stronger constraint, which allows a consumer to report any type and then consume any subset of the resulting bundle:}
\begin{equation}
\begin{multlined}
    \type \sum_{i\in \bundle(\type)} \bigl[\quality(i) - \disutil(\matching(i|\type))\bigr]
    - \attention\,|\bundle(\type)| - \transfer(\type) \\
    \ge
    \max_{\type',\,\bundle'\subseteq \bundle(\type')}
    \Bigl\{\type \sum_{i\in \bundle'} \bigl[\quality(i) - \disutil(\matching(i|\type'))\bigr]
    - \attention\,|\bundle'| - \transfer(\type')\Bigr\},
\end{multlined}
\tag{ICO}\label{ICO}
\end{equation}
{Here, ICO stands for ``incentive compatibility and obedience.''
The left-hand side is the consumer's utility from truthfully reporting $\type$ and consuming every allocated pair.
The right-hand side covers both discarding items after a truthful report and a \emph{double deviation}, whereby the consumer reports $\type'\neq\type$ and consumes a subset $\bundle'\subseteq\bundle(\type')$.
The platform retains the baseline revenue objective and imposes individual rationality using the payoff in \eqref{equationUtil1}; only the IC constraint is replaced by \eqref{ICO}.
In the original baseline model, $\attention=0$ and condition \eqref{equationNON} makes the consumption payoff from every item-ad pair nonnegative, so \eqref{ICO} reduces to \eqref{IC}.
\autoref{sectionItemCost} permits positive attention costs but assumes contractible consumption; free disposal makes obedience an additional requirement here.}

{A general characterization of the optimal mechanism is beyond the scope of the paper, but we obtain characterizations under two sets of assumptions.
The first result rests on the following assumption.}
\begin{assumption}\label{assumptionSimple}
$\rev(j)\geq 0$ is constant and $\disutil(j)=0$ across all $j \in \adset$, and the number of ads is greater than the number of items, i.e., $\adnumber \ge \itemnumber$.
\end{assumption}

\begin{proposition}\label{propositionICO}
Under \autoref{assumptionSimple}, in the optimal mechanism, the platform allocates to each type $\type$ all items whose quality levels $\quality(i)$ belong to some interval $[\minquality(\type), \maxquality(\type)]$, where $\minquality(\type)$ is decreasing, $\maxquality(\type)$ is increasing, and $\maxquality(\type)$ equals the highest possible quality $\quality(\itemnumber)$ whenever $\type$ has a positive virtual value.
Moreover, every allocated item is matched with some ad.
\end{proposition}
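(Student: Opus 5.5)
The plan is to follow the approach described in \autoref{sectionNoncont}. I would first solve a relaxed problem that ignores double deviations but keeps obedience for truthful reports. I would then check that its solution satisfies every constraint in \eqref{ICO}.

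\textbf{Reduction.} Under \autoref{assumptionSimple}, every ad has $\rho_\type(j)=\rev\ge 0$ and no nuisance, and $\adnumber\ge\itemnumber$. So, without loss, every allocated item is matched with some ad. This gives the last claim of the proposition, and each allocated, consumed item then yields ad revenue $\rev$. By obedience, we may assume that every mechanism allocates only pairs the consumer actually consumes. That is, $i\in\bundle(\type)$ implies $\type\quality(i)-\attention\ge0$: the platform earns nothing from pairs that are disposed of, and those pairs do not affect the consumer's payoff.

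Given this, a truthful consumer's payoff satisfies the usual envelope condition $U'(\type)=\sum_{i\in\bundle(\type)}\quality(i)$. Type $0$ consumes nothing because $\attention>0$, so IR binds at $\type=0$ and $U(0)=0$. The standard integration then bounds revenue by the virtual surplus
\[
\int_0^1\sum_{i\in\bundle(\type)}\bigl[\virtual(\type)\quality(i)-\attention+\rev\bigr]\dint\typemeas(\type).
\]
This bound is maximized pointwise subject to the obedience restriction $\quality(i)\ge \attention/\type$.

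\textbf{Pointwise solution.} For each $\type$, allocate item $i$ if and only if $\quality(i)\ge\attention/\type$ and $\virtual(\type)\quality(i)+\rev-\attention\ge0$, breaking ties in favor of allocation.

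If $\virtual(\type)<0$, the second condition reads $\quality(i)\le(\rev-\attention)/(-\virtual(\type))$. The allocated set is therefore the set of items with quality in $[\attention/\type,\,(\rev-\attention)/(-\virtual(\type))]$. The lower cutoff decreases in $\type$. The upper cutoff increases in $\type$, because $-\virtual(\type)$ decreases (when $\rev<\attention$ the set is empty).

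If $\virtual(\type)\ge0$, the second condition is a lower bound on quality, or is vacuous. The allocated set is then an upper set with cutoff $\minquality(\type)=\max\{\attention/\type,(\attention-\rev)/\virtual(\type)\}$, which decreases in $\type$, and $\maxquality(\type)=\quality(\itemnumber)$.

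I would then pin down the exact cutoffs $\minquality(\type),\maxquality(\type)$ as the smallest and largest allocated qualities, with the usual convention for an empty bundle. Because $\minquality$ decreases and $\maxquality$ increases, the bundles are nested, $\bundle(\type)\subseteq\bundle(\type')$ for $\type<\type'$. Consequently the net quality $\sum_{i\in\bundle(\type)}\quality(i)$ is increasing in $\type$. Transfers are set by the envelope formula as in \autoref{lemma0}, with the attention term added.

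\textbf{Verifying \eqref{ICO}.} This is the step I expect to be the main obstacle. Consider a type $\type$ that reports $\type'$. It then optimally consumes exactly the items $i\in\bundle(\type')$ with $\type\quality(i)\ge\attention$, so its deviation payoff is
\[
g(\type,\type')-\transfer(\type'),\qquad g(\type,\type'):=\sum_{i\in\bundle(\type')}\max\{0,\type\quality(i)-\attention\}.
\]
This payoff already accounts for every subset $\bundle'\subseteq\bundle(\type')$. The function $g$ is absolutely continuous in $\type$ with derivative $\sum_{i\in\bundle(\type'),\,\type\quality(i)\ge\attention}\quality(i)$. Nestedness of the bundles makes this derivative increasing in $\type'$, so $g$ has increasing differences.

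On the diagonal, obedience of the constructed allocation gives $g(\type,\type)=\type\sum_{i\in\bundle(\type)}\quality(i)-\attention|\bundle(\type)|$. Hence the envelope payoff is $U(\type)=\int_0^\type\partial_1 g(t,t)\dint t$. The standard single-crossing argument then shows that
\[
U(\type)-\bigl[g(\type,\type')-\transfer(\type')\bigr]=\int_{\type'}^{\type}\bigl[\partial_1g(t,t)-\partial_1g(t,\type')\bigr]\dint t\ge0
\]
in both orders of $\type,\type'$. This delivers \eqref{ICO}.

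The delicate point is the kink of $\max\{0,\cdot\}$. I would handle it by noting that $\partial_1 g(t,\type')\le\partial_1 g(t,t)$ for $t>\type'$ (and the reverse for $t<\type'$) follows directly from set inclusion of the consumed items, with no differentiability beyond almost-everywhere needed.
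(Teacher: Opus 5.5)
Your proposal is correct and follows essentially the paper's proof. Both solve the weak problem (report-only IC plus obedience) pointwise, obtain nested bundles, and verify \eqref{ICO} by comparing $\int \grossq(t)\dint t$ with the integral of the optimal-disposal derivative, which is exactly the paper's envelope step with $V(y)=g(y,\hat\type)$; your single-crossing identity just handles downward misreports inside the same integral, whereas the paper notes that no disposal occurs there and falls back on report-only IC. One small repair: derive nestedness directly from the two pointwise allocation inequalities, as the paper does, since both are preserved as $\type$ rises because $\quality(i)\ge 0$ and $\virtual$ is increasing. Your formula cutoffs are not monotone across $\zerotype$ when $\rev<\attention$: $\attention/\type$ below $\zerotype$ jumps up to $(\attention-\rev)/\virtual(\type)$ above it, and the claim survives only because negative virtual types then receive nothing.
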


{Relative to the common-cost case of \autoref{propositionItemCost}, free disposal adds an obedience constraint to each item-level allocation decision.
For negative virtual types, this constraint trims sufficiently low-quality items from the lower-quality set in that proposition, so the allocated qualities lie in an interval $[\minquality(\type),\maxquality(\type)]$.
For positive virtual types, the allocated set remains an upper set in quality.
The mechanism also preserves two features of \autoref{theorem}: The platform may serve negative virtual types, and, when it does, the upper threshold $\maxquality(\type)$ reflects its incentive to prioritize low-quality items to save information rents.
As $\type$ increases, $\minquality(\type)$ decreases and $\maxquality(\type)$ increases.}

{We next develop the part of the argument common to \autoref{propositionICO} and \autoref{propositionICO2}.
For both results, we solve the platform's problem in two steps.
First, we impose the report-only IC constraint from the item-cost environment together with obedience after a truthful report, but rule out double deviations; we call the resulting problem the \emph{weak problem}.
Second, we verify that its solution satisfies the full constraint \eqref{ICO}.}

The weak problem is written as follows:
\begin{align}
&\max_{\{(\matching(\cdot|\type), \bundle(\type), \transfer(\type))\}_{\type \in \Type}} \int^1_0  \transfer(\type) + \sum_{i \in \bundle(\type)} \rev(\matching(i|\type))
 \dint \typemeas(\type) \quad \tag{P1}\label{P1}     \\[12pt]
\text{subject to} \quad  \quad&\type \sum_{i \in \bundle(\type)} \left[ \quality (i) -  \disutil(\matching(i|\type)) \right] - \attention |\bundle(\type)|  - \transfer(\type) \notag\\
\ge&  \type \sum_{i \in \bundle(\type')} \left[ \quality (i) -  \disutil(\matching(i|\type')) \right] - \attention |\bundle(\type')|  - \transfer(\type'), \forall \type, \type' \in \Type \quad \tag{IC}\label{ICWeak} \\[12pt]
 \quad& \type[ \quality(i)  - \disutil(\matching(i | \type))] - \attention \ge 0, \forall \type, \forall i \in \bundle(\type)
 \quad \tag{O}\label{O}\\[12pt]
 \quad& \type \sum_{i \in \bundle(\type)} \left[ \quality (i) -  \disutil(\matching(i|\type)) \right] - \attention |\bundle(\type)|  - \transfer(\type) \ge 0 , \forall \type \in \Type \quad \tag{IR}\label{IRWeak}
\end{align}
{Constraint \eqref{ICWeak} rules out report-only deviations in the item-cost environment, the obedience constraint \eqref{O} requires consumers to optimally consume every allocated item after reporting truthfully, and \eqref{IRWeak} imposes individual rationality.}

{The weak problem has the same virtual-surplus objective as the problem in Supplemental Appendix~\ref{sectionAppendixItemCost}, specialized to $\kappa_{\mathrm{D}}(i)=0$ and $\kappa_{\mathrm{A}}(i)=\attention$ for every item.
It is equivalent to the following problem:}
\begin{align}
&\max_{\{(\matching(\cdot|\type), \bundle(\type))\}_{\type \in \Type}} \int^1_0  \sum_{i \in \bundle(\type)} \left\{\virtual(\type)[ \quality(i) - \disutil(\matching(i|\type)) ] + \rev(\matching(i|\type)) - \attention\right\}
 \dint \typemeas(\type), \quad \tag{P2}\label{P2}
 \end{align}
{The objective is maximized subject to the obedience constraint \eqref{O} and monotonicity of the allocation.
Using the notation of Supplemental Appendix~\ref{sectionAppendixItemCost}, each allocated item contributes $\alpha_\type(i)+\rho_\type(\matching(i|\type))$, where $\alpha_\type(i)=\virtual(\type)\quality(i)-\attention$ and $\rho_\type(j)=\rev(j)-\virtual(\type)\disutil(j)$.
Thus, relative to that appendix, the new restriction in the weak problem is obedience.
Hereafter, the relaxed problem refers to \eqref{P2} without the monotonicity constraint.}

\subsection{Proof of \autoref{propositionICO}}
We solve \eqref{P2}, verify the monotonicity of allocation, then verify the constraint \eqref{ICO}.
\paragraph{Allocation for Positive Virtual Types.}
Under \autoref{assumptionSimple}, write $\rev$ for the common value of $\rev(j)$.
Because $\adnumber \ge \itemnumber$, for each type we can match every item in $\qset$ with a distinct ad.
For allocated items, doing so increases virtual surplus and does not affect obedience or consumer utility, because $\rev(j)=\rev\ge0$ and $\disutil(j)=0$ for all $j\in\adset$.
We therefore select an optimal solution in which every item is matched with some ad and ties in the item-level allocation decision are resolved in favor of allocation.
The relaxed problem is then separable across items: item $i$ is allocated to type $\type$ if and only if
\begin{equation}\label{equationSimpleAllocation}
\type \quality(i)-\attention \ge 0
\quad\text{and}\quad
\virtual(\type)\quality(i)+\rev-\attention \ge 0.
\end{equation}
{The first inequality in \eqref{equationSimpleAllocation} is obedience, whereas the second is the nonnegative-virtual-surplus condition $\alpha_\type(i)+\rho_\type(j)\geq0$ from Supplemental Appendix~\ref{sectionAppendixItemCost}.}
Take any $\type > \zerotype$.
Because $\type>0$ and $\virtual(\type)>0$, both inequalities in \eqref{equationSimpleAllocation} are preserved as item quality increases.
Hence the allocated set is an upper set in quality: it can be written as all items whose quality levels lie in $[\minquality(\type),\quality(\itemnumber)]$, with the convention that this set is empty if $\minquality(\type)>\quality(\itemnumber)$.
As $\type$ increases, both inequalities in \eqref{equationSimpleAllocation} become easier to satisfy, so $\minquality(\type)$ is decreasing.

\paragraph{Allocation for Negative Virtual Types.}
For the lowest type $\type=0$, no item satisfies obedience, so the platform allocates no item.
Take any $\type \in (0,\zerotype)$.
By \eqref{equationSimpleAllocation}, the obedience constraint imposes a lower bound on item quality.
Because $\virtual(\type)<0$, the virtual-surplus constraint imposes an upper bound on item quality.
Thus, the allocated items have quality levels in a possibly empty interval $[\minquality(\type),\maxquality(\type)]$.
The lower bound $\minquality(\type)$ is decreasing in $\type$ because obedience becomes easier to satisfy.
The upper bound $\maxquality(\type)$ can be chosen increasing in $\type$: if $\type<\hat{\type}<\zerotype$ and item $i$ satisfies $\virtual(\type)\quality(i)+\rev-\attention\ge0$, then it also satisfies $\virtual(\hat{\type})\quality(i)+\rev-\attention\ge0$ because $\virtual(\hat{\type})\ge\virtual(\type)$ and $\quality(i)\ge0$.

\paragraph{Monotonicity.}
Monotonicity requires that the term $\sum_{i \in \bundle(\type)} \left[ \quality (i) -  \disutil(\matching(i|\type)) \right]$, which equals $\sum_{i \in \bundle(\type)} \quality (i)$,
is nondecreasing in $\type$.
Take any $\type'>\type$ and any item $i\in\bundle(\type)$.
By \eqref{equationSimpleAllocation},
\[
\type \quality(i)-\attention \ge 0
\quad\text{and}\quad
\virtual(\type)\quality(i)+\rev-\attention \ge 0.
\]
Because $\type'>\type$, $\quality(i)\ge0$, and $\virtual(\cdot)$ is increasing, both left-hand sides in these inequalities increase when $\type$ is replaced by $\type'$.
Hence $i\in\bundle(\type')$.
This argument applies both within each side of $\zerotype$ and across $\zerotype$: if $\type<\zerotype<\type'$, the same inequalities remain valid after replacing $\type$ by $\type'$.
Therefore, $\bundle(\type)\subseteq\bundle(\type')$ for all $\type'>\type$, and $\sum_{i\in\bundle(\type)}\quality(i)$ is nondecreasing in $\type$.
Thus, the monotonicity of allocation holds.

\paragraph{Solution to the Weak Problem Satisfies \eqref{ICO}.}
We show that the solution to the weak problem (i.e., the allocation rule constructed above and the transfer rule induced by the local IC constraint) satisfies \eqref{ICO}.
The proof of this step consists of two parts.
First, suppose that a consumer has type $\type$ but misreports to be $\hat\type< \type$.
Then, it is optimal for type $\type$ to consume all items in $\bundle(\hat\type)$.
Indeed, any item $i \in \bundle(\hat\type)$ satisfies $\hat\type \quality(i) - \attention \ge 0$, which also satisfies $\type \quality(i) - \attention \ge 0$.
Thus, there is no double deviation that involves downward misreporting.

Second, suppose to the contrary that there is some profitable double deviation for type $\type$ that involves upward misreporting, i.e., type $\type$ misreports to be type $\hat\type> \type$ and consumes some strict subset of allocated items.
{For $x>0$, let $\bundle'(x) \triangleq \{ i \in \bundle(\hat\type): \quality(i)  \ge \attention/x\}$ denote the optimal consumption that arises when type $x$ misreports to be $\hat\type$, and set $\bundle'(0)=\emptyset$.}
Also, define $\grossq^x(\hat\type) \triangleq \sum_{i \in \bundle'(x)} \quality(i)$ and $\size^x(\hat\type) \triangleq |\bundle'(x)|$.

The condition that type $\type$ benefits from this double deviation is written as 
\begin{equation}
\type \grossq^\type(\hat\type) - \attention \size^\type(\hat\type) - \transfer(\hat\type)>
\type \grossq(\type) - \attention \size(\type) - \transfer(\type),
\end{equation}
which we can write as
\begin{equation}\label{equationDouble1}
\type \grossq^\type(\hat\type) - \attention \size^\type(\hat\type) - U(\type) > \transfer(\hat\type),
\end{equation}
where
\begin{equation}
U(\type) = \type \grossq(\type) - \attention \size(\type) - \transfer(\type).
\end{equation}
Also denote
\begin{equation}
U(\hat\type) = \hat\type \grossq(\hat\type) - \attention \size(\hat\type) - \transfer(\hat\type).
\end{equation}
Solving the above equation with respect to $\transfer(\hat\type)$ and plugging it into \autoref{equationDouble1}, we obtain
\begin{equation}\label{equationDouble2}
U(\hat\type) - U(\type)> 
 \hat\type \grossq(\hat\type) - \attention \size(\hat\type)  -[\type \grossq^\type(\hat\type) - \attention \size^\type(\hat\type)].  
\end{equation}
By the local IC, we get
\begin{equation}\label{equationDouble11}
U(\hat\type) - U(\type) = \int^{\hat\type}_\type \grossq(x)\, \dint x.
\end{equation}
To arrange the RHS of \autoref{equationDouble2}, we apply the envelope theorem to
\begin{equation}\label{equationDouble3}
V(y)\triangleq \max_{x \in [\typemin, \typemax]}   y \grossq^x (\hat\type) - \attention \size^x(\hat\type).
\end{equation}
This is the problem of type $y$ that faces the content bundle for type $\hat\type$ and optimally chooses how to truncate the bundle.
Because $y \in \arg\max_{x \in [\typemin, \typemax]}   y \grossq^x (\hat\type) - \attention \size^x(\hat\type)$, the envelope theorem of \citet*{milgrom2002envelope} implies that
\begin{equation}\label{equationDouble44}
\hat\type \grossq(\hat\type) - \attention \size(\hat\type)  -[\type \grossq^\type(\hat\type) - \attention \size^\type(\hat\type)] = V(\hat\type) - V(\type) = \int^{\hat\type}_{\type}\grossq^y (\hat\type) \, \dint y.
\end{equation}
Combining \eqref{equationDouble2}, \eqref{equationDouble11}, and \eqref{equationDouble44}, we obtain 
\begin{equation}\label{equationDouble4}
          \int^{\hat\type}_\type \grossq(x)\, \dint x > \int^{\hat\type}_{\type}\grossq^x (\hat\type) \, \dint x.
\end{equation}

{This is a contradiction.
For every $x\in[\type,\hat\type]$, the nestedness established above gives $\bundle(x)\subseteq\bundle(\hat\type)$.
Moreover, each $i\in\bundle(x)$ satisfies $x\quality(i)-\attention\geq0$, so type $x$ consumes that item after reporting $\hat\type$.
Hence $\bundle(x)\subseteq\bundle'(x)$ and $\grossq(x)\leq\grossq^x(\hat\type)$ for every $x\in[\type,\hat\type]$, contradicting \eqref{equationDouble4}.} \hfill $\square$

We conclude this subsection by presenting an example in which (i) the assumptions for \autoref{propositionICO} (and those for \autoref{propositionICO2} shown below) do not hold and (ii) the approach based on the weak problem fails, because the solution to the relaxed problem (of the weak problem) violates the monotonicity constraint.

\begin{example}\label{example2}
Suppose there is only one item with quality $\quality$ and one ad with disutility level $\disutil$ and ad revenue $\rev$.
Types are binary but the low type has a positive virtual type.
Suppose that 
\begin{equation}
\frac{\attention}{\type_L}+\disutil> \quality >\max\left( \frac{\attention}{\type_H}+\disutil, \frac{\attention}{\virtual(\type_L)}\right).
\end{equation}
(To construct a more concrete example, assume that the fraction of $\type_H$ is low enough so that
$\virtual(\type_L)\approx \type_L$.
Then we can take $\disutil \approx \frac{\attention}{\type_L} - \frac{\attention}{\type_H}$, 
and then $\quality =  \frac{\attention}{\type_H}+\disutil+ \epsilon$.)
Suppose $\rev$ is very high.
Then, the platform allocates the item with the ad to $\type_H$ and without ad to $\type_L$.
{Indeed, $\frac{\attention}{\type_L}+\disutil>\quality> \frac{\attention}{\virtual(\type_L)}$ implies that allocation without the ad is profitable, whereas allocation with the ad violates obedience for type $\type_L$.
The total net quality assigned to $\type_H$ is then $\quality-\disutil$, which is strictly below $\quality$, the total net quality assigned to $\type_L$.
This allocation violates monotonicity.}
\end{example}

\subsection{The Statement and Proof of \autoref{propositionICO2}}

Despite the challenge described in \autoref{example2}, we can accommodate ads with strictly positive disutility levels when types are binary.
\begin{assumption}\label{assumptionICO2}
 Disutility levels are constant across all ads, i.e., there is some $\disutil \ge 0$ such that $\disutil(j) =\disutil \ge 0$ for all $j \in \adset$.
Types are binary (i.e., $\types  = \{\type_L, \type_H\}$ with $\type_L>0$) and the low type has a negative virtual value, i.e., $\type_L  - \frac{1-\beta}{\beta} (\type_H - \type_L) <0$ where $\beta = \Pr(\type = \type_L)$.
\end{assumption}
Unlike \autoref{assumptionSimple}, \autoref{assumptionICO2} does not restrict the relative number of items and ads.
It also allows ad revenues to be heterogeneous across ads.
\begin{proposition}\label{propositionICO2}
Under \autoref{assumptionICO2}, the optimal mechanism is as follows:
\begin{enumerate}
\item 
For type $\type_H$, the platform allocates a set of items whose quality levels exceed some threshold.

\item For type $\type_L$, the platform allocates all items whose quality levels belong to some interval $[\minquality(\type_L), \maxquality(\type_L)]$, and matches
every item with some ad.
\end{enumerate}
\end{proposition}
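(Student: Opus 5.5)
The proof follows the two-step route used for \autoref{propositionICO}. First we solve the weak problem \eqref{P1}, in its equivalent virtual-surplus form \eqref{P2} with discrete types. Then we check that its solution satisfies \eqref{ICO}.

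\textbf{Step 1: the relaxed problem.} With binary types, the usual reduction works. The IR constraint binds for $\type_L$ and the downward IC constraint binds for $\type_H$, so the virtual values are $\virtual(\type_H)=\type_H$ and $\virtual(\type_L)=\type_L-\frac{1-\beta}{\beta}(\type_H-\type_L)<0$. The objective is then $\sum_{i\in\bundle(\type)}\{\virtual(\type)[\quality(i)-\disutil(\matching(i|\type))]+\rev(\matching(i|\type))-\attention\}$ for each type, and we maximize it separately for each type subject to obedience \eqref{O}.

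\emph{High type.} Every item-ad pair has contribution $\type_H[\quality(i)-\disutil]+\rev(j)-\attention$ and obedience requirement $\type_H[\quality(i)-\disutil]\ge\attention$ when it carries an ad. Showing an ad lowers net quality by $\disutil$ regardless of the item, so a given ad is best placed on the highest-quality items. Obedience and virtual surplus are both increasing in $\quality(i)$, whether or not the item carries an ad. Suppose some allocated item $i$ had higher quality than some unallocated item. Swapping the two (keeping the ad assignment on the higher-quality item) weakly improves the objective and preserves obedience. So the allocated set is an upper set in quality.

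\emph{Low type.} Here $\rho_{\type_L}(j)=\rev(j)-\virtual(\type_L)\disutil>\rev(j)\ge0$, so any ad strictly raises virtual surplus. Whenever ads are available, every allocated item should carry one. I handle items that lack an ad when $\adnumber<\itemnumber$ at the end of Step~1.

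Since all ads have the same $\disutil$, obedience for a pair $(i,j)$ reads $\type_L[\quality(i)-\disutil]\ge\attention$, i.e., a lower bound on quality that does not depend on $j$. The surplus condition $\virtual(\type_L)\quality(i)+\rho_{\type_L}(j)-\attention\ge0$ gives an upper bound on quality that depends on $j$. For the matching, we use negative assortative matching, by the submodularity of $\max(0,\virtual\quality+\rho-\attention)$ established in the proof of \autoref{lemma0}, applied to the obedient items, i.e., those with $\quality(i)\ge \attention/\type_L+\disutil$. This makes the pair contribution decreasing in quality among obedient items. The allocated set is therefore an interval $[\minquality(\type_L),\maxquality(\type_L)]$ with $\minquality(\type_L)=\attention/\type_L+\disutil$ (rounded up to an item).

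I have to check carefully that negative assortative matching restricted to obedient items is optimal. It is: obedience does not depend on the ad, so the feasible pairs are exactly (obedient item, any ad), and the proof of \autoref{lemma0} applies to this reduced item set.

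\emph{Unmatched items.} If there are fewer ads than obedient allocated items, I would argue that an unmatched low-quality item contributes $\virtual(\type_L)\quality(i)-\attention<0$. The platform therefore drops it, and every allocated item is matched with some ad.

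\textbf{Step 2: monotonicity.} Monotonicity requires net quality $\sum[\quality-\disutil(\cdot)]$ to be at least as large for $\type_H$ as for $\type_L$. Rather than proving this directly, I would use the revealed-preference argument from \autoref{lemma0}. Suppose, toward a contradiction, that it fails. Then the low type's bundle $(\bundle(\type_L),\matching(\cdot|\type_L))$ gives $\type_H$ strictly more virtual surplus than $\type_H$'s own bundle. That bundle is also obedient for $\type_H$, since $\type_H>\type_L$ and obedience becomes easier as the type rises. This contradicts optimality for $\type_H$.

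The step that needs care is that obedience feasibility must transfer upward in type. It does here, unlike in \autoref{example2}, because the only difference between the two types' feasible sets is obedience, which is monotone in type.

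\textbf{Step 3: verifying \eqref{ICO}, the main obstacle.} A downward double deviation has $\type_H$ reporting $\type_L$. Every pair in $\bundle(\type_L)$ is obedient for $\type_L$, hence for $\type_H$, so $\type_H$ consumes everything and the deviation reduces to ordinary IC.

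An upward deviation has $\type_L$ reporting $\type_H$ and consuming only the pairs that are obedient for it. With two types, I would compute this payoff directly instead of using the envelope argument. Since $\transfer(\type_H)$ is set by the binding downward IC, $\type_L$'s deviation payoff equals
\[
U(\type_L)-(\type_H-\type_L)X_H+\sum_{i\in \bundle'}\bigl[\type_L(\quality(i)-\disutil(\matching(i|\type_H)))-\attention\bigr]+\sum_{i\in\bundle(\type_H)\setminus\bundle'}\bigl[\attention-\type_L(\quality(i)-\disutil(\matching(i|\type_H)))\bigr]\cdot 0,
\]
where the last term just records that discarded pairs contribute nothing. Here $X_H$ and $X_L$ denote the net-quality totals of the two types' bundles.

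The core inequality to establish is that this payoff is at most $U(\type_L)$. I would try to derive it by bounding the consumed part by $\type_L X_H$ minus the gains forgone on the discarded items, combined with $X_H\ge X_L$. Making this bound go through cleanly is where I expect the main difficulty.
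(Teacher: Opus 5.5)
Your Step~1 for $\type_L$ is fine: negative assortative matching on the obedient items is a valid substitute for the paper's direct swap argument. Steps~2 and~3, however, have genuine gaps.

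\textbf{Step 2 (monotonicity).} The revealed-preference argument from the proof of \autoref{lemma0} needs feasibility in both directions. Its first inequality compares $\type_L$'s outcome with $\type_H$'s, so $\type_H$'s bundle must be obedient for $\type_L$. That can fail here: any item in $\bundle(\type_H)$ with $\quality(i)<\attention/\type_L$ fails it, as does any ad-carrying item with $\quality(i)<\attention/\type_L+\disutil$. You only checked the reverse direction.

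Your contrast with \autoref{example2} does not hold either. There too the feasible sets differ only through obedience, which is monotone in type. Monotonicity fails there precisely because $\type_H$'s item-ad pair is not obedient for $\type_L$.

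What rescues monotonicity under \autoref{assumptionICO2} is the explicit structure of both allocations, which you need to derive. An unidentified upper set for $\type_H$ is not enough, and your swap for $\type_H$ is also stated backwards. The structure is:
\begin{itemize}
\item $\type_H$ receives exactly $\{i:\quality(i)\ge\attention/\type_H\}$, each item carrying either no ad or an ad of disutility $\disutil$.
\item $\virtual(\type_L)<0$ forces every item given to $\type_L$ to carry an ad and to satisfy $\quality(i)\ge\attention/\type_L+\disutil\ge\attention/\type_H$.
\end{itemize}
Hence $\bundle(\type_L)\subseteq\bundle(\type_H)$, with item-by-item net quality weakly higher for $\type_H$. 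This gives $X_H\ge X_L$ directly.

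\textbf{Step 3 (upward double deviation).} This step is not completed, and your displayed payoff is incorrect. Setting $\bundle'=\bundle(\type_H)$ must recover the ordinary IC payoff $U(\type_L)-(\type_H-\type_L)(X_H-X_L)$, and your expression does not.

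The planned bound also points the wrong way. Discarding pairs raises $\type_L$'s payoff, so an argument run purely from $\type_L$'s perspective cannot get you below ordinary IC.

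The missing idea is to route the bound through $\type_H$'s valuation of the same consumed set $\bundle'$:
\begin{enumerate}
\item Every item of $\bundle(\type_L)$ is obedient for $\type_L$ under $\type_H$'s advertising policy as well. So $\bundle(\type_L)\subseteq\bundle'$ and $X(\bundle')\ge X_L$.
\item Single crossing then bounds $\type_L$'s gross gain by $\type_H[X(\bundle')-X_L]-\attention(|\bundle'|-|\bundle(\type_L)|)$.
\item Every pair in $\bundle(\type_H)$ is obedient for $\type_H$. So the bound is at most $\type_H(X_H-X_L)-\attention(|\bundle(\type_H)|-|\bundle(\type_L)|)$.
\item By the binding downward IC for $\type_H$, this last quantity equals $\transfer(\type_H)-\transfer(\type_L)$.
\end{enumerate}
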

\begin{proof}

First, we derive the allocation for positive virtual types.
Take the high type, $\type = \type_H$.
Note that under binary types, we have $\virtual(\type_H) = \type_H$.
Recall that under \autoref{assumptionICO2}, all ads have the same disutility level, $\disutil$.
If $\type_H [\quality(i) - \disutil]  - \attention <0$, or equivalently, if $\quality(i) < \frac{\attention}{\type_H}+ \disutil$, then a user will never consume item $i$ when it is matched with an ad.
Also, item $i$ generates a nonnegative virtual surplus for type $\type$ without ads if and only if
$\type_H \quality(i) - \attention \ge 0$.
Combining these inequalities, we conclude that any item $i$ whose quality is in $[\frac{\attention}{\type_H}, \frac{\attention}{\type_H}+ \disutil)$ will be allocated to type $\type$ without ads.

A user will consume any item with $\quality(i) \ge \frac{\attention}{\type_H}+ \disutil$ whether or not it is matched with an ad.
Because $\virtual(\type_H) = \type_H$, the same inequality also ensures that the allocation increases virtual surplus.
Relative to allocating such an item without an ad, matching it with ad $j$ changes virtual surplus by the virtual ad profit $\rho_{\type_H}(j)=\rev(j)-\type_H\disutil$.
Thus, in an optimal solution, the platform matches eligible items only with ads that have nonnegative virtual ad profits, prioritizing ads with higher revenue.
{Thus, the allocation to type $\type_H$ is an upper set in quality with cutoff $\frac{\attention}{\type_H}$, and any item matched with an ad must have quality at least $\frac{\attention}{\type_H}+\disutil$.}

Second, we derive the allocation for negative virtual types.
Consider the low type, $\type_L$ with $\virtual(\type_L)<0$.
First, the platform never allocates items without ads.
Thus, if item $i$ is allocated, it must be matched with an ad and satisfy the obedience constraint
\[
\type_L[\quality(i)-\disutil]-\attention \ge 0,
\]
or equivalently, $\quality(i) \ge \minquality(\type_L)\triangleq \attention/\type_L+\disutil$.
Because disutility is constant across ads, the virtual ad profit is $\rho_{\type_L}(j)=\rev(j)-\virtual(\type_L)\disutil$, whose ranking coincides with the ranking of $\rev(j)$.
Conditional on matching item $i$ with ad $j$, the contribution of the item-ad pair to type $\type_L$'s virtual surplus is
\[
\virtual(\type_L)\quality(i)+\rho_{\type_L}(j)-\attention.
\]
Because $\virtual(\type_L)<0$, this contribution is decreasing in item quality for any fixed ad.
Hence, if an allocated item with quality $\quality(i')>\quality(i)\ge \minquality(\type_L)$ is paired with some ad while item $i$ is not allocated, replacing item $i'$ by item $i$ preserves obedience and one-to-one matching and strictly raises virtual surplus.
It follows that, among items satisfying the lower obedience bound, the allocated set is downward closed in quality.
Therefore, the platform allocates items only up to some upper quality threshold, $\maxquality(\type_L)$.
To sum up, the platform allocates to type $\type_L$ all items whose quality levels are in $[\minquality(\type_L), \maxquality(\type_L)]$.
By construction, every allocated item is matched with some ad.

We now verify the monotonicity of allocation.
Type $\type_H$ receives all items such that $\quality(i) \ge \frac{\attention}{\type_H}$.
In contrast, type $\type_L$ receives a subset of items such that $\quality(i) \ge \frac{\attention}{\type_L}+ \disutil$ and all of them are matched with ads.
Thus, the term $\sum_{i \in \bundle(\type)} \left[ \quality (i) -  \disutil(\matching(i|\type)) \right]$ is higher for type $\type_H$ than type $\type_L$.

Finally, we show that the solution to the weak problem satisfies \eqref{ICO}.
First, if type $\type_H$ misreports to be $\type_L$, then $\type_H$ optimally consumes all items allocated, 
because any item that satisfies $\type_L$'s obedience constraint also satisfies $\type_H$'s obedience constraint.
Thus, type $\type_H$ has no profitable double deviation.

Second, suppose that type $\type_L$ misreports to be $\type_H$.
Note that the IC constraint for $\type_H$ binds at the optimum, i.e., 
\begin{equation*}
\type_H[ \grossq(\type_H) -\grossd(\type_H)] -\attention \size(\type_H)-  \transfer (\type_H) =  \type_H [ \grossq(\type_L) -\grossd(\type_L)] -\attention \size(\type_L) - \transfer (\type_L),
\end{equation*}
where $\grossd(\type)$ is the total disutility level incurred by type $\type$.
Note that the gross payoff (excluding monetary transfer) of type $\type_H$ increases by
$\type_H [ \grossq(\type_H) -\grossd(\type_H)]   -\attention \size(\type_H) - [\type_H (\grossq(\type_L) -\grossd(\type_L))-\attention \size(\type_L) ]$ by reporting $\type_H$ instead of $\type_L$.
Let $\bundle^L(\type_H)$ denote the subset of $\bundle(\type_H)$ that type $\type_L$ optimally consumes after reporting $\type_H$.
Define $\grossq^L(\type_H)$, $\grossd^L(\type_H)$, and $\size^L(\type_H)$ as the total quality, total disutility, and cardinality of $\bundle^L(\type_H)$.
If type $\type_H$ reports truthfully but consumes as if type $\type_L$ would, then type $\type_H$ who adopts such behavior would ignore some item-ad pairs, so the gain decreases and becomes equal to $\type_H  [ \grossq^L(\type_H) -\grossd^L(\type_H)]  -\attention \size^L(\type_H) - [\type_H (\grossq(\type_L) -\grossd(\type_L))-\attention \size(\type_L) ]$.
We have $\grossq^L(\type_H) -\grossd^L(\type_H) \ge \grossq(\type_L) -\grossd(\type_L)$ because the allocation for type $\type_H$ contains more items, and some items do not come with ads.

If type $\type_L$ reports to be $\type_H$ and consumes item-ad pairs optimally, which is the optimal double deviation, the gross gain is 
\begin{align*}
&\type_L  [ \grossq^L(\type_H) -\grossd^L(\type_H)]  -\attention \size^L(\type_H) - [\type_L (\grossq(\type_L) -\grossd(\type_L))-\attention \size(\type_L) ]\\
\le&\type_H(
 [ \grossq^L(\type_H) -\grossd^L(\type_H)]
-(\grossq(\type_L) -\grossd(\type_L))
) -\attention \size^L(\type_H) + \attention \size(\type_L)\\
\le&\transfer(\type_H) - \transfer(\type_L).
\end{align*}
The first inequality comes from $\type_H> \type_L$ and  $\grossq^L(\type_H) -\grossd^L(\type_H) \ge \grossq(\type_L) -\grossd(\type_L)$.
Combining the first and last lines, we conclude that type $\type_L$ does not benefit from a double deviation.
\end{proof}

\section{Omitted Materials for \autoref{sectionItemAdRevenue}}\label{sectionAppendixItemAdRevenue}

\subsection{Arbitrary Item-Ad-Specific Advertising Revenue}
\label{sec:ext-arbitrary-ad-revenue}

We first allow advertising revenue to depend on the exact item-ad pair.
If allocated item $i$ is assigned option $j\in\adset\cup\{\noad\}$, the
platform's advertising revenue is $A(i,j)$, where
$A:\qset\times(\adset\cup\{\noad\})\to\R_+$ is arbitrary subject to
$A(i,\noad)=0$.
The baseline model is the special case in which
$A(i,j)=\rev(j)$ for every $j\in\adset$ and $A(i,\noad)=0$.
The consumer's utility is unchanged.
Under this specification, the platform's problem is
\begin{align}
&\max_{\{( \transfer(\type), \bundle(\type), \matching(\cdot|\type))\}_{\type \in \Type}}
\int^1_0
\transfer(\type)+
\sum_{i \in \bundle(\type)} A(i,\matching(i|\type))
\dint \typemeas(\type)
\tag{M-A}\label{equationMA}\\[12pt]
\text{subject to}\quad
&\type \sum_{i \in \bundle(\type)}
\left[ \quality(i)-\disutil(\matching(i|\type)) \right]
-\transfer(\type)
\nonumber\\
&\qquad\ge
\type \sum_{i \in \bundle(\type')}
\left[ \quality(i)-\disutil(\matching(i|\type')) \right]
-\transfer(\type'),
\quad \forall \type,\type'\in\Type,
\tag{IC-A}\label{equationICA}\\[12pt]
&
\type \sum_{i \in \bundle(\type)}
\left[ \quality(i)-\disutil(\matching(i|\type)) \right]
-\transfer(\type)
\ge 0,
\quad \forall \type\in\Type.
\tag{IR-A}\label{equationIRA}
\end{align}
Only the platform's advertising-revenue term differs from the baseline
problem; the IC and IR constraints are unchanged.

As in the baseline model, the relaxed problem maximizes virtual surplus
pointwise, ignoring the monotonicity constraint on net quality.
The
relaxed problem is
\begin{align}
\max_{\{( \bundle(\type), \matching(\cdot|\type))\}_{\type \in \Type}}
\int^1_0
\sum_{i \in \bundle(\type)}
\left\{
\virtual(\type)
\left[\quality(i)-\disutil(\matching(i|\type))\right]
+A(i,\matching(i|\type))
\right\}
\dint \typemeas(\type).
\tag{P-A}\label{equationPA}
\end{align}
For a fixed type $\type$, separability across types reduces
\eqref{equationPA} to the following assignment problem:
\begin{align}
\max_{\matching \in \matchingset}
\sum_{i \in \qset}
\max\left\{
0,\,
\virtual(\type)
\left[\quality(i)-\disutil(\matching(i))\right]
+A(i,\matching(i))
\right\}.
\tag{R$^A$-$\theta$}\label{equationAtheta}
\end{align}
Problem \eqref{equationAtheta} generally has no closed-form solution.
However, it is a standard maximum-weight bipartite matching problem
after separating item-ad matches from item-only assignments.

\begin{proposition}\label{prop:ext-incremental-assignment}
Fix a type $\type$ and define
\begin{equation*}
    s_i=\virtual(\type)\quality(i),
    \qquad
    b_i=\max\{s_i,0\},
\end{equation*}
and, for each $j\in\adset$,
\begin{equation*}
    W^A_{ij}
    =
    \virtual(\type)\{\quality(i)-\disutil(j)\}
    +A(i,j),
    \qquad
    \Delta^A_{ij}=W^A_{ij}-b_i.
\end{equation*}
Then the value of the pointwise assignment problem \eqref{equationAtheta} is
\begin{equation}
    \label{equationGeneralMatching}
    \begin{aligned}
    \sum_{i\in\qset} b_i
    +
    \max_{x_{ij}\in\{0,1\}}\quad&
    \sum_{i\in\qset}\sum_{j\in\adset}x_{ij}\Delta^A_{ij}\\
    \text{subject to}\quad&
    \sum_{j\in\adset} x_{ij}\le 1
    \quad\forall i\in\qset,\\
    &
    \sum_{i\in\qset} x_{ij}\le 1
    \quad\forall j\in\adset.
    \end{aligned}
\end{equation}
\end{proposition}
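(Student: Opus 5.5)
The proof shows that the objective of \eqref{equationAtheta} and the matching program in \eqref{equationGeneralMatching} have the same optimal value. We use a two-way reduction: each advertising policy yields a feasible $x$ with at least the same value, and each feasible $x$ yields a policy with the same value. The key observation concerns each item $i$. If $i$ is matched with no ad, its contribution to \eqref{equationAtheta} is $\max\{0,s_i\}=b_i$, since $\disutil(\noad)=0$ and $A(i,\noad)=0$. If $i$ is matched with ad $j\in\adset$, its contribution is $\max\{0,W^A_{ij}\}$.

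For the first direction ($\le$), fix any $\matching\in\matchingset$. Set $x_{ij}=1$ exactly when $\matching(i)=j\in\adset$ and $W^A_{ij}\ge b_i$. This $x$ satisfies both constraints, because $\matching$ is one-to-one on $\adset$. Now bound each item's contribution by $b_i+\sum_j x_{ij}\Delta^A_{ij}$, case by case.
\begin{itemize}
\item If $\matching(i)=\noad$, the contribution equals $b_i$ and the row of $x$ is zero.
\item If $\matching(i)=j$ and $x_{ij}=1$, then $W^A_{ij}\ge b_i\ge 0$. The contribution is $\max\{0,W^A_{ij}\}=W^A_{ij}=b_i+\Delta^A_{ij}$.
\item If $\matching(i)=j$ and $x_{ij}=0$, then $W^A_{ij}<b_i$. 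The contribution is $\max\{0,W^A_{ij}\}\le\max\{0,b_i\}=b_i$.
\end{itemize}
Summing over items shows that the value of $\matching$ is at most $\sum_i b_i$ plus the matching value of this $x$.

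For the second direction ($\ge$), take any feasible $x$. Define $\matching(i)=j$ if $x_{ij}=1$, and $\matching(i)=\noad$ otherwise; this is a valid advertising policy by the two constraints. Each item's contribution under $\matching$ is at least $b_i+\sum_j x_{ij}\Delta^A_{ij}$, again case by case.
\begin{itemize}
\item If the row of $x$ is zero, the contribution equals $b_i$.
\item If $x_{ij}=1$, the contribution is $\max\{0,W^A_{ij}\}\ge W^A_{ij}=b_i+\Delta^A_{ij}$.
\end{itemize}
Summing gives the reverse inequality, so the two values coincide.

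The only delicate point is the first direction. The optimal matching in \eqref{equationGeneralMatching} may leave an item unmatched where the policy $\matching$ assigned it an ad. The fix is to drop pairs with $\Delta^A_{ij}<0$ when constructing $x$, and to use $b_i\ge 0$ to absorb the outer $\max\{0,\cdot\}$. Both maxima are attained because $\qset$ and $\adset$ are finite.
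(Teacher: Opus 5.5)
Your proof is correct and follows essentially the paper's argument: an item with no ad contributes the baseline $b_i$, and an item carrying ad $j$ contributes $b_i+\Delta^A_{ij}$. The only difference is bookkeeping. The paper first rewrites \eqref{equationAtheta} as an integer program with explicit no-ad variables $y_i$ and optimizes them out for fixed $x$; you instead compare advertising policies and matchings directly in both directions. Your route has the small merit of handling the truncation $\max\{0,\cdot\}$ on ad-matched items explicitly, by dropping pairs with $W^A_{ij}<b_i$, rather than leaving the equivalence between \eqref{equationAtheta} and the $(x,y)$ program implicit.
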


\begin{proof}
Let $x_{ij}=1$ if item $i$ is assigned ad $j$, and let
$y_i=1$ if item $i$ is assigned without an ad.
The exact integer program is
\begin{equation*}
    \max_{x,y}
    \sum_{i\in\qset}\sum_{j\in\adset}x_{ij}W^A_{ij}
    +\sum_{i\in\qset} y_i s_i
\end{equation*}
subject to
\begin{equation*}
    x_{ij},y_i\in\{0,1\},\qquad
    y_i+\sum_{j\in\adset}x_{ij}\le 1
    \quad\forall i\in\qset,
    \qquad
    \sum_{i\in\qset}x_{ij}\le 1
    \quad\forall j\in\adset.
\end{equation*}
Fix any feasible choice of item-ad pairs $x$ satisfying the one-to-one
constraints in \eqref{equationGeneralMatching}.
If item $i$ is matched to an ad, feasibility forces $y_i=0$.
If item $i$ is not matched to an ad, the optimal choice is to assign it
without an ad exactly when this contributes positively, yielding payoff
$b_i=\max\{s_i,0\}$.
Therefore the payoff from $x$ after optimizing over $y$ is
\begin{equation*}
    \sum_{i\in\qset} b_i
    +
    \sum_{i\in\qset}\sum_{j\in\adset}x_{ij}(W^A_{ij}-b_i),
\end{equation*}
which is the stated incremental assignment problem.
Conversely, any feasible choice of item-ad pairs $x$, combined with the
optimal no-ad decision for all items unmatched by ads, is feasible in
the pointwise assignment problem \eqref{equationAtheta} and achieves the
displayed value.
\end{proof}

The binary variable $x_{ij}$ indicates whether ad $j$ is matched with
item $i$.
The baseline term $\sum_i b_i$ is the value of assigning items without
ads whenever doing so is profitable; the matching problem chooses which
item-ad assignments to use relative to this baseline.
Because the empty item-ad assignment is feasible after the baseline
$\sum_i b_i$ is included, no item-ad match with
$\Delta^A_{ij}<0$ can appear in an optimum.
Thus, the arbitrary-revenue case can be solved by standard algorithms
for maximum-weight bipartite matching, such as the Hungarian algorithm
\citep*{kuhn1955hungarian,munkres1957algorithms}.

\begin{lemma}\label{lem:ext-net-quality-monotonicity}
For any type $\type$, let
$(\bundle(\type),\matching(\cdot|\type))$ be an optimal solution to
\eqref{equationAtheta}.
Define the total quality and total ad disutility induced by this
allocation by
$Q(\type)=\sum_{i\in\bundle(\type)}\quality(i)$ and
$D(\type)=\sum_{i\in\bundle(\type)}\disutil(\matching(i|\type))$.
If $\virtual(\type')>\virtual(\type)$, then
$Q(\type')-D(\type')\ge Q(\type)-D(\type)$.
\end{lemma}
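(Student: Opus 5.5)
The plan is to use the same revealed-preference exchange argument as in the monotonicity step of the proof of \autoref{lemma0}. The key observation is that the pointwise objective in \eqref{equationAtheta} depends on the virtual value only through a linear term. For any feasible pair $(\bundle,\matching)$, with $\bundle$ the set of items for which the inner max is attained by the positive branch, write the objective as
\begin{equation*}
\virtual(\type)\,X(\bundle,\matching)+\mathcal{A}(\bundle,\matching),
\qquad
X(\bundle,\matching):=\sum_{i\in\bundle}\left[\quality(i)-\disutil(\matching(i))\right],
\quad
\mathcal{A}(\bundle,\matching):=\sum_{i\in\bundle}A(i,\matching(i)).
\end{equation*}
Two features matter here. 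The feasible set of allocations, namely bundles together with one-to-one advertising policies, does not depend on $\type$. The ad-revenue term $\mathcal{A}$, however arbitrary $A$ is, also does not depend on $\type$. Hence the pointwise problem amounts to maximizing $vX+\mathcal{A}$ over a fixed feasible set, with $v=\virtual(\type)$ as the only parameter.

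First, I will confirm that the maximizer of \eqref{equationAtheta} can be identified with a choice of $(\bundle,\matching)$ maximizing $\virtual(\type)X+\mathcal{A}$. Choosing the allocated set $\bundle$ directly, rather than through the inner max, gives the same value: allocating pairs with negative contribution never helps, and dropping them is always feasible.

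Second, I will take $\type,\type'$ with $\virtual(\type')>\virtual(\type)$ and their optimal allocations, and write $X=Q-D$ for each. Optimality of each allocation at its own virtual value gives
\begin{align*}
\virtual(\type)X(\type)+\mathcal{A}(\type)&\ge \virtual(\type)X(\type')+\mathcal{A}(\type'),\\
\virtual(\type')X(\type')+\mathcal{A}(\type')&\ge \virtual(\type')X(\type)+\mathcal{A}(\type).
\end{align*}
Adding the two inequalities yields $[\virtual(\type')-\virtual(\type)][X(\type')-X(\type)]\ge0$. Dividing by the strictly positive factor $\virtual(\type')-\virtual(\type)$ gives the claim.

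The only point needing care, and the one I expect to be the main obstacle, is the step where each allocation is feasible for the other type. This holds because feasibility is type-independent and because the inner max is bounded below by the value of any fixed allocation. Condition \eqref{equationNON} is not needed for this argument. The result also holds for any selection of optimal solutions, so ties pose no difficulty.
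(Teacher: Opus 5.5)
Your proposal is correct and is essentially the paper's argument: the paper proves this lemma by pointing to the monotonicity step in the proof of \autoref{lemma0}, which uses the same revealed-preference comparison of two types' optimal allocations over a type-independent feasible set. The only difference is presentation: the paper argues by contradiction, applying optimality at $\type_L$ and swapping in $\virtual(\type_H)$, whereas you add the two optimality inequalities; your remark that condition \eqref{equationNON} is not needed is also right.
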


\begin{proof}
The proof is the same as the proof of \autoref{lemma0} in
\autoref{sectionAppendixA1}, with the item-ad revenue
$\sum_{i\in\bundle(\type)}A(i,\matching(i|\type))$ in place of the
baseline ad revenue $\grossr(\type)$.
\end{proof}

\begin{proposition}[Properties inherited from the baseline model]
\label{prop:ext-arbitrary-inherited-properties}
For any type with \(\virtual(\type)<0\), there is an optimal pointwise allocation in which every allocated item is matched with an ad. 
For any type with \(\virtual(\type)>0\), there is an optimal pointwise allocation in which all items are allocated. Also, optimal allocations for positive virtual types can be selected so that the number of ad-matched items is decreasing in \(\type\).
\end{proposition}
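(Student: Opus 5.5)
We work with the pointwise problem \eqref{equationAtheta}, which by \autoref{lem:ext-net-quality-monotonicity} suffices. The monotonicity constraint holds automatically for any selection of pointwise optima. Hence each claimed selection yields an optimal mechanism once transfers are recovered as in \autoref{lemma0}. The three claims are handled separately.

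\textbf{Negative virtual types.} Fix $\type$ with $\virtual(\type)<0$ and take any optimal solution of \eqref{equationAtheta}. An item $i$ allocated without an ad contributes $s_i=\virtual(\type)\quality(i)\le 0$. If $\quality(i)>0$, this contribution is strictly negative, contradicting optimality, since dropping the item is feasible. So $\quality(i)=0$ and the contribution is zero. Removing such items leaves virtual surplus unchanged, and the modified solution is still optimal. In it, every allocated item carries an ad. This mirrors the selection argument in the proof of \autoref{theorem}.

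\textbf{Positive virtual types: full allocation.} Fix $\virtual(\type)>0$. For any item-ad pair, \eqref{equationNON} and $A\ge 0$ give
\[
\virtual(\type)[\quality(i)-\disutil(j)]+A(i,j)\ge 0 .
\]
For an item without an ad, $\virtual(\type)\quality(i)\ge 0$. So every candidate pair has nonnegative raw contribution, and allocating every item under an optimal matching $\matching$ attains the inner maximum in \eqref{equationAtheta}. Thus, given an optimal advertising policy, we may take $\bundle(\type)=\qset$. Equivalently, in the representation of \autoref{prop:ext-incremental-assignment}, $b_i=s_i$ and
\[
\Delta^A_{ij}=A(i,j)-\virtual(\type)\disutil(j).
\]

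\textbf{Positive virtual types: number of ads decreasing in $\type$.} This is the main obstacle. With full allocation, the type-$\type$ problem is a maximum-weight matching with weights $w_{ij}(t)=A(i,j)-t\,\disutil(j)$, where $t=\virtual(\type)>0$. Let $g(t)$ denote its value restricted to matchings with exactly $k$ edges, and let
\[
G_k(t)=\max_{|x|=k}\sum x_{ij}A(i,j),\qquad H_k=\min_{|x|=k}\sum x_{ij}\disutil(j).
\]
Comparing total disutility is not the target, since the claim concerns the count of ads. I would therefore switch to the cardinality-parametrized value $V_k(t)=\max_{|x|=k}\sum x_{ij}w_{ij}(t)$ and aim to show that $\arg\max_k V_k(t)$ admits a selection decreasing in $t$. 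The plan is to show that $V_k(t)$ has decreasing differences in $(k,t)$, i.e.\ $V_{k+1}(t)-V_k(t)$ is decreasing in $t$, and then apply Topkis to select the smallest maximizer.

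The difficulty is that decreasing differences need not hold for arbitrary $A$ and $\disutil$. Adding an edge can reshuffle the optimal $k$-matching toward lower-disutility ads, so the marginal change in total disutility need not be positive. I would first try the augmenting-path structure of bipartite matching: $V_k$ is concave in $k$, and the optimal $(k+1)$-matching is obtained from an optimal $k$-matching by one augmenting path. However, the slope of $V_{k+1}-V_k$ in $t$ equals the negative of the net disutility along that path, which can be negative.

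If that fails, I would fall back on a weaker route that suffices for ``can be selected.'' Items are exchangeable once all are allocated: the matched item set only enters through $A$, and $\disutil$ depends only on ads. Using this, I would compare the ad counts of optimal solutions at two types $t<t'$ via a revealed-preference inequality on the symmetric difference of the two matchings, decomposed into alternating paths and cycles. If $t'$ used strictly more ads, some alternating path with more ad edges at $t'$ would have positive net disutility. Swapping it would then improve one of the two solutions, yielding a contradiction after tie-breaking toward fewer ads. Establishing the sign of the net disutility on such a path is the step I expect to require the most care.
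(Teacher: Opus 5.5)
Your first two claims, that negative virtual types can be served only with ad-matched items and that positive virtual types receive all items, are handled correctly and essentially as in the paper. The gap is the third claim, which the proposal never proves: Route A is dropped and Route B stops exactly at the decisive step.

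Your reason for dropping Route A is also mistaken. The disutility weight of an edge depends only on its ad endpoint, so along an augmenting path it telescopes. In a bipartite augmenting path running from an uncovered item to an uncovered ad $j^*$, every interior ad is covered both before and after the swap; only its partner item changes. The ad set therefore grows by exactly $\{j^*\}$, and total disutility changes by $d(j^*)\ge 0$. Ads are never swapped out, so the ``reshuffle toward lower-disutility ads'' you worry about cannot occur along such a path. The standard fact is that an optimal $(k+1)$-matching can be obtained from any optimal $k$-matching by a best augmenting path. Combined with the observation above, this gives $D_{k+1}\ge D_k$ wherever the convex functions $V_k$ and $V_{k+1}$ are differentiable. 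That yields decreasing differences, so Route A actually works.

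The same observation closes Route B in a few lines. At every type, select an optimal matching for the weights $A(i,j)-t\,d(j)$ with the fewest edges. This rule is what makes the selection monotone over the whole continuum of types, not just pairwise.
\begin{itemize}
\item Take $t<t'$, let $M$ be optimal at $t$ and $M'$ a minimum-cardinality optimum at $t'$, and suppose $|M'|>|M|$.
\item Then $M\triangle M'$ has a path component $P$ with one more $M'$-edge than $M$-edge. Its endpoints are an item and an ad $j^*$, both uncovered by $M$, and every interior ad is covered by both matchings. Hence $D(M'\cap P)-D(M\cap P)=d(j^*)$.
\item Write $\alpha=A(M'\cap P)-A(M\cap P)$. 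Optimality of $M$ at $t$ against $M\triangle P$, and of $M'$ at $t'$ against $M'\triangle P$, gives $t'd(j^*)\le\alpha\le t\,d(j^*)$.
\item So $d(j^*)=\alpha=0$, and $M'\triangle P$ is optimal at $t'$ with one fewer ad, a contradiction.
\end{itemize}

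The paper reaches the conclusion differently. It writes the ad-set value $V(S)$ as a convolution of unit-demand valuations, which satisfies gross substitutes. It then applies the law of aggregate demand to the price vectors $\virtual(\type)\,d\le\virtual(\type')\,d$. Your exchange argument, once completed as above, is an elementary special case of that step and avoids the gross-substitutes machinery.
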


\begin{proof}
Fix a type and write $v=\virtual(\type)$.
Recall from \eqref{equationGeneralMatching} that assigning item $i$
without an ad contributes $s_i=v\quality(i)$ before truncation at zero.
If $v<0$, then $s_i\le0$ for every item because $\quality(i)\ge0$.
Hence any item-only assignment contributes less than leaving the
item unassigned and can be removed without lowering virtual surplus.
Thus an optimal solution can be chosen so that every allocated item is
matched with an ad.

If $v>0$, then $s_i\ge0$ for every item.
After the optimal item-ad matches are chosen, assigning every remaining
item without an ad raises virtual surplus.
Resolving zero-payoff ties in favor of allocation, the type receives all
items.

It remains to compare positive virtual types.
For $v>0$, the preceding paragraph lets us take assignment without ads
as the baseline for every item.
The remaining problem is to choose item-ad matches solving
\begin{equation*}
    \max_{x_{ij}\in\{0,1\}}
    \sum_{i\in\qset}\sum_{j\in\adset}
    x_{ij}\{A(i,j)-v\disutil(j)\},
\end{equation*}
subject to the same one-to-one matching constraints as in
\eqref{equationGeneralMatching}.
Equivalently, for any set $S\subseteq\adset$, define the assignment
value
\begin{equation*}
    V(S)=
    \max_{\substack{x_{ij}\in\{0,1\}\\
    \sum_{j\in S}x_{ij}\le1\ \forall i,\ 
    \sum_{i\in\qset}x_{ij}\le1\ \forall j\in S}}
    \sum_{i\in\qset}\sum_{j\in S}x_{ij}A(i,j).
\end{equation*}
For each item $i$, define the unit-demand valuation
$u_i(S)=\max_{j\in S}A(i,j)$, with the convention $u_i(\emptyset)=0$.
{Allowing $S_i=\emptyset$, we can write}
\begin{equation*}
    V(S)=
    \max_{\substack{S_i\subseteq S\ \forall i\in\qset\\
    S_i\cap S_{i'}=\emptyset\ \forall i\ne i'}}
    \sum_{i\in\qset}u_i(S_i).
\end{equation*}
Indeed, any feasible item-ad matching induces such a disjoint collection
by assigning to each item either the singleton containing its matched ad
or the empty set.
Conversely, because $A(i,j)\ge0$, any disjoint collection can be
reduced, item by item, to the single ad that attains $u_i(S_i)$, or to
the empty set if $S_i=\emptyset$, without reducing its value; this
produces a feasible item-ad matching.
{For the purpose of applying the convolution closure result \citep*[Theorem~9.1]{paesleme2017gross} below, this expression can equivalently be written using partitions of $S$: leftover ads can be assigned to arbitrary items without lowering the value, and, conversely, any partition can be reduced, item by item, to the single ad that attains $u_i$ without changing the value.}
Unit-demand valuations satisfy gross substitutes
\citep*[Section~2]{paesleme2017gross}.
Because $V$ is the finite convolution of the unit-demand valuations
$(u_i)_{i\in\qset}$, repeated application of the binary closure theorem
for gross substitutes under convolution
\citep*[Theorem~9.1]{paesleme2017gross} implies that $V$ satisfies gross
substitutes.
Then the set of ads used at virtual type $v$ solves
\begin{equation*}
    \max_{S\subseteq\adset}
    \left\{V(S)-\sum_{j\in S}v\disutil(j)\right\}.
\end{equation*}
For $p\in\R_+^\adset$, define the demand correspondence by
\begin{equation*}
    \mathcal{D}(p)=
    \operatorname*{arg\,max}_{S\subseteq\adset}
    \left\{V(S)-\sum_{j\in S}p_j\right\}.
\end{equation*}
By the law of aggregate demand for gross-substitutes valuations
\citep*[Section~3.3, display~(LAD)]{murota2016discrete}, if $p'\ge p$
and $S\in\mathcal{D}(p)$, then there exists
$S'\in\mathcal{D}(p')$ such that $|S'|\le |S|$.
Apply this result to $p_j=v\disutil(j)$ and
$p'_j=v'\disutil(j)$.
For $v'>v>0$, we have $p'\ge p$.
Thus we can choose demanded sets of ads at $v$ and $v'$ such that the
latter has smaller cardinality.
Choosing associated maximizing item-ad matchings gives optimal
allocations in which higher positive virtual types have fewer
ad-matched items.
\end{proof}

\subsection{Increasing-Differences Advertising Revenue}
\label{sec:ext-generalized-ad-revenue}
\label{sec:ext-main-matching-pattern}

We now specialize the arbitrary mapping $A(i,j)$ from
\autoref{sec:ext-arbitrary-ad-revenue}.
Suppose that
\begin{equation}
    A(i,j)=R(\rev(j),\quality(i)),
    \qquad j\in\adset.
    \label{equationRspec}
\end{equation}
Throughout the remainder of this appendix, we assume that
$R:\R_+\times\R_+\to\R_+$ has strictly increasing differences,
$\quality(i)\ne\quality(i')$ whenever $i\ne i'$, and
$\rev(j)\ne\rev(j')$ whenever $j\ne j'$.\footnote{Strict increasing
differences means that
$R(r_H,q_H)+R(r_L,q_L)>R(r_H,q_L)+R(r_L,q_H)$ whenever
$r_H>r_L$ and $q_H>q_L$.}
Under this specialization,
\eqref{equationAtheta} is understood with
$A(i,j)=R(\rev(j),\quality(i))$ for $j\in\adset$ and
$A(i,\noad)=0$.
\begin{definition}\label{definitionPAM}
For a fixed type, an allocation is \emph{conditionally positively
assortative} if, whenever two allocated items $i$ and $i'$ are
matched with ads $j$ and $j'$, respectively,
\begin{equation*}
    \quality(i)<\quality(i')
    \quad\Longrightarrow\quad
    \rev(j)\le \rev(j').
\end{equation*}
\end{definition}
Conditional positive assortativity restricts only the pairing within the allocated sets of ad-matched items and ads; it does not restrict which items or ads are allocated.

\begin{proposition}
\label{prop:ext-matching-pattern}
For every type \(\theta\), every optimal solution to the pointwise allocation problem \eqref{equationAtheta} is conditionally positively assortative.
Also, suppose that $R(r,q)$ is increasing in $q$ for each $r$.
For a type with \(v(\theta)>0\), conditional on assigning \(m\) ads, there is an optimal solution in which those ads are assigned to the \(m\) highest-quality items.
\end{proposition}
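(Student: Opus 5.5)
The plan is a pairwise exchange argument for the first claim and an order-preserving reassignment of ads for the second, in both cases working directly with the pointwise problem \eqref{equationAtheta}.

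\emph{Conditional positive assortativity.} Fix $\type$ and an optimal solution. Suppose, toward a contradiction, that two allocated items $i,i'$ with $\quality(i)<\quality(i')$ are matched with ads $j,j'$ such that $\rev(j)>\rev(j')$; ads have distinct revenues, so this is the only way the condition can fail. I would swap the two ads, forming the pairs $(i,j')$ and $(i',j)$, and leave everything else unchanged. The swap uses the same items and the same ads, so the disutility part $\virtual(\type)\bigl[\quality(i)+\quality(i')-\disutil(j)-\disutil(j')\bigr]$ of the two pairs' raw contributions is unchanged. The revenue part changes by
\[
R(\rev(j),\quality(i'))+R(\rev(j'),\quality(i))-R(\rev(j),\quality(i))-R(\rev(j'),\quality(i')),
\]
which is strictly positive by strict increasing differences. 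Hence if $a,b$ are the original raw contributions and $a',b'$ the new ones, then $a'+b'>a+b$.

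The step needing care is the inner truncation $\max\{0,\cdot\}$, since after the swap one new pair could have a negative raw contribution. This causes no difficulty. Both original pairs were allocated, so (choosing the allocation as in the objective) their payoff is $a+b$. The new payoff is $\max\{0,a'\}+\max\{0,b'\}\ge a'+b'>a+b$. This strictly improves \eqref{equationAtheta}, contradicting optimality. Because the contradiction is strict, the conclusion holds for \emph{every} optimal solution, not just a selected one.

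\emph{Positive virtual types.} Now let $\virtual(\type)>0$ and assume $R(r,\quality)$ is increasing in $\quality$. Every item-ad pair then has raw contribution $\virtual(\type)[\quality(i)-\disutil(j)]+A(i,j)\ge0$, by \eqref{equationNON} and $A\ge0$. Likewise, every item without an ad contributes $\virtual(\type)\quality(i)\ge0$. So the truncations are inactive, and by \autoref{prop:ext-arbitrary-inherited-properties} all items can be taken as allocated. The objective then becomes
\[
\sum_{i\in\qset}\virtual(\type)\quality(i)+\sum_{(i,j)\ \text{matched}}\bigl[A(i,j)-\virtual(\type)\disutil(j)\bigr].
\]

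Take an optimal solution that uses a set $S$ of $m$ ads on a set $T$ of items. I would map $T$ to the set $T^*$ of the $m$ highest-quality items by the order-preserving bijection, sending the $k$-th lowest-quality item of $T$ to the $k$-th lowest-quality item of $T^*$. Each image has weakly higher quality than its preimage. I then re-attach each ad to the image of its former item. The first sum and the disutility terms are unchanged, and each $A(i,j)=R(\rev(j),\quality(i))$ weakly increases by monotonicity in $\quality$. So the new solution is optimal and uses the $m$ ads on the $m$ highest-quality items.

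If desired, one can apply the first part afterwards to make the matching within $T^*$ positively assortative. The only genuine obstacle is the bookkeeping of the $\max\{0,\cdot\}$ operator. It is handled by the inequality $\max\{0,x\}\ge x$ in the first part and by nonnegativity of all raw contributions in the second.
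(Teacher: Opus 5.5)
Your proposal is correct and follows essentially the paper's proof. For the first claim you use the same pairwise ad swap justified by strict increasing differences of $R$; your explicit handling of the $\max\{0,\cdot\}$ truncation is slightly more careful than the paper's, which simply keeps both pairs allocated. For the second claim you use the same idea of moving ads onto higher-quality items while keeping the set of ads and all allocated items unchanged, differing only cosmetically: the paper picks an optimal all-items allocation that maximizes the total quality of ad-matched items and moves a single ad to a higher-quality ad-free item, whereas you relocate all $m$ ads at once via the order-preserving bijection onto the top $m$ items.
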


\begin{proof}
Fix a type $\type$ and an optimal solution.
Suppose that conditional positive assortativity is violated.
Then, among the items that are matched with ads and allocated, there are items $i$ and $i'$ such that
$\quality(i)<\quality(i')$, item $i$ is matched with ad $j$, item $i'$
is matched with ad $j'$, and $\rev(j)>\rev(j')$.
Switching the two ads changes advertising revenue by
\begin{equation*}
    \begin{aligned}
    &R(\rev(j'),\quality(i))+R(\rev(j),\quality(i'))
    -R(\rev(j),\quality(i))-R(\rev(j'),\quality(i')) > 0,
    \end{aligned}
\end{equation*}
where the strict inequality follows from strict increasing differences.
Thus, switching the two ads strictly raises the platform's objective.
Doing so does not affect the consumer's utility, because both the total quality
of items and the total disutility from ads remain the same.
This contradicts optimality.
Therefore, every optimal pointwise allocation is conditionally positively assortative.

Second, fix a type with $\virtual(\type)>0$.
By \autoref{prop:ext-arbitrary-inherited-properties}, we can restrict
attention to optimal allocations in which every item is allocated.
Among such allocations that assign $m$ ads, choose one that maximizes
the total quality of ad-matched items.
Suppose that the ad-matched items are not the $m$ highest-quality items.
Then there are items $i$ and $i'$ such that item $i$ is matched with
some ad $j$, item $i'$ is assigned without an ad, and
$\quality(i')>\quality(i)$.
Move ad $j$ from item $i$ to item $i'$, and assign item $i$ without an
ad.
This change does not affect the consumer's utility or feasibility and
changes advertising revenue by
\begin{equation*}
    R(\rev(j),\quality(i'))-R(\rev(j),\quality(i))\ge0.
\end{equation*}
Thus, there is an optimal allocation in which the $m$ ads are assigned
to the $m$ highest-quality items.
\end{proof}

\autoref{prop:ext-matching-pattern} shows that every optimal pointwise
allocation satisfies a restricted notion of positive assortative matching.
However, the result does not show exactly which items and ads are allocated.
Indeed, the optimal allocation no longer has a simple form that uses a quality-level cutoff or virtual ad profits.
In the next subsection, we provide a simple algorithm that outputs the solution to \eqref{equationAtheta}.
Before turning to the algorithm, we present an example in which increasing-differences advertising revenue changes the comparative statics of \autoref{sectionInnovation}.

\begin{example}\label{exampleMultiplicativeRevenue}
The platform hosts one item and one ad.
Consumer types are uniformly distributed on $[0,1]$, and the cost of choosing quality $\quality$ is $C(\quality)=0.1\quality^{2}$.
The ad has disutility $\disutil=0.85$, and the advertising revenue $R(\rev,\quality)=z\rev\quality$ with $z=0.4$ has strictly increasing differences in $(\rev,\quality)$.
The platform chooses $\quality\ge\disutil$.
Given $(\quality,\rev)$, the value of the pointwise problem \eqref{equationAtheta}, integrated over types, is
\begin{equation*}
    \profit(\quality)
    =\int_{0}^{1}
    \max\left\{0,\; \virtual(\type)\quality,\; \virtual(\type)(\quality-\disutil)+z\rev\quality\right\}
    \dint\type.
\end{equation*}
A consumer of type $\type$ receives the item with the ad if $\virtual(\type)\in[-z\rev\quality/(\quality-\disutil),\,z\rev\quality/\disutil]$, the item without the ad if $\virtual(\type)>z\rev\quality/\disutil$, and nothing otherwise.\footnote{If $\quality=\disutil$, the lower cutoff is $-\infty$, i.e., every type receives the item.
The revenue admits a piecewise closed form: For $\rev>0$, letting $a=\min\{1,z\rev\quality/\disutil\}$ and $b=\max\{-1,-z\rev\quality/(\quality-\disutil)\}$, we have $\profit(\quality)=\frac{1}{2}\left[(\quality-\disutil)\frac{a^{2}-b^{2}}{2}+z\rev\quality(a-b)+\quality\frac{1-a^{2}}{2}\right]$.}
The optimal quality solves $\quality^{*}(\rev)\in\arg\max_{\quality\ge\disutil}\profit(\quality)-0.1\quality^{2}$.

\autoref{fig:ext-multiplicative-quality} plots $\quality^{*}(\rev)$ for $\rev\in[0,1]$; without advertising revenue, $\quality^{*}(0)=1.25$.
As $\rev$ increases, the optimal quality first decreases and, at $\rev\approx0.107$, drops to the minimum feasible level $\quality=\disutil$: The platform switches to full coverage, in which high types purchase the item without the ad and the remaining types receive it with the ad for free.
For larger $\rev$, the optimal quality increases, jumps from $1.02$ to $1.70$ at $\rev=0.75$, and reaches $\quality^{*}(1)\approx2.39$.

The non-monotonicity reflects two forces.
As in \autoref{sectionInnovation}, a more lucrative ad induces the platform to serve types with negative virtual values, which depresses quality: A type that sees the ad contributes $\virtual(\type)+z\rev$ to the marginal value of quality, which is negative for the lowest served types.
At the same time, every impression adds $z\rev$ to the marginal value of quality, and this force dominates for high $\rev$.
\end{example}

\begin{figure}[!htbp]
    \centering
    \begin{tikzpicture}
        \begin{axis}[
            clip=false,
            width=8cm,
            height=6cm,
            xlabel={Ad revenue $\rev$},
            ylabel={Optimal quality $\quality$},
            grid=none,
            axis lines=left,
            line width=0.08pt,
            tick label style={font=\footnotesize},
            label style={font=\footnotesize},
            enlargelimits=upper,
            xmin=0,
            xmax=1,
            ymin=0.7,
            ymax=2.5,
            xtick={0,0.25,0.5,0.75,1}
        ]
            \addplot [dashed, gray, line width=0.5pt] coordinates {(0,1.25) (1,1.25)};
            \addplot [line width=0.9pt, blue] coordinates {
                (0.000000,1.250000)
                (0.010000,1.249989)
                (0.020000,1.249954)
                (0.030000,1.249896)
                (0.040000,1.249815)
                (0.050000,1.249709)
                (0.060000,1.249579)
                (0.070000,1.249423)
                (0.080000,1.249240)
                (0.090000,1.249030)
                (0.100000,1.248791)
                (0.107404,1.248593)
                (0.107404,0.850000)
                (0.120000,0.850000)
                (0.130000,0.850000)
                (0.140000,0.850000)
                (0.150000,0.850000)
                (0.160000,0.850000)
                (0.170000,0.850000)
                (0.180000,0.850000)
                (0.190000,0.850000)
                (0.200000,0.850000)
                (0.210000,0.850000)
                (0.220000,0.850000)
                (0.230000,0.850000)
                (0.240000,0.850000)
                (0.250000,0.850000)
                (0.260000,0.850000)
                (0.270000,0.850000)
                (0.280000,0.850000)
                (0.290000,0.850000)
                (0.300000,0.850000)
                (0.310000,0.850000)
                (0.320000,0.850000)
                (0.330000,0.850000)
                (0.340000,0.850000)
                (0.350000,0.850000)
                (0.360000,0.850000)
                (0.370000,0.850000)
                (0.380000,0.850000)
                (0.390000,0.850000)
                (0.400000,0.850000)
                (0.410000,0.850000)
                (0.420000,0.850000)
                (0.430000,0.850000)
                (0.440000,0.850000)
                (0.450000,0.850000)
                (0.460000,0.850000)
                (0.470000,0.850000)
                (0.480000,0.850000)
                (0.490000,0.850000)
                (0.500000,0.850000)
                (0.510000,0.850000)
                (0.520000,0.850000)
                (0.530000,0.850000)
                (0.540000,0.850000)
                (0.550000,0.850000)
                (0.560000,0.850000)
                (0.570000,0.850000)
                (0.580000,0.850000)
                (0.590000,0.850000)
                (0.600000,0.850000)
                (0.610000,0.850000)
                (0.620000,0.850000)
                (0.630000,0.850000)
                (0.640000,0.850000)
                (0.650000,0.850000)
                (0.660000,0.850000)
                (0.670000,0.850000)
                (0.680000,0.869121)
                (0.690000,0.889229)
                (0.700000,0.909786)
                (0.710000,0.930810)
                (0.720000,0.952322)
                (0.730000,0.974342)
                (0.740000,0.996894)
                (0.750000,1.020000)
                (0.750000,1.700000)
                (0.760000,1.725797)
                (0.770000,1.752219)
                (0.780000,1.779309)
                (0.790000,1.807113)
                (0.800000,1.835674)
                (0.810000,1.865038)
                (0.820000,1.895251)
                (0.830000,1.926361)
                (0.840000,1.958418)
                (0.850000,1.991476)
                (0.860000,2.025590)
                (0.870000,2.060817)
                (0.880000,2.097220)
                (0.890000,2.134866)
                (0.900000,2.173823)
                (0.910000,2.214167)
                (0.920000,2.255979)
                (0.930000,2.292956)
                (0.940000,2.306536)
                (0.950000,2.320163)
                (0.960000,2.333836)
                (0.970000,2.347556)
                (0.980000,2.361322)
                (0.990000,2.375133)
                (1.000000,2.388991)
            };
            \node[fill=black, circle, inner sep=1.5pt, blue] at (axis cs:0.107404, 1.248593) {};
            \node[fill=black, circle, inner sep=1.5pt, blue] at (axis cs:0.107404, 0.85) {};
            \node[fill=black, circle, inner sep=1.5pt, blue] at (axis cs:0.75, 1.02) {};
            \node[fill=black, circle, inner sep=1.5pt, blue] at (axis cs:0.75, 1.70) {};
        \end{axis}
    \end{tikzpicture}
    \caption{The optimal quality $\quality^{*}(\rev)$ when $R(\rev,\quality)=z\rev\quality$ with $z=0.4$, $\disutil=0.85$, $C(\quality)=0.1\quality^{2}$, and $\type\sim U[0,1]$.
    The dashed line marks $\quality^{*}(0)=1.25$; at each jump, the dots mark the two optimal quality levels.}
    \label{fig:ext-multiplicative-quality}
\end{figure}

\subsection{Computing the Pointwise Allocation}
\label{sec:ext-pointwise-problem}

The algorithm operates on \eqref{equationAtheta} type by type.
Fix a type $\type$.
An assigned ad must be matched one-to-one with an assigned item, while
an item may also be assigned without any ad.

If item $i$ is assigned ad $j$, its contribution to
type-$\type$ virtual surplus is
\begin{equation*}
    W_{ij}
    =
    \virtual(\type)\{\quality(i)-\disutil(j)\}
    +R(\rev(j),\quality(i)).
\end{equation*}
If item $i$ is assigned without an ad, its contribution is
\begin{equation*}
    s_i=\virtual(\type)\quality(i).
\end{equation*}
Unassigned items and unused ads contribute zero.
The no-ad option is
therefore treated directly as the item-specific payoff $s_i$.
In the degenerate cases, if $\itemnumber=0$, the value is zero.
If $\adnumber=0$, the value is
$\sum_{i\in\qset}\max\{s_i,0\}$.

\subsubsection{Specialized Weights for the Algorithm}
\label{sec:ext-incremental-assignment}

Before writing the dynamic program, we translate
\autoref{prop:ext-incremental-assignment} into the specialized notation
with $A(i,j)=R(\rev(j),\quality(i))$.
Define
\begin{equation*}
    b_i=\max\{s_i,0\},
    \qquad
    \Delta_{ij}=W_{ij}-b_i.
\end{equation*}
Then \autoref{prop:ext-incremental-assignment} implies that the value of
the pointwise assignment problem \eqref{equationAtheta} is given by
\eqref{equationGeneralMatching}, with $\Delta^A_{ij}$ replaced by
$\Delta_{ij}$.
Thus, after adding the baseline term $\sum_i b_i$, it remains to choose
item-ad matches using the incremental weights $\Delta_{ij}$.

\subsubsection{Dynamic Program}\label{sec:ext-dynamic-program}
Hereafter, without loss of generality, we relabel items and ads so that
\begin{equation*}
    \quality(1)\le \quality(2)\le\cdots\le \quality(\itemnumber)
    \qquad\text{and}\qquad
    \rev(1)\le \rev(2)\le\cdots\le \rev(\adnumber).
\end{equation*}
Using the incremental weights from \autoref{sec:ext-incremental-assignment},
let $G(i,j)$ be the maximum additional payoff, relative to the baseline
$\sum_i b_i$, from any conditionally positively assortative set of item-ad
matches that use items $1,..., i$ and ads $1,..., j$.
We have 
\begin{equation*}
    G(0,j)=G(i,0)=0
    \qquad
    \text{for all }0\le i\le \itemnumber,\ 0\le j\le \adnumber,
\end{equation*}
and for $1\le i\le \itemnumber$ and $1\le j\le \adnumber$,
\begin{equation}
    G(i,j)=
    \max\bigl\{
    G(i-1,j),\,
    G(i,j-1),\,
    G(i-1,j-1)+\Delta_{ij}
    \bigr\}.
    \label{equationDP}
\end{equation}
The three terms correspond to (i) not attaching an ad to item $i$,
(ii) leaving ad $j$ unused, and (iii) matching item $i$ to ad $j$.
In the third case, conditional positive assortativity requires every
other item-ad match in this subproblem to use items $<i$ and ads $<j$.

\paragraph{Algorithm.}\label{sec:ext-algorithm}
We define the algorithm as follows.
\begin{enumerate}
    \item Compute
    \begin{equation*}
    \begin{aligned}
        s_i=\virtual(\type)\quality(i),\qquad
        b_i=\max\{s_i,0\},
        &\qquad i\in\qset,\\
        W_{ij}=\virtual(\type)\{\quality(i)-\disutil(j)\}
        +R(\rev(j),\quality(i)),
        \qquad
        \Delta_{ij}=W_{ij}-b_i,
        &\qquad i\in\qset,\ j\in\adset.
    \end{aligned}
    \end{equation*}

    \item Fill $G$ by the recurrence
    \eqref{equationDP}, with $G(0,j)=G(i,0)=0$, in any order
    consistent with the dependencies.

\item {To recover an allocation, backtrack from $(\itemnumber,\adnumber)$.
The table $G$ gives the optimal value; backtracking selects item-ad pairs that attain that value.
At $(i,j)$, move to any predecessor that attains the maximum.}
Any tie-breaking rule among maximizing predecessors returns an optimal allocation.
A convenient parsimonious convention is to first move to $(i-1,j)$ if
    $G(i,j)=G(i-1,j)$, otherwise move to $(i,j-1)$ if
    $G(i,j)=G(i,j-1)$, and match item $i$ to ad $j$ only if
    neither skip move attains the maximum.
Stop when $i=0$ or $j=0$.%

    \item Assign every item not matched to an ad without any ad if
    $s_i>0$; leave it unassigned if $s_i<0$.
If $s_i=0$, either
    choice is optimal.
\end{enumerate}

\subsubsection{Correctness of the Dynamic Program}
\label{sec:ext-dp-correctness}
The following result shows that the algorithm indeed outputs an optimal allocation for each type.
\begin{proposition}\label{prop:ext-dp-correctness}
The algorithm returns an optimal allocation for
the pointwise assignment problem \eqref{equationAtheta}.
The optimal value is
\begin{equation*}
    \sum_{i\in\qset} b_i+G(\itemnumber,\adnumber).
\end{equation*}
\end{proposition}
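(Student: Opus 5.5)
The proof has two steps. First, use \autoref{prop:ext-incremental-assignment} to reduce the problem to one over item-ad matches only. Second, show that the dynamic program solves that reduced problem exactly.

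\textbf{Step 1: reduction.} By \autoref{prop:ext-incremental-assignment}, the value of \eqref{equationAtheta} equals $\sum_{i\in\qset} b_i$ plus the maximum of $\sum_{i,j} x_{ij}\Delta_{ij}$ over one-to-one partial matchings $x$. Given an optimal matching, the remaining items are optimally assigned without an ad exactly when $s_i>0$, with either choice optimal when $s_i=0$. This is Step~4 of the algorithm. So it suffices to show two things:
\begin{itemize}
\item $G(\itemnumber,\adnumber)$ equals the maximum incremental value $M^*:=\max_x\sum x_{ij}\Delta_{ij}$.
\item Backtracking recovers a matching that attains this value.
\end{itemize}

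\textbf{Step 2: restriction to assortative matchings.} I show that some maximizer of the incremental problem is conditionally positively assortative. Take any optimal incremental matching $x$. By \autoref{prop:ext-incremental-assignment}, together with the optimal no-ad completion, it is an optimal solution of \eqref{equationAtheta}.

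\autoref{prop:ext-matching-pattern} then says every such optimal solution is conditionally positively assortative with respect to allocated item-ad pairs. One caveat: the incremental problem might in principle include matched pairs that are not allocated, but every pair in the incremental problem counts as allocated. For safety, I also argue directly:
\begin{itemize}
\item If $i<i'$ are matched to $j>j'$, swap the two ads.
\item The terms $\virtual(\type)\quality$, $\virtual(\type)\disutil$ and $b_i$ sum to the same total before and after the swap.
\item The $R$-terms strictly increase by strict increasing differences.
\end{itemize}
Hence every optimum of the incremental problem is non-crossing: matched pairs, ordered by item index, have increasing ad indices.

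\textbf{Step 3: the recurrence.} By induction on $i+j$, I show that $G(i,j)$ equals the maximum of $\sum x_{ij}\Delta_{ij}$ over non-crossing matchings using items in $\{1,\dots,i\}$ and ads in $\{1,\dots,j\}$. The base cases with $i=0$ or $j=0$ are trivial. For the inductive step, consider any non-crossing matching on $[i]\times[j]$:
\begin{itemize}
\item If item $i$ is unmatched, the matching lies in $[i-1]\times[j]$.
\item If ad $j$ is unmatched, the matching lies in $[i]\times[j-1]$.
\item Otherwise, suppose item $i$ is matched to some $j'$ and ad $j$ to some $i'$. If $(i,j)$ were not itself a pair, then $i'<i$ and $j'<j$ would form a crossing. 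So $(i,j)$ is a pair, and the rest is non-crossing in $[i-1]\times[j-1]$.
\end{itemize}
This gives the $\le$ direction of \eqref{equationDP}. Each of the three options also yields a feasible non-crossing matching, which gives $\ge$.

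Combining with Step 2, $G(\itemnumber,\adnumber)=M^*$. The optimal value is therefore $\sum_i b_i+G(\itemnumber,\adnumber)$.

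\textbf{Correctness of backtracking.} Backtracking follows predecessors that attain the maximum. Each step therefore preserves the invariant that the current value of $G$ equals the value still to be collected, plus the sum of $\Delta$ over the pairs already selected. At termination that remaining value is $G=0$, so the selected matching attains $G(\itemnumber,\adnumber)$, and the tie-breaking rule is irrelevant. A matching step pairs $(i,j)$ and moves to $(i-1,j-1)$, so the selected pairs are automatically one-to-one.

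The main obstacle is Step 2. The DP only searches non-crossing matchings, so the argument relies on the swap being strictly improving, which is why strict increasing differences and distinct revenues are assumed. I must check that the swap keeps the $b_i$ terms unchanged, which holds because the set of matched items is unchanged.
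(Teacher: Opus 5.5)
Your proposal is correct and takes essentially the same route as the paper. Both arguments reduce the problem via \autoref{prop:ext-incremental-assignment}, restrict to conditionally positively assortative matchings by the pairwise swap (you usefully make explicit that the $b_i$ terms are unchanged), and prove by induction on $i+j$ that $G(i,j)$ equals the best non-crossing value on $\{1,\ldots,i\}\times\{1,\ldots,j\}$ before justifying backtracking. Your inductive case split (item $i$ unmatched, ad $j$ unmatched, or both matched so that $(i,j)$ must be a pair) is only a rearrangement of the paper's split ($(i,j)$ is a pair, item $i$ is unmatched, or item $i$ is matched to some $k<j$, which forces ad $j$ to be unmatched).
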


\begin{proof}
By \autoref{prop:ext-incremental-assignment}, applied with
$A(i,j)=R(\rev(j),\quality(i))$, solving
\eqref{equationAtheta} is equivalent to maximizing
\begin{equation*}
    \Delta(M)=\sum_{(k,\ell)\in M}\Delta_{k\ell}
\end{equation*}
over {feasible item-ad assignments
$M\subseteq\qset\times\adset$, where each item and each ad appears in at
most one pair}, with the baseline $\sum_i b_i$ added afterward.
By the exchange argument behind \autoref{prop:ext-matching-pattern}, it
is enough to maximize
$\Delta(M)$ over assignments that are conditionally positively
assortative in the sorted labeling.

For $0\le i\le \itemnumber$ and $0\le j\le \adnumber$, let
$\mathcal{C}(i,j)$ be the set of conditionally positively assortative
feasible item-ad assignments that use only items in $\{1,\ldots,i\}$
and ads in $\{1,\ldots,j\}$.
The empty assignment belongs to every
$\mathcal{C}(i,j)$.
Define
\begin{equation*}
    H(i,j)=\max_{M\in\mathcal{C}(i,j)}\Delta(M).
\end{equation*}
We prove by induction on $i+j$ that $G(i,j)=H(i,j)$ for all
$(i,j)$.

If $i=0$ or $j=0$, the only feasible assignment in
$\mathcal{C}(i,j)$ is the empty assignment.
Thus
\begin{equation*}
    H(0,j)=H(i,0)=0=G(0,j)=G(i,0),
\end{equation*}
which proves the base cases.

Now fix $i,j\ge 1$ and suppose the claim holds for all pairs
$(i',j')$ with $i'+j'<i+j$.
We first show that $H(i,j)$ satisfies
the same recurrence as $G(i,j)$.
The lower bound
\begin{equation*}
    H(i,j)\ge
    \max\{H(i-1,j),\,H(i,j-1),\,H(i-1,j-1)+\Delta_{ij}\}
\end{equation*}
is immediate: any assignment in $\mathcal{C}(i-1,j)$ or
$\mathcal{C}(i,j-1)$ is also in $\mathcal{C}(i,j)$, and any assignment
in $\mathcal{C}(i-1,j-1)$ can be augmented by the pair $(i,j)$, which
preserves conditional positive assortativity.

For the reverse inequality, take any $M\in\mathcal{C}(i,j)$.
If
$(i,j)\in M$, then $M\setminus\{(i,j)\}\in\mathcal{C}(i-1,j-1)$, so
\begin{equation*}
    \Delta(M)\le H(i-1,j-1)+\Delta_{ij}.
\end{equation*}
If $(i,j)\notin M$ and item $i$ is not matched to an ad, then
$M\in\mathcal{C}(i-1,j)$, so $\Delta(M)\le H(i-1,j)$.
Finally,
suppose $(i,j)\notin M$ and item $i$ is matched to some ad
$k<j$.
Conditional positive assortativity then implies that ad $j$
is not matched in $M$: if it were matched to some item $\ell<i$, the
lower item $\ell$ would be matched to the higher ad $j$, while the
higher item $i$ is matched to the lower ad $k$, contradicting
conditional positive assortativity.
Hence $M\in\mathcal{C}(i,j-1)$, and
$\Delta(M)\le H(i,j-1)$.
Because these cases exhaust
$\mathcal{C}(i,j)$,
\begin{equation*}
    H(i,j)\le
    \max\{H(i-1,j),\,H(i,j-1),\,H(i-1,j-1)+\Delta_{ij}\}.
\end{equation*}
Therefore $H$ obeys the same recurrence and boundary conditions as
$G$.
The induction hypothesis and \eqref{equationDP} imply
\begin{equation*}
    G(i,j)=H(i,j).
\end{equation*}
This completes the induction.

Taking $i=\itemnumber$ and $j=\adnumber$, $G(\itemnumber,\adnumber)$
is the maximum incremental value over conditionally positively
assortative feasible item-ad assignments.
By \autoref{prop:ext-matching-pattern}, this equals the maximum
incremental value over all feasible item-ad assignments.
Adding the baseline $\sum_i b_i$ gives the value of the pointwise
assignment problem \eqref{equationAtheta} by
\autoref{prop:ext-incremental-assignment}.

It remains only to justify the reconstruction step.
During backtracking,
each move is chosen to attain the maximum in
\eqref{equationDP}.
Skip moves add no item-ad pair; a diagonal
move from $(i,j)$ to $(i-1,j-1)$ adds the pair $(i,j)$.
Because
diagonal moves always move to smaller item and ad indices, the recovered
matching is feasible and conditionally positively assortative.
The sum
of the incremental weights on the recovered diagonal moves is exactly
$G(\itemnumber,\adnumber)$, because the chosen predecessor at each step
accounts for the full value of the current subproblem.
Assigning every
remaining item without an ad if $s_i>0$ and leaving it unassigned if
$s_i<0$ implements the baseline terms $b_i$; if $s_i=0$, either
choice has the same value.
Thus the backtracking rule returns an optimal
allocation.
\end{proof}

\begin{remark}[Pair-dependent advertising nuisance]\label{remarkPairDependentNuisance}
We can apply a similar analysis to the case in which the nuisance from showing ad $j$ next to item $i$ is $\type D(\disutil(j),\quality(i))$, while advertising revenue remains $\rev(j)$ as in the baseline model.
Suppose that $D(0,\quality(i))=0$ for the no-ad option and that the net quality of every feasible item-ad pair is nonnegative, i.e.,
\[
\quality(i)-D(\disutil(j),\quality(i))\geq0 .
\]
For a fixed type, write $v=\virtual(\type)$.
The pointwise relaxed problem is a maximum-weight assignment problem with item-only payoff $v\quality(i)$ and item-ad payoff
\[
v\{\quality(i)-D(\disutil(j),\quality(i))\}+\rev(j).
\]
Equivalently, after setting $b_i=\max\{v\quality(i),0\}$, the incremental value of matching item $i$ with ad $j$ is
\[
\Delta_{ij}=v\{\quality(i)-D(\disutil(j),\quality(i))\}+\rev(j)-b_i .
\]
Thus, we can use the standard maximum-weight assignment algorithms to compute the solution to the relaxed problem.

If $D$ has increasing differences in $(\disutil,\quality)$, the optimal allocation exhibits a conditional sorting property: 
If a consumer has a positive virtual type, the allocation is conditionally negatively assortative in $(\quality,\disutil)$.
If a consumer has a negative virtual type, the allocation is conditionally positively assortative in $(\quality,\disutil)$.\footnote{Formally, fix a type. In this extension, an allocation is conditionally positively assortative in $(\quality,\disutil)$ if, whenever allocated items $i$ and $i'$ are matched with ads $j$ and $j'$, respectively, $\quality(i)<\quality(i')$ implies $\disutil(j)\leq\disutil(j')$. It is conditionally negatively assortative in $(\quality,\disutil)$ if the same premise implies $\disutil(j)\geq\disutil(j')$. These definitions only restrict matching conditional on the set of ad-matched items and the set of ads used. For positive virtual types, fixing these sets, the platform wants to minimize total nuisance, which is attained by conditionally negative assortative matching when $D$ has increasing differences.}
\end{remark}

\section{Discrete Type Space}\label{sectionAppendixDiscreteType}
\subsection{Finite-Type Environment}

The type space is the only primitive changed from the baseline model.
A unit mass of consumers has types in
\[
\types=\{\type_1,\ldots,\type_N\}\subset \Type,
\qquad
0\leq \type_1<\cdots<\type_N\leq 1.
\]
Let $p_n=\Pr(\type=\type_n)>0$ denote the probability of type $\type_n$, with $\sum_{n=1}^N p_n=1$, and let
\[
\typemeas_n=\Pr(\type\leq \type_n)=\sum_{m=1}^n p_m
\]
denote the cumulative distribution.
For $n<N$, write $\Delta_n=\type_{n+1}-\type_n$ for the gap between adjacent types.
The platform knows the type distribution.

A contract is a tuple $(\transfer,\bundle,\matching)$, where $\transfer\in\R$ is the monetary transfer from a consumer to the platform, $\bundle\subseteq \qset$ is the set of allocated items, and $\matching$ is an advertising policy.
As in the baseline model, an advertising policy is a map $\matching:\qset\to \adset\cup\{\noad\}$ such that no two distinct items are assigned the same element of $\adset$.
Let $\matchingset$ denote the set of all advertising policies.
A consumer of type $\type_n$ who chooses contract $(\transfer,\bundle,\matching)$ obtains
\[
\type_n \sum_{i\in \bundle}\left[\quality(i)-\disutil(\matching(i))\right]-\transfer .
\]

\subsection{The Platform's Problem}

The platform chooses a direct mechanism, represented by $\{(\transfer_n,\bundle_n,\matching_n)\}_{n=1}^N$, which specifies for each type $\type_n$ a contract $(\transfer_n,\bundle_n,\matching_n)$.
The platform maximizes total revenue subject to incentive compatibility and individual rationality constraints.
The resulting problem is the finite-type counterpart of \hyperref[M]{Problem~\eqref{M}}:
\begin{align}
&\max_{\{(\transfer_n,\bundle_n,\matching_n)\}_{n=1}^N}
\sum_{n=1}^N p_n
\left[
\transfer_n+\sum_{i\in \bundle_n}\rev(\matching_n(i))
\right] \tag{M-D}\label{equationMD}\\[10pt]
\text{subject to}\quad
&\type_n \sum_{i\in \bundle_n}\left[\quality(i)-\disutil(\matching_n(i))\right]-\transfer_n
\geq
\type_n \sum_{i\in \bundle_m}\left[\quality(i)-\disutil(\matching_m(i))\right]-\transfer_m,
\quad \forall n,m, \tag{IC-D}\label{equationICD}\\[10pt]
&\type_n \sum_{i\in \bundle_n}\left[\quality(i)-\disutil(\matching_n(i))\right]-\transfer_n
\geq 0,\quad \forall n . \tag{IR-D}\label{equationIRD}
\end{align}

\subsection{Virtual Values and the Relaxed Problem}

For each type $\type_n$, define the net allocation and the advertising revenue by
\[
X_n
=
\sum_{i\in \bundle_n}\left[\quality(i)-\disutil(\matching_n(i))\right],
\qquad
\grossr_n
=
\sum_{i\in \bundle_n}\rev(\matching_n(i)).
\]
The finite-type virtual value is
\begin{equation}\label{equationDiscreteVirtual}
\virtual_n
=
\begin{cases}
\displaystyle
\type_n-\frac{1-\typemeas_n}{p_n}\Delta_n,
& n=1,\ldots,N-1,\\[10pt]
\type_N,
& n=N.
\end{cases}
\end{equation}
This is the discrete analogue of the virtual value $\type-\frac{1-\typemeas(\type)}{\density(\type)}$ in the baseline model.
The term $\frac{1-\typemeas_n}{p_n}\Delta_n$ captures the information-rent cost of serving type $\type_n$: Raising the net allocation $X_n$ by one unit raises the rent of every higher type by $\Delta_n$, and the mass of higher types relative to that of type $\type_n$ is $(1-\typemeas_n)/p_n$.

\begin{assumption}[Discrete regularity]\label{assumptionDiscreteRegular}
The discrete virtual values are strictly increasing:
\[
\virtual_1<\virtual_2<\cdots<\virtual_N.
\]
\end{assumption}
This is the finite-type counterpart of the assumption that $\virtual(\type)$ is strictly increasing.

The following lemma records the standard characterization of implementable allocations and revenue-maximizing transfers with finitely many types (see, e.g., \citealt*{maskin1984}).
Note that the nonnegative net quality of every item-ad pair implies $X_n\geq 0$ for every $n$.

\begin{lemma}[Revenue-maximizing transfers]\label{lemmaDiscreteTransfers}
An allocation with net allocations $(X_n)_{n=1}^N$ is incentive compatible under some transfers if and only if $X_1\leq \cdots\leq X_N$.
Moreover, for any allocation satisfying this monotonicity condition, among all transfers that implement the allocation and satisfy \eqref{equationIRD}, expected transfer revenue is maximized by
\begin{equation}\label{equationDiscreteTransfer}
\transfer_n
=
\type_n X_n-\sum_{m=1}^{n-1}\Delta_m X_m,
\qquad n=1,\ldots,N,
\end{equation}
where the empty sum is zero.
\end{lemma}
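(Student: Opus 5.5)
We prove both parts of \autoref{lemmaDiscreteTransfers} with the standard finite-type argument for one-dimensional screening, using the net allocation $X_n$ as the allocation variable. Consumer utility depends on the contract only through $(X_n,\transfer_n)$. So the problem is a classic single-crossing screening problem with payoff $\type_n X - \transfer$, and the item-ad structure plays no role beyond the fact that $X_n\geq 0$, which follows from condition \eqref{equationNON}. Write $U_n=\type_n X_n-\transfer_n$ for the equilibrium utility of type $\type_n$.

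\emph{Necessity of monotonicity.} Take any $n<m$. Adding the IC constraint of type $n$ against report $m$ to the IC constraint of type $m$ against report $n$, the transfers cancel and we obtain $(\type_m-\type_n)(X_m-X_n)\geq 0$. Hence $X_m\geq X_n$.

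\emph{Sufficiency.} Suppose $X_1\leq\cdots\leq X_N$ and define transfers by \eqref{equationDiscreteTransfer}. These transfers give $U_1=\type_1X_1-\transfer_1=0$ and $U_{n+1}-U_n=\Delta_nX_n$. Consequently, the downward-adjacent IC constraint of type $n+1$ holds with equality. The upward-adjacent IC constraint of type $n$ reads $U_n\geq U_{n+1}-\Delta_nX_{n+1}$, which reduces to $\Delta_n(X_{n+1}-X_n)\geq 0$ and holds by monotonicity. Non-adjacent constraints then follow by the usual chaining argument. For $m<n$, the deviation gain of type $n$ from reporting $m$ equals
\[
\sum_{k=m}^{n-1}\Delta_k\,(X_m-X_k),
\]
which is $\leq 0$ by monotonicity; the case $m>n$ is symmetric. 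IR holds because $U_1=0$ and $U_n$ is increasing in $n$, since each increment $\Delta_nX_n$ is nonnegative.

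\emph{Revenue optimality of \eqref{equationDiscreteTransfer}.} Take any transfers that implement the allocation and satisfy \eqref{equationIRD}. IR of type $1$ gives $U_1\geq 0$. The downward-adjacent IC constraint of type $n+1$ gives $U_{n+1}\geq U_n+\Delta_nX_n$. By induction,
\[
U_n\geq \sum_{m=1}^{n-1}\Delta_mX_m \quad\text{for every } n,
\]
so $\transfer_n\leq \type_nX_n-\sum_{m<n}\Delta_mX_m$ pointwise. The proposed transfers attain this bound for every $n$ simultaneously and are feasible by the sufficiency step, so they maximize expected transfer revenue.

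The only step needing care is the chaining of non-adjacent IC constraints. It is routine once the telescoping identity $U_n-U_m=\sum_{k=m}^{n-1}\Delta_kX_k$ is written out and compared with $(\type_n-\type_m)X_m=\sum_{k=m}^{n-1}\Delta_kX_m$.
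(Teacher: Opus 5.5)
Your proof is correct and follows essentially the same route as the paper: necessity by adding the two IC constraints for a pair of types, sufficiency by checking every global IC constraint directly through the telescoping identity $U_n-U_m=\sum_{k=m}^{n-1}\Delta_k X_k$ (the paper writes this as the sandwich $(\type_n-\type_m)X_m\le U_n-U_m\le(\type_n-\type_m)X_n$), and revenue optimality from the rent lower bounds implied by IR of the lowest type and the downward-adjacent IC constraints. Your explicit deviation-gain formula already establishes the non-adjacent constraints, so you do not need the separate appeal to a chaining argument.
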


\begin{proof}
\emph{Necessity.}
Take any types $\type_m$ and $\type_n$ with $m<n$, and take any transfers under which the allocation is incentive compatible.
The IC constraints for types $\type_n$ and $\type_m$ imply
\[
\type_n X_n-\transfer_n
\geq
\type_n X_m-\transfer_m
\quad\text{and}\quad
\type_m X_m-\transfer_m
\geq
\type_m X_n-\transfer_n.
\]
Adding these inequalities gives
\[
(\type_n-\type_m)(X_n-X_m)\geq 0.
\]
Because $\type_n>\type_m$, we have $X_n\geq X_m$.

\emph{Sufficiency.}
Conversely, suppose that $X_1\leq \cdots\leq X_N$.
Define rents by
\[
\widehat U_1=0,
\qquad
\widehat U_n=\sum_{m=1}^{n-1}\Delta_m X_m
\quad\text{for } n=2,\ldots,N,
\]
and define transfers by $\transfer_n=\type_n X_n-\widehat U_n$.
Take two types $m<n$.
Type $\type_n$ does not choose type $\type_m$'s contract if and only if
$\widehat U_n\geq \widehat U_m+(\type_n-\type_m)X_m$, and
type $\type_m$ does not choose type $\type_n$'s contract if and only if
$\widehat U_m\geq \widehat U_n-(\type_n-\type_m)X_n$.
Thus the two global IC constraints for this pair are equivalent to
\begin{equation}\label{equationDiscreteICBounds}
    (\type_n-\type_m)X_m
\leq
\widehat U_n-\widehat U_m
\leq
(\type_n-\type_m)X_n .
\end{equation}
The proposed rents give
\[
\widehat U_n-\widehat U_m
=
\sum_{\ell=m}^{n-1}\Delta_\ell X_\ell.
\]
For each $\ell=m,\ldots,n-1$, monotonicity gives $X_m\leq X_\ell\leq X_n$.
Hence
\[
\sum_{\ell=m}^{n-1}\Delta_\ell X_m
\leq
\sum_{\ell=m}^{n-1}\Delta_\ell X_\ell
\leq
\sum_{\ell=m}^{n-1}\Delta_\ell X_n.
\]
Because $\sum_{\ell=m}^{n-1}\Delta_\ell=\type_n-\type_m$, these inequalities are exactly the pair of IC bounds in \eqref{equationDiscreteICBounds}.
Thus all global IC constraints hold, so the allocation is incentive compatible under the constructed transfers.

\emph{Revenue-maximizing transfers.}
Let $U_n=\type_n X_n-\transfer_n$ denote the rent of type $\type_n$.
With the allocation fixed, advertising revenue does not depend on transfers, and expected transfers are
\[
\sum_{n=1}^N p_n \transfer_n
=
\sum_{n=1}^N p_n \type_n X_n
-
\sum_{n=1}^N p_n U_n.
\]
Thus maximizing expected transfer revenue is equivalent to minimizing expected rents.
We derive lower bounds that every rent vector satisfying IC and IR must obey, and then verify that the constructed rents attain them.
The IR constraint for the lowest type gives $U_1\geq 0$.
The adjacent downward IC constraint---type $\type_{n+1}$ does not choose type $\type_n$'s contract---gives, for each $n=1,\ldots,N-1$,
\[
U_{n+1}
\geq
U_n+\Delta_n X_n.
\]
Iterating from $U_1\geq 0$ yields
\[
U_1\geq 0,
\qquad
U_n\geq \sum_{m=1}^{n-1}\Delta_m X_m
\quad\text{for } n=2,\ldots,N.
\]
The rents $(\widehat U_n)_{n=1}^N$ defined above attain these lower bounds and satisfy all IC constraints by the sufficiency step.
They also satisfy IR: $\widehat U_1=0$, and $\widehat U_n\geq 0$ for $n\geq 2$ because each $\Delta_m$ and each net allocation $X_m$ is nonnegative.
Because every rent vector satisfying IC and IR lies above the lower bounds and all probabilities are strictly positive, the constructed rents minimize expected rents.
Substituting them into $\transfer_n=\type_n X_n-\widehat U_n$ gives \eqref{equationDiscreteTransfer}.
\end{proof}

Using the transfers in \eqref{equationDiscreteTransfer}, we can write expected transfers in terms of virtual values:
\begin{align}
\sum_{n=1}^N p_n\transfer_n
&=
\sum_{n=1}^N p_n\type_n X_n
-
\sum_{n=1}^N p_n\sum_{m=1}^{n-1}\Delta_m X_m \notag\\
&=
\sum_{n=1}^N p_n\type_n X_n
-
\sum_{m=1}^{N-1}\Delta_m X_m
\sum_{n=m+1}^N p_n \notag\\
&=
\sum_{n=1}^N p_n\type_n X_n
-
\sum_{m=1}^{N-1}(1-\typemeas_m)\Delta_m X_m \notag\\
&=
\sum_{n=1}^{N-1}p_n
\left(
\type_n-\frac{1-\typemeas_n}{p_n}\Delta_n
\right)X_n
+
p_N\type_N X_N \notag\\
&=
\sum_{n=1}^N p_n \virtual_n X_n .
\label{equationDiscreteExpTransfers}
\end{align}
The second equality exchanges the order of summation, and the last equality uses the definition of $\virtual_n$.
Consequently, under the revenue-maximizing transfers, the platform's total revenue equals the expected virtual surplus, $\sum_{n=1}^N p_n \left(\virtual_n X_n+\grossr_n\right)$.
By \autoref{lemmaDiscreteTransfers}, the platform's problem \eqref{equationMD} is equivalent to maximizing the expected virtual surplus subject to the monotonicity of the net allocation.
The relaxed problem drops the monotonicity constraint:
\begin{align}
\max_{\{(\bundle_n,\matching_n)\}_{n=1}^N}
\sum_{n=1}^N p_n
\left\{
\virtual_n
\sum_{i\in \bundle_n}\left[\quality(i)-\disutil(\matching_n(i))\right]
+\sum_{i\in \bundle_n}\rev(\matching_n(i))
\right\}. \tag{P-D}\label{equationPD}
\end{align}
Finally, for type $\type_n$ and ad $j\in \adset$, define the $\type_n$-virtual ad profit as
\begin{equation}\label{equationDiscreteVirtualAd}
\rho_n(j)=\rev(j)-\virtual_n\disutil(j),
\qquad
\rho_n(\noad)=0.
\end{equation}
Relative to the baseline model, the only change is that the discrete virtual value \eqref{equationDiscreteVirtual} replaces the continuous virtual value.

\paragraph{Optimal Mechanism.}
The relaxed problem \eqref{equationPD} has the same form as the relaxed problem \eqref{P} in the baseline model: The discrete virtual value $\virtual_n$ replaces the continuous virtual value $\virtual(\type)$, and the expectation over the finite type distribution replaces the integral over types.
In particular, the objective is separable across types, and the problem for each type $\type_n$ coincides with the pointwise problem \eqref{equationVirtual3} with $\virtual(\type)$ replaced by $\virtual_n$.

As a result, the characterization of the optimal mechanism carries over.
For each type $\type_n$, an optimal mechanism adopts a $\rho_n$-negative assortative policy (\autoref{definitionAssortative}) and allocates all items that generate nonnegative virtual surplus, as in \autoref{lemma0}.
Under \autoref{assumptionDiscreteRegular}, which plays the role of the strictly increasing virtual value in the baseline model, the exchange argument in the proof of \autoref{lemma0} again verifies that the resulting net allocation is nondecreasing in $n$.
Thus, the solution to the relaxed problem, together with the transfers in \eqref{equationDiscreteTransfer}, solves the platform's problem \eqref{equationMD}.
The properties of the optimal mechanism in \autoref{theorem} also hold, with consumers partitioned by the sign of the discrete virtual value $\virtual_n$ rather than by the cutoff type $\zerotype$: Types with $\virtual_n<0$ receive lower-contour bundles in which every allocated item is matched with an ad, and types with $\virtual_n>0$ receive all items and are exposed to fewer ads as their types increase.

The same logic applies to the results that build on the virtual-surplus representation.
For example, \autoref{corollaryThreshold} compares virtual ad profits type by type, and the analysis of the platform's quality choice in \autoref{sectionInnovation} rests on the submodularity of the pointwise virtual surplus and monotone comparative statics.
Both arguments are pointwise in the consumer's type, with expectations over the finite type distribution replacing integrals over types.
Thus, these results extend to the discrete-type model as well.

\section{Example with Negative Net Quality}\label{sectionAppendixPAM}
This appendix illustrates the role of condition \eqref{equationNON} in the optimal advertising policy (see \autoref{sectionDiscussion}).
We assume that consumption is not contractible: 
A consumer would dispose of any pair with negative net quality, so the platform can, without loss, allocate only \ipairs with $\quality(i) - \disutil(j) \ge 0$; call such pairs feasible.
Every allocated pair then yields nonnegative net quality, so the derivation of the optimal mechanism in \autoref{sectionAppendixA1} applies: The platform solves the pointwise problem \eqref{equationVirtual3} for each type $\type$, with the advertising policy restricted to feasible pairs.
We show with an example that the solution can match items and ads positively---rather than negatively---assortatively.

Let $\itemnumber = \adnumber = 3$.
Item qualities are $\quality(i) = i$, and ad disutility levels are $\disutil(1) = 0.5$, $\disutil(2) = 1.6$, and $\disutil(3) = 2.7$.
Every ad generates the same revenue $\rev(j) = \rev > 0$.
The condition \eqref{equationNON} fails because, e.g., $\quality(1) - \disutil(2) < 0$.
The feasible pairs are as follows: Item 1 can be matched only with ad 1, item 2 only with ad 1 or ad 2, and item 3 with any ad.
Thus, the set of ads that an item can carry expands with its quality.

Take any type $\type$ with $\virtual(\type) < 0$.
The virtual ad profit $\rho_\type(j) = \rev - \virtual(\type)\disutil(j)$ is positive and increasing in $\disutil(j)$.
Thus, the $\rho_\type$-negative assortative policy of \autoref{lemma0} matches item 1 with ad 3, item 2 with ad 2, and item 3 with ad 1.
This policy is infeasible: $\quality(1) - \disutil(3) < 0$, so a consumer would dispose of the pair $(1, 3)$.

We claim that if $-\virtual(\type) < 2\rev$---which holds at least for types slightly below $\zerotype$---the optimal advertising policy for type $\type$ instead matches item $i$ with ad $i$ for every $i$, and every matched pair is allocated.
This policy is positively assortative: Higher-quality items carry ads with higher virtual ad profits.

To verify the claim, recall that the platform allocates no item without an ad to type $\type$, because the contribution $\virtual(\type)\quality(i)$ of such an item is negative.
Type $\type$'s virtual surplus from allocating a set of feasible pairs equals
\begin{align*}
\rev \times (\text{number of allocated pairs}) + \virtual(\type) \times (\text{total net quality of allocated pairs}).
\end{align*}
The positively assortative policy is the only feasible policy that displays all three ads, because ad 3 can be carried only by item 3, and then ad 2 only by item 2.
Its total net quality is $0.5 + 0.4 + 0.3 = 1.2$, so it attains the virtual surplus of $3\rev + 1.2\virtual(\type)$; allocating all three pairs is optimal because each pair contributes $\virtual(\type)[\quality(i) - \disutil(i)] + \rev \ge 0.5\virtual(\type) + \rev > 0$.
Any policy that allocates two pairs attains at most $2\rev + 0.7\virtual(\type)$, because the total net quality of any two feasible pairs is at least $[\quality(2) - \disutil(2)] + [\quality(3) - \disutil(3)] = 0.7$.
Similarly, any policy that allocates one pair attains at most $\rev + 0.3\virtual(\type)$, and allocating no pairs attains zero.
The condition $-\virtual(\type) < 2\rev$ implies that $3\rev + 1.2\virtual(\type)$ exceeds all of these bounds, which proves the claim.

The example conveys the following lesson.
For a negative virtual type, the platform would like to attach the most annoying ads to the allocated items, but free disposal implies that such ads can be carried only by high-quality items.
Without condition \eqref{equationNON}, this feasibility constraint can reverse the direction of the optimal matching from negatively to positively assortative.

\begingroup
\catcode`\&=12
\putbib[freemium]
\endgroup
\end{bibunit}

\end{document}